\tikzset{
	vert/.style={circle,inner sep=1.5,fill=white,draw,minimum size=.3cm},
	vert2/.style={circle,inner sep=1.5,fill=gray,draw,minimum size=.4cm},
    tvert/.style={rectangle,fill=black,minimum height=.4cm,inner sep=.5},
	edge/.style={color=black, thick},
	diredge/.style={->,>={Stealth[width=8pt,length=8pt]},color=black, thick},
	timelabel/.style={fill=white,font=\footnotesize, text centered}
}
\newcommand{\tikzpreamble}{
	\tikzstyle{xnode}=[circle,scale=\nsc,draw];
	\tikzstyle{tnode}=[diamond,scale=1.2*\nsc,fill];
	
	\tikzstyle{vnode}=[rectangle,draw,inner sep=0,minimum size=15.2*\nsc,fill];
	\tikzstyle{xnodeSmall}=[circle,scale=\nsc*0.65,draw];
	\tikzstyle{satE} = [ultra thick];
	\tikzstyle{unsatE} = [thick,dotted];
	\definecolor{lblue}{RGB}{151, 197, 232}
	\definecolor{lred}{RGB}{245, 193, 193}
}
\newtheorem{theorem}{Theorem}
\newtheorem{lemma}[theorem]{Lemma}
\newtheorem{corollary}[theorem]{Corollary}
\theoremstyle{definition}
\newtheorem{construction}{Construction}
\crefname{claim}{Claim}{Claims}
\crefname{figure}{Figure}{Figures}
\newcommand{\yes}{yes}
\newcommand{\no}{no}
\newcommand{\RD}{$(\Rightarrow)\:$}
\newcommand{\LD}{$(\Leftarrow)\:$}
\newcommand{\eq}{=}
\newcommand{\N}{\mathbb{N}}
\newcommand{\ts}{\tilde{s}}
\newcommand{\ttt}{\tilde{z}}
\newcommand{\tW}{\tilde{W}}
\newcommand{\tw}{\tilde{w}}
\newcommand{\tx}{\tilde{x}}
\newcommand{\ty}{\tilde{y}}
\newcommand{\tP}{\tilde{P}}
\newcommand{\cqed}{\hfill$\diamond$}
\DeclarePairedDelimiter{\abs}{\lvert}{\rvert}
\newcommand{\cocl}[1]{\ensuremath{\operatorname{#1}}}
\newcommand{\W}[1]{\cocl{W[#1]}}
\newcommand{\NP}{\cocl{NP}}
\newcommand{\fpt}{fixed-parameter tractable\xspace}
\newcommand{\lifetime}{\ensuremath{T}}
\newcommand{\TG}{\mathcal{G}}
\newcommand{\calG}{\mathcal{G}}
\newcommand{\TGcompact}{\ensuremath{\TG=(V, (E_t)_{t\in[\lifetime]})}}
\newcommand{\calW}{\mathcal{S}}
\newcommand{\TempDisjointWalks}{\textsc{Temporally Disjoint Walks}\xspace}
\newcommand{\tdw}{\textsc{Temporally Disjoint Walks}\xspace}
\newcommand{\TempDisjointPaths}{\textsc{Temporally Disjoint Paths}\xspace}
\newcommand{\tdp}{\textsc{Temporally Disjoint Paths}\xspace}
\newcommand{\AllProbs}{\textsc{Temporally Disjoint (Paths/Walks)}}  
\newcommand{\wilog}{without loss of generality}
\newcommand{\problemdef}[3]{
	\begin{center}\fbox{
	\begin{minipage}{0.95\textwidth}
		\noindent
		#1
		\vspace{5pt}\\
		\setlength{\tabcolsep}{3pt}
		\begin{tabularx}{\textwidth}{@{}lX@{}}
			\textrm{Input:}     & #2 \\
			\textrm{Question:}  & #3
		\end{tabularx}
	\end{minipage}}
	\end{center}
}
\title{In Which Graph Structures Can We Efficiently Find Temporally Disjoint Paths and Walks?}
\author{Pascal~Kunz\thanks{Supported by the DFG, Research Training Group 2434 ``Facets of Complexity''.}}
\author{Hendrik~Molter\thanks{Supported by the ISF, grant No.~1456/18, and European Research Council, grant number 949707.}}
\author{Meirav~Zehavi}
\affil{\small Humboldt-Universit\"at zu Berlin, Algorithm Engineering, Berlin, Germany\\ Technische Universit\"at Berlin, Algorithmics and Computational Complexity, Berlin, Germany\\ \texttt{p.kunz.1@tu-berlin.de}}
\affil{\small Department of Computer Science, Ben-Gurion~University~of~the~Negev, 
Beer-Sheva, 
Israel\\ \texttt{molterh@post.bgu.ac.il, meiravze@bgu.ac.il}}
\date{}
\begin{document}

\maketitle

\begin{abstract}
A temporal graph has an edge set that may change over discrete time steps, and a temporal path (or walk) must traverse edges that appear at increasing time steps. 
Accordingly, two temporal paths (or walks) are temporally disjoint if they do not visit any vertex at the same time.
The study of the computational complexity of finding \emph{temporally disjoint} paths or walks in temporal graphs has recently been initiated by Klobas et al.~[IJCAI~'21]. 
This problem is motivated by applications in multi-agent path finding (MAPF), which include robotics, warehouse management, aircraft management, and traffic routing.

We extend Klobas et al.'s research by providing parameterized hardness results for very restricted cases, with a focus on structural parameters of the so-called underlying graph. 
On the positive side, we identify sufficiently simple cases where we can solve the problem efficiently. 
Our results reveal some surprising differences between the ``path version'' and the ``walk version'' (where vertices may be visited multiple times) of the problem, and answer several open questions posed by Klobas et al.
\end{abstract}

\section{Introduction}
Deciding whether a set of vertex pairs (called source-sink pairs) in a graph can be connected by pairwise vertex disjoint paths is a problem that is of fundamental interest in algorithmic graph theory. It was among the first problems that were shown to be NP-complete~\cite{karp1975computational} and  
the further study of the problem is closely tied to one of the most ground-breaking achievements in discrete mathematics in recent history, graph minor theory~\cite{robertson1985graph,robertson1995graph}.
The disjoint path problem is known to be solvable in quadratic time if the number of vertex pairs that need to be connected is constant, that is, the problem is fixed-parameter tractable for the number of sought paths~\cite{KAWARABAYASHI2012424}. On directed graphs, finding two disjoint paths is already NP-hard~\cite{fortune1980directed}, but on directed acyclic graphs the problem is solvable in polynomial time if the number of paths is a constant~\cite{slivkins2010parameterized}.

\citet{KlobasMMNZ23} recently introduced and studied two natural \emph{temporal} versions of the disjoint path problem, called \TempDisjointPaths\ and \TempDisjointWalks. Informally speaking, a temporal graph has an edge set that may change over discrete time steps. Accordingly, temporal paths must traverse edges that appear at increasing time steps and may visit each vertex at most once, whereas a temporal walk may visit each vertex multiple times.
Further, two temporal paths (or walks) are \emph{temporally disjoint} if they do not occupy any vertex at the same time. 
So, analogously to the non-temporal setting, the goal is to find temporal paths (or walks) connecting vertex pairs of a given multiset such that those paths (or walks) are pairwise temporally disjoint.
We give a formal definition in \cref{sec:prelims}.

Due to the asymmetric and non-transitive nature of connectivity in temporal graphs, path-finding related problems behave quite differently in the temporal setting than in the static setting. 
In fact, there are many natural temporal path-finding problems that do not have a direct analogue in the static setting~\cite{CHMZ21,FMNR22a}.
For \TempDisjointPaths\ and \TempDisjointWalks\ the situation is similar. Among other results, \citet{KlobasMMNZ23}, for example, showed that both problems are NP-hard if the underlying graph\footnote{The \emph{underlying graph} of a temporal graph is the static graph containing all edges that appear at least once in the temporal graph.} is a path, a setting where the static disjoint path problem is trivial. Furthermore, they revealed somewhat surprising differences in the computational complexity of \TempDisjointPaths\ and \TempDisjointWalks.
We build on the work of \citet{KlobasMMNZ23} and continue the systematic study of the (parameterized) computational complexity of \TempDisjointPaths\ and \TempDisjointWalks. We provide several new hardness and algorithmic results that expose further interesting differences of the two problem variants and that resolve some of the open questions by \citet{KlobasMMNZ23}.

One of the main application areas for the temporal disjoint path problems is multi-agent path finding (MAPF), an area that has attracted a lot of research from the AI and robotics community in recent years~\cite{Stern19,SternSFK0WLA0KB19,salzman2020research}. 
The goal here is to find paths for multiple agents with the property that all agents can follow these paths concurrently without colliding.
The main difference between classical disjoint path problems and the basic setting of multi-agent path finding problems is that in the latter, we assume the agents move along the paths one step at a time and only collide when they move to the same vertex at the same time. This means that the paths in a solution to a MAPF problem are not necessarily vertex disjoint, but if two paths have a common vertex, that vertex cannot be at the same ordinal position in both paths. A key difference between the classical MAPF settings and the problems we study in this work is the assumption that the network or graph structure may change over time while the agents move through it. 
Real-world applications of MAPF include autonomous vehicles, robotics, automated warehouses, and airport towing~\cite{Stern19,SternSFK0WLA0KB19,salzman2020research}, where 
predictable changes over time in the network topology are well-motivated in many real-world scenarios~\cite{LVM18,Mic16,HS19}.

\subparagraph{Related Work.} The (non-temporal) disjoint path problem is one of the most central problems in algorithmic graph theory, and hence has been extensively studied in the literature for the past few decades. For an overview, we refer to~\citet{korte1990paths}. 

In recent years there has also been intensive research on (non-temporal) multi-agent path finding problems (MAPF) in multiple variations, mostly in the AI and robotics community~\cite{Stern19,SternSFK0WLA0KB19,salzman2020research}. 
MAPF can hence be seen as a generalization of so-called \emph{pebble motion problems} on graphs and it is known to be NP-hard~\cite{goldreich2011finding,yu2013structure}. To the best of our knowledge, the current state of the art (optimal) algorithms for MAPF problems employ the so-called conflict-based search approach~\cite{sharon2015conflict}.

In the area of MAPF, various different settings and problem variations have been considered, including cooperative settings~\cite{Standley10}, robustness requirements~\cite{AtzmonsSFWBZ20} or presence of delays~\cite{ma2017multi}, online settings~\cite{vsvancara2019online}, continuous time settings~\cite{andreychuk2022multi}, explainability requirements~\cite{AlmagorL20},
settings with large agents (that may occupy multiple vertices)~\cite{li2019multi}, and many more. For a more extensive overview, we refer to~\citet{Stern19} and~\citet{SternSFK0WLA0KB19}.


\citet{KlobasMMNZ23} started the study of \TempDisjointPaths\ and \TempDisjointWalks, which can be interpreted as MAPF settings where the availability of edges in the graph may change while the agents are moving. Using the MAPF terminology of \citet{SternSFK0WLA0KB19}, the problem setting 
considers both so-called ``vertex-conflicts'' and ``edge-conflicts'' between agents, that is, two agents cannot occupy the same vertex at the same time and cannot traverse the same edge at the same time. Furthermore, it uses the ``disappear at target'' assumption, that is, once an agent reaches the target vertex, another agent can occupy this vertex again.
We point out that \citet{KlobasMMNZ23} consider so-called \emph{non-strict} temporal paths, that traverse edges that appear at non-decreasing time steps. In this work, we consider so-called \emph{strict} temporal paths that, as described earlier, must traverse edges that appear at (strictly) increasing time steps. The latter models the common assumption in MAPF, that agents can move along at most one edge at each time step~\cite{Stern19}. A closer inspection of the proofs by \citet{KlobasMMNZ23} reveals that the results we mention in the following also hold for the strict case.
\TempDisjointPaths\ is NP-hard even for two source-sink pairs, whereas \TempDisjointWalks\ is W[1]-hard for the number of source-sink pairs but can be solved in polynomial time for a constant number of source-sink pairs~\cite{KlobasMMNZ23}. Furthermore, both problem variants are NP-hard even if the underlying graph is a path~\cite{KlobasMMNZ23}. Nevertheless, \TempDisjointPaths\ is fixed-parameter tractable with respect to the number of source-sink pairs if the underlying graph is a forest~\cite{KlobasMMNZ23}.

Finally, we remark that a different version of disjoint paths in temporal graphs has been studied by \citet{KKK02}. They consider two temporal paths to be disjoint if they to not visit a common vertex, even if that vertex is not occupied by the two temporal paths at the same time. They show that finding two such paths in NP-hard.

\subparagraph{Our Contribution.} The goal of our work is to further understand which structures of the underlying graph can be exploited to solve \TempDisjointPaths\ and \TempDisjointWalks\ efficiently (in terms of parameterized computational complexity). Our first main computational hardness result shows that presumably, we cannot solve \TempDisjointPaths\ and \TempDisjointWalks\ efficiently even in the very restricted case where the number of vertices in the graph is small and the underlying graph is a star (a center vertex that is connected to leaves).
\begin{itemize}
    \item \TempDisjointPaths\ and \TempDisjointWalks\ are NP-hard and W[1]-hard for the number of vertices even if the underlying graph is a star.
\end{itemize}

Recall that we have a multiset of source-sink pairs, that is, the number of source-sink pairs in the input may be much larger than the number of vertices.
This leads us to focusing on cases where we consider the number of source-sink pairs as (part of) the parameter. As mentioned before, \citet{KlobasMMNZ23} showed that \TempDisjointPaths\ is fixed-parameter tractable with respect to the number of source-sink pairs if the underlying graph is a forest. They left open whether this algorithm can be generalized to an FPT-algorithm where the parameter is the number of source-sink pairs combined with some distance-to-forest measure for the underlying graph. They also left open whether a similar algorithm can be found for \TempDisjointWalks. We resolve both of these open questions.

For \TempDisjointPaths\ we show that we presumably cannot obtain an FPT-algorithm even if we combine the number of source-sink pairs with the vertex cover number of the underlying graph.
\begin{itemize}
    \item \TempDisjointPaths\ is W[1]-hard for the combination of the number of source-sink pairs and the vertex cover number of the underlying graph.
\end{itemize}
This result excludes several popular distance-to-forest measures as potential parameters, such as the treewidth or the feedback vertex number, which are smaller than the vertex cover number. On the positive side, we can show that we can use the feedback edge number (which is a distance-to-forest measure that is incomparable to the vertex cover number) as an additional parameter to obtain tractability.
\begin{itemize}
    \item \TempDisjointPaths\ is fixed-parameter tractable with respect to the combination of the number of source-sink pairs and the feedback edge number of the underlying graph.
\end{itemize}

The parameterized complexity of \TempDisjointWalks\ is surprisingly different. For this problem we can show that, in contrast to the path version, we presumably cannot obtain an FPT-algorithm for the number of source-sink pairs even if the underlying graph is a star.
\begin{itemize}
    \item \TempDisjointWalks\ is W[1]-hard for the number of source-sink pairs even if the underlying graph is a star.
\end{itemize}
This is quite surprising given that \citet{KlobasMMNZ23} showed that \TempDisjointWalks\ is easier to solve than \TempDisjointPaths\ when the number of source-sink pairs is considered as a parameter and the underlying graphs is unrestricted.
As mentioned before, they showed that in general, \TempDisjointWalks\ can be solved in polynomial time if the number of source-sink pairs is constant whereas \TempDisjointPaths\ is NP-hard already for two source-sink pairs.
On the positive side, if the underlying graph is restricted to be a path, we can achieve fixed-parameter tractability for \TempDisjointWalks.
\begin{itemize}
    \item \TempDisjointWalks\ is fixed-parameter tractable with respect to the number of source-sink pairs if the underlying graph is a path.
\end{itemize}

Our results provide a quite complete picture of the parameterized complexity of \TempDisjointPaths\ and \TempDisjointWalks\ when the number of source-sink pairs and structural parameters (particularly ones that measure similarity to forests) of the underlying graph are considered. We point out remaining open cases and future research directions in \cref{sec:conclusion}.

\section{Preliminaries and Problem Definition}\label{sec:prelims}
We denote by~$\mathbb N$ and~$\mathbb N_0$ the natural numbers excluding and including~$0$, respectively.
An interval on $\mathbb N_0$ from $a$ to $b$ is denoted by $[a,b] \coloneqq \{ i\in \mathbb N_0 \mid a \leq i \leq b\}$
and $[a] \coloneqq [1,a]$.
A \emph{static} undirected graph~$G=(V,E)$ consists of a vertex set~$V$ and an edge set~$E\subseteq \binom{V}{2}$.
A \emph{$(s,z)$-walk} (or \emph{walk} from $s$ to $z$) in~$G$ of length~$k$ from vertex $s=v_0$ to vertex $z=v_k$ is a sequence
$P = \left(v_{i-1},v_i\right)_{i=1}^k$
of \emph{static transitions} 
such that for all $i\in[k]$ we have that $\{v_{i-1},v_i\}\in E$. 
The $(s,z)$-walk $P$ 
is called a \emph{$(s,z)$-path} (or \emph{path} from $s$ to $z$) if~$v_i\neq v_j$ whenever $i\neq j$.

A \emph{temporal graph} $\calG = (V,E_1,\ldots,E_T)$ or $\TGcompact$ consists of a vertex set $V$ and $T$ edge sets $E_1,\ldots,E_T \subseteq \binom{V}{2}$.
The \emph{underlying graph} of $\calG$ is the static graph $G_U=(V,E_U)$ with $E_U \coloneqq \bigcup_{i =1}^T E_i$.
The indices $1,\ldots,T$ are the \emph{time steps} of $\calG$.
The temporal graph $\calG$ is a \emph{temporal tree}, \emph{star}, or \emph{line}, respectively, if its underlying graph is a tree, star, or path.
We call the pair $(e,i)$ a \emph{time edge} of $\TG$ if $e \in E_i$.
The graph~$(V,E_i)$ is the \emph{$i$-th~layer} of~$\TG$.
We will say that a vertex $v \in V$ is \emph{isolated} in a layer $i \in [T]$ if it is not incident to any edges in $E_i$.
It is active \emph{active} in the time step $i \in [T]$, if there are (not necessarily distinct) layers $j,j' \in [T]$ such that $j \leq i \leq j'$ and $v$ is not isolated in $E_j$ or $E_{j'}$.

A \emph{temporal $(s,z)$-walk} (or \emph{temporal walk} from $s$ to $z$) in~$\TG$ of length~$k$ from vertex $s=v_0$ to vertex $z=v_k$ is a sequence
$P = \left(\left(v_{i-1},v_i,t_i\right)\right)_{i=1}^k$
of \emph{transitions} 
such that for all $i\in[k]$ we have that $\{v_{i-1},v_i\}\in E_{t_i}$ and for
all $i\in [k-1]$ we have that $t_i < t_{i+1}$. 
The temporal ($s,z$)-walk $P$ 
is called a \emph{temporal $(s,z)$-path} (or \emph{temporal path} from $s$ to $z$) if~$v_i\neq v_j$ whenever $i\neq j$.
%
%
The \emph{arrival time} of $P$ is $t_k$. 
We say that $P$ 
\emph{visits} the vertices $V(P)\coloneqq \{ v_i \mid i\in [0,k] \}$ in order $v_0, v_1, \ldots, v_k$. We say that $P$ \emph{occupies} vertex $v_i$ during the time 
interval $[t_{i}, t_{i+1}]$, for all $i \in [k-1]$.
Furthermore, we say that $P$ occupies $v_0$ during time interval $[t_1,t_1]$ and
$P$ occupies~$v_k$ during time interval $[t_k,t_k]$.
We say that $P$ \emph{follows} the (static) path or walk $P'=\left(v_{i-1},v_i\right)_{i=1}^k$ of the underlying graph of $\TG$.
If the arrival time of $P$ is the smallest possible among all temporal $(s,z)$-walks in $\TG$, we call $P$ \emph{foremost}. If for all $v_i$ with $i\in [k]$ we have that $t_i$ is the arrival time of a foremost $(s,v_i)$-path in $\TG$, then we call $P$ \emph{prefix foremost}. If there is a temporal $(s,z)$-path in $\TG$, then there also always exists a prefix-foremost $(s,z)$-path and such a path can be computed in polynomial time~\cite{WuCKHHW16}.
Given two temporal walks $P_1, P_2$ 
we say that $P_1$ and $P_2$ \emph{temporally intersect} 
if there exists a vertex $v$ and two time intervals $[a_1,b_1],[a_2,b_2]$, where $[a_1,b_1]\cap[a_2,b_2]\neq \emptyset$, 
such that $v$ is occupied by $P_1$ during~$[a_1,b_1]$ and by~$P_2$ during~$[a_2,b_2]$. We say that $P_1$ and $P_2$ \emph{temporally disjoint} if they are not temporally intersecting. 
The problem \TempDisjointPaths is formally defined as follows.

\medskip

\problemdef{\TempDisjointPaths}{A temporal graph~$\TGcompact$ 
	and a multiset~$S$ of
	source-sink pairs containing elements from $V\times V$.}
{Are there pairwise
	temporally disjoint 
	temporal $(s_i,z_i)$-paths for all~$(s_i,z_i)\in S$?}

\medskip

The problem \TempDisjointWalks{} receives the same input but asks whether
there are pairwise
temporally disjoint 
temporal $(s_i,z_i)$-walks for all~$(s_i,z_i)\in S$. 
Given an instance of \AllProbs, we use $\hat{S}$ to denote the set of vertices in $V$ that appear as sources or sinks in $S$, that is, $\hat{S}=\{s\in V\mid (s,z)\in S\text{ for some }z\in V\}\cup \{z\in V\mid (s,z)\in S\text{ for some }s\in V\}$.

We study the (parameterized) computational complexity of the two problems introduced  above. We use the following standard concepts from parameterized complexity theory~\cite{DF13,FG06,Cyg+15}.
A \emph{parameterized problem}~$L\subseteq \{(x,k)\in \Sigma^*\times \mathbb N\}$ is a subset of all instances~$(x,k)$ from~$\Sigma^*\times \mathbb N$, where~$k$ denotes the \emph{parameter}.
A parameterized problem~$L$ is 
in the class FPT (or \emph{fixed-parameter tractable}) if there is an algorithm that decides every instance~$(x,k)$ for~$L$ in~$f(k)\cdot |x|^{O(1)}$ time, 
where~$f$ is any computable function that depends only on the parameter. If a parameterized problem $L$ is W[1]-hard, then it is presumably not fixed-parameter tractable~\cite{DF13,FG06,Cyg+15}.

\section{Parameterized Hardness of \AllProbs}\label{sec:hardness}

In this section, we analyze the parameterized hardness of \AllProbs\ with respect to structural parameters of the underlying graph (combined with the number of source-sink pairs).
We first consider the number $|V|$ of vertices in the input temporal graph as a parameter, which one might consider as the largest structural parameter of the underlying graph.
In \cref{sec:hardnesspaths} we show that \TempDisjointPaths\ is W[1]-hard when parameterized by the combination of the vertex cover number of the underlying graph and the number of source-sink pairs. In \cref{sec:hardnesswalks} we show that \TempDisjointWalks\ is W[1]-hard when parameterized by the number of source-sink pairs even if the underlying graph is a star. 

We first show that \AllProbs\ is NP-hard \W{1}-hard when parameterized by $|V|$ and that this hardness holds even on temporal stars. 

\begin{theorem}
	\label{thm:n-star}
	\AllProbs\ on temporal stars is NP-hard and \W{1}-hard with respect to the number~$|V|$ of vertices.
\end{theorem}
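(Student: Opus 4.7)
My plan is to reduce from \textsc{Multicolored Clique}, which is \W{1}-hard parameterized by the number $k$ of colors. Given an input graph $H$ partitioned into $k$ color classes $V_1, \ldots, V_k$, I would construct a temporal star whose underlying graph has $|V| = f(k)$ vertices for a suitable function $f$ (aiming for $f(k) = O(k)$ or $O(k^2)$), while the lifetime and the multiset of source-sink pairs are polynomial in $|V(H)|$. Since the reduction runs in polynomial time and $|V|$ depends only on $k$, this simultaneously yields \NP-hardness (taking $k$ unbounded) and \W{1}-hardness with respect to $|V|$.

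The temporal star would consist of a center $c$ and one leaf $\ell_i$ per color class, plus a constant number of auxiliary leaves. The time horizon is organized into blocks, one block per vertex of $H$; within block $(i,p)$, the edge $\{\ell_i, c\}$ is available at a prescribed pattern of time steps that encodes which edges of $H$ are incident to $v_i^p$. For each pair of colors $i < j$, I would include a source-sink pair $(\ell_i,\ell_j)$; its realization as a temporal path $\ell_i \to c \to \ell_j$ is forced to traverse $c$ in an interval corresponding to a specific edge $\{v_i^p, v_j^q\} \in E(H)$. Since all $\binom{k}{2}$ such paths must schedule their center-occupation intervals pairwise temporally disjointly on $c$, their joint choice describes a candidate collection of $\binom{k}{2}$ edges of $H$.

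The crux of the proof, and the main obstacle, is \emph{consistency}: the vertex picked in $V_i$ by the path $(\ell_i,\ell_j)$ must coincide with the one picked by every other path that uses $\ell_i$. A naive ``time slot equals vertex'' encoding fails, because different paths meeting at the same leaf are themselves temporally constrained at that leaf and therefore cannot share a time step. I would address this by adding, for each color $i$, a consistency gadget consisting of auxiliary source-sink pairs (scheduled at the center) together with carefully placed edge availabilities at $\ell_i$; the intended schedule leaves only a single ``free stripe'' of times in which the edge-verification paths through $\ell_i$ may be placed, so all paths through $\ell_i$ are forced to correspond to the same vertex of $V_i$. Getting this gadget to work with only $O(k)$ (or $O(k^2)$) leaves while also synchronizing across every pair of colors is the technically delicate part of the argument.

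Finally, I would verify that the same construction yields hardness of both \TempDisjointPaths and \TempDisjointWalks simultaneously. On a star every leaf-to-leaf walk must visit $c$ at least once, and any extra visit only enlarges the total occupation of $c$; by making the construction tight — i.e., so that the total prescribed amount of center use exactly fills the lifetime in a yes-instance — walk solutions are forced to coincide with path solutions, and the reduction transfers from \TempDisjointPaths to \TempDisjointWalks without modification.
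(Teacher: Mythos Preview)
Your plan is not the paper's approach, and the part you yourself flag as ``technically delicate'' is in fact the whole proof --- and you have not supplied it. The consistency gadget is not a detail to be filled in later: with only $f(k)$ leaves, every filler path you add to block time at $c$ must itself park at one of those finitely many leaves between its two visits to $c$, and those parked fillers then interact with the edge-verification paths at the leaves. The paper's own \textsc{Multicolored Clique} reduction (used for a different theorem, \cref{thm:k-star}) solves exactly this consistency problem, but only by spending $\Theta(kn)$ auxiliary leaves on it, which is incompatible with bounding $|V|$ in $k$. So as it stands your proposal is a research plan, not a proof.

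The paper sidesteps the issue entirely by reducing from \textsc{Unary Bin Packing} parameterized by the number $b$ of bins, rather than from \textsc{Multicolored Clique}. The star has just $2b+3$ vertices: a leaf pair $(s_j,z_j)$ per bin, one dummy pair $(\tilde s,\tilde z)$, and the center. The multiset $S$ contains $B$ copies of each $(s_j,z_j)$ and $m(b-1)$ dummy copies. For each item $i$ there is a block of layers in which every edge $\{s_j,c\}$ and $\{c,z_j\}$ appears exactly $x_i$ times; a counting argument then forces exactly $x_i$ bin paths and exactly $b{-}1$ dummy paths into each item's block, and the only way the $b{-}1$ dummies fit is if all $x_i$ bin paths in that block use the \emph{same} bin $j$. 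Consistency thus drops out of a capacity argument rather than a synchronisation gadget, which is why the vertex count stays bounded. Your observation about walks versus paths is correct in spirit and is how the paper handles both variants at once.
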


We will prove \cref{thm:n-star} for \TempDisjointWalks\ by a parameterized reduction from the \W{1}-hard problem \textsc{Unary Bin Packing} parameterized by the number of bins~\cite{jansen2013bin}. In our reduction, all temporal walks in any solution will be paths, hence the result also holds for \TempDisjointPaths.

In \textsc{Unary Bin Packing}, the input consists of $m$ items of sizes $x_1,\ldots,x_m \in \N$, a number of bins $b \in \N$, and a bin size $B \in \N$ with all integers encoded in unary.
One is asked to decide whether there is an assignment $f \colon [m] \to [b]$ of items to bins such that $\sum_{i \in f^{-1}(j)} x_i \leq B$ for all $j \in [b]$, that is, the total size of the items assigned to any bin is at most $B$.
The parameter is~$b$.
We may assume \wilog{} that $S \coloneqq \sum_{i=1}^m x_i = b  B$.
If $S > b  B$, then the instance is a clearly a \no-instance.
If $S < b  B$, then one can add $b B - S$ items of size~$1$ without changing whether or not the instance is a \yes-instance.
Intuitively, the advantage of this restriction is that the problem can then be thought of as trying to fill every bin completely.

The idea behind the reduction is as follows.
For each bin $j \in [b]$, there are $B$ copies of a terminal pair representing that bin (and there is also an additional ``dummy'' terminal pair).
The reduction outputs a temporal graph that, for each item $i\in [m]$, contains a sequence of consecutive layers representing that item.
The gadget representing item $i$ is constructed in such a way that the following holds:
in order to be able to send a sufficient number of pairwise temporally disjoint temporal walks between each terminal pair, one is forced to send exactly $x_i$ temporal walks between the same terminal pair in the gadget for item $i$.

We will now give the formal construction for the parameterized reduction from \textsc{Unary Bin Packing} parameterized by $b$ to \tdw parameterized by $|V|$.

\begin{construction}
	\label{constr:n-star}
	Let $(x_1,\ldots,x_m,b,B)$ be an instance of \textsc{Unary Bin Packing} with the aforementioned restriction.
	We will define an instance $(\calG,S)$ of \tdw.
	The vertex set of $\calG$ is $V \coloneqq \{s_1,\ldots,s_b,z_1,\ldots,z_b,\tilde{s},\tilde{z},c\}$.
	Hence, $\abs{V} = 2b + 3$.
	We let the multiset~$S$ contain $B$ pairs $(s_j,z_j)$ for every $j \in [b]$ and $m(b-1)$ pairs $(\ts,\ttt)$.
	We will now define the layers of~$\calG$.
	For every item $i \in [m]$, we add a sequence of layers $E^i_1,\ldots,E^i_{2bx_i}$.
	We let 
	\begin{align*}
		E^i_{2j-1} & \coloneqq \{\{\ts,c\}\} \text{ for all } j \in [b],\\
		E^i_{2j} & \coloneqq \{\{c,\ttt\},\{s_j,c\}\} \text{ for all } j \in [b],\\
		E^{i}_{2bj'+j} & \coloneqq \{\{c,z_{j}\}\} \text{ for all } j \in [b], j' \in [x_i -1],\\
		E^{i}_{2bj'+j+1} & \coloneqq \{ \{s_j, c\}\} \text{ for all } j \in [b], j' \in [x_i -1], \\
		E^i_{2bx_i + 2j-1} & \coloneqq \{\{c,z_j\},\{\ts,c\}\} \text{ for all } j \in [b], \text{ and }\\
		E^i_{2bx_i + 2j} & \coloneqq \{\{c,\ttt\}\} \text{ for all } j \in [b].
	\end{align*}
	The layers corresponding to an item are illustrated in \Cref{fig:n-star}.
	We place the layers for the first item first, followed by the layers for the second item, and so on.
	\cqed
\end{construction}

\begin{figure}[t]
	\centering
	\begin{tikzpicture}[scale=.7]
	\tikzpreamble
	\def\nsc{0.5}
	
	\node[xnode] (s1) at (0,0) {};
	\node[below] at (s1) {$s_1$};
	\node[xnode] (s2) at (3,0) {};
	\node[below] at (s2) {$s_2$};
	\node[xnode] (s3) at (6,0) {};
	\node[below] at (s3) {$s_3$};
	\node () at (9,0) {\Huge $\ldots$};
	\node[xnode] (sb) at (12,0) {};
	\node[below] at (sb) {$s_b$};
	
	\def\top{8}
	\node[xnode] (t1) at (0,\top) {};
	\node[above] at (t1) {$z_1$};
	\node[xnode] (t2) at (3,\top) {};
	\node[above] at (t2) {$z_2$};
	\node[xnode] (t3) at (6,\top) {};
	\node[above] at (t3) {$z_3$};
	\node () at (9,\top) {\Huge $\ldots$};
	\node[xnode] (tb) at (12,\top) {};
	\node[above] at (tb) {$z_b$};
	
	\def\mid{4}
	\node[vnode] (c) at (6,\mid) {};
	\node[xnode] (st) at (-1,\mid) {};
	\node[left] at (st) {$\ts$};
	\node[xnode] (tt) at (13,\mid) {};
	\node[right] at (tt) {$\ttt$};
	
	\draw (s1) to node[midway,align=center,sloped,font=\scriptsize,yshift=4.25pt]{ $2b(j'-1) +2$}(c);
	\draw (s2) to node[midway,align=center,sloped,font=\scriptsize,yshift=4.25pt]{ $2b(j'-1) +4$}(c);
	\draw (s3) to node[midway,align=center,sloped,font=\scriptsize,yshift=4.25pt]{ $2b(j'-1) +6$}(c);
	\draw (sb) to node[midway,align=center,sloped,font=\scriptsize,yshift=4.25pt]{ $2bj'$}(c);
	\draw (t1) to node[midway,align=center,sloped,font=\scriptsize,yshift=4.25pt]{ 
	$2bj' +1$
	}(c);
	\draw (t2) to node[midway,align=center,sloped,font=\scriptsize,yshift=4.25pt]{ $2bj' +3$}(c);
	\draw (t3) to node[midway,align=center,sloped,font=\scriptsize,yshift=4.25pt]{ $2bj' +5$}(c);
	\draw (tb) to node[midway,align=center,sloped,font=\scriptsize,yshift=4.25pt]{ $2b(j'+1)-1$}(c);
	\draw (st) to node[midway,align=center,sloped,font=\scriptsize,yshift=4.25pt]{ 
	$2j -1,2bx_i+2j -1$
	}(c);
	\draw (tt) to node[midway,align=center,sloped,font=\scriptsize,yshift=4.25pt]{ $2j,2bx_i+2j$}(c);
	
	\node[align=center] () at (6,-1) {$j\in [b], j' \in [x_i]$};
	\end{tikzpicture}
	\caption{Illustration of \cref{constr:n-star}: Labels on the edges indicate the indices $j$ such that the edge is contained in $E^i_j$, where $j$ and $j'$ range over the values indicated at the bottom.
	The central vertex $c$ is represented by the square.}
	\label{fig:n-star}
\end{figure}
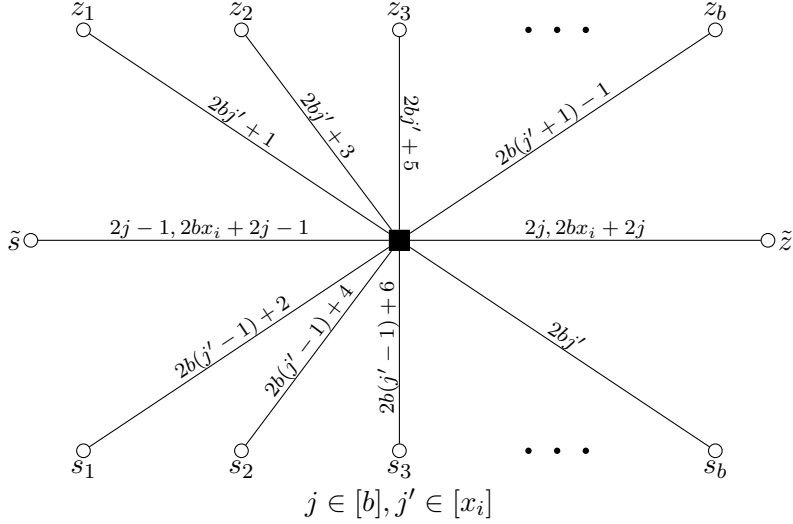

With this, we now prove \cref{thm:n-star}.

\begin{proof}[Proof of \cref{thm:n-star}]
	The proof is based on the reduction described in \cref{constr:n-star}.
	Clearly, \cref{constr:n-star} can be computed in polynomial time and the number of vertices in the output instance is bounded by a function in the parameter $b$.
	
	It remains to show that the input instance $(x_1,\ldots,x_m,b,B)$ for \textsc{Unary Bin Packing} is a \yes-instance if and only if the output instance $(\calG,S)$ for \tdw{} is a \yes-instance.
	
	\RD{} Suppose that $f\colon [m] \to [b]$ is an assignment such that $\sum_{i \in f^{-1}(j)} x_i \leq B$ for all $j \in [b]$.
	Because $S \coloneqq \sum_{i=1}^m x_i = b B$ and $\sum_{i \in f^{-1}(j)} x_i \leq B$ for all $j \in [b]$, it follows that  $\sum_{i \in f^{-1}(j)} x_i = B$ for all $j \in [b]$.
	We create a set $\calW$ of temporally disjoint $S$-walks in $\calG$ as follows.
	For each item $i \in [m]$, we add $x_i$ walks from $s_{f(i)}$ to $z_{f(i)}$ as well as $b-1$ walks from $\ts$ to $\ttt$.
	For $j\in [x_i]$, the $j$-th $(s_{f(i)},z_{f(i)})$-walk moves from $s_{f(i)}$ to $c$ in layer $E^i_{2b(j-1)+2f(i)}$ and from $c$ to  $z_{f(i)}$ in layer $E^i_{2bj+2f(i)-1}$.
	When it comes to the $b-1$ walks from $\ts$ to $\ttt$, the first $f(i)-1$ of these walks move from $\ts$ to $c$ in layers $E^i_{1},E^i_{3},\ldots,E^i_{2f(i)-1}$ and from $c$ to $\ttt$ in $E^i_{2},E^i_{4},\ldots,E^i_{2f(i)}$.
	The remaining $b-f(i)$ walks from $\ts$ to $\ttt$, take place in the layers $E^i_{2bx_i+2f(i)+1},\ldots,E^i_{2x_ib}$.
	In all, this means that the number of $(s_j,z_j)$-walks is $\displaystyle \sum_{\substack{i \in [m] \\ f(i) = j }} x_i = B$ and the number of $(\ts,\ttt)$-walks is $m(b-1)$.
	
	\LD{} Suppose that $\calW$ is a set of temporally disjoint walks for $S$ in $\calG$.
	First, consider any item $i \in [m]$ and the corresponding layers $E^i_1,\ldots,E^i_{2bx_i}$.
	At most $x_i$ paths may move from any $s_j$ to $z_j$ in those layers, because each of the edges $\{s_j,c\}$ and $\{c,z_j\}$ is only present $x_i$ times.
	Since a total of $bB = \sum_{i=1}^m x_i$ walks must move from some $s_j$ to $z_j$, it follows that exactly $x_i$ paths must go from any $s_j$ to $z_j$ in the layers corresponding to each item.
	Then, for each item at most $b-1$ walks can move from $\ts$ to $\ttt$ and, therefore, exactly $b-1$ walks must do this.
	The only way to achieve this is if the layers corresponding to a particular item only contain $(s_j,z_j)$-paths for a single index $j$.
	Let $f(i)$ be that index for item $i$.
	Since only $B$ paths from any particular $s_j$ to $z_j$ are needed, it follows that $\sum_{i \in f^{-1}(j)} x_j \leq B$.
	Hence, $f$ is an assignment of items to bins with the required properties.
\end{proof}

\subsection{Parameterized Hardness of \TempDisjointPaths}\label{sec:hardnesspaths}

In this section, we show that \TempDisjointPaths\ is W[1]-hard when parameterized by the combination of source-sink pairs and the vertex cover number of the underlying graph. This shows to which extend we can expect to generalize the FPT-algorithm for \TempDisjointPaths\ for the number of source-sink pairs on temporal forests by \citet{KlobasMMNZ23}. Our result rules out FPT-algorithms for \TempDisjointPaths\ parameterized by the number of source-sink pairs combined with e.g.\ the treewidth or the feedback vertex number of the underlying graph, since both of theses parameters are smaller than the vertex cover number. To obtain tractability, we have to use parameters that are larger or incomparable to the vertex cover number, such as the feedback edge number. In \cref{sec:FESalgo} we show that this is indeed possible.

\begin{theorem}\label{thm:w1hardnessVCN}
\TempDisjointPaths\ is W[1]-hard when parameterized by the combination of the number $|S|$ of source-sink pairs and the vertex cover number of the underlying graph.
\end{theorem}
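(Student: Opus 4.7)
The plan is to give a parameterized reduction from \textsc{Multicolored Clique} parameterized by the clique size $k$, which is well known to be $\W{1}$-hard. In an instance $(G, V_1 \uplus \cdots \uplus V_k)$ one asks whether $G$ admits a clique using exactly one vertex of every color class. I will construct an instance $(\calG, S)$ of \tdp with $|S| = k + \binom{k}{2}$ source-sink pairs --- one ``selection'' pair $(s_i, z_i)$ per color class $V_i$, and one ``verification'' pair $(s_{ij}, z_{ij})$ per pair of color classes --- whose underlying graph consists of $O(k^2)$ hub vertices $c_i$ and $c_{ij}$ together with source, sink, and auxiliary pendants attached only to hubs. The hubs then form a vertex cover of size $O(k^2)$, so $|S|$ and the vertex cover number are both bounded by a function of $k$.

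The layers of $\calG$ are laid out in consecutive blocks. For every vertex $v \in V_i$ I reserve a ``selection slot'' during which the edges $\{s_i, c_i\}$ and $\{c_i, z_i\}$ each appear exactly once; since these are the only times at which those two edges appear, any temporal $(s_i, z_i)$-path must pick a single slot --- and thereby a single vertex $v_i \in V_i$ --- and must occupy $c_i$ throughout the chosen slot. For every edge $e = \{u, v\} \in E(G)$ with $u \in V_i$ and $v \in V_j$ I further reserve an ``edge slot'' in which a $(s_{ij}, z_{ij})$-path can traverse a short gadget through $c_i$ and $c_j$; the layout of this slot forces such a path to sit on $c_i$ at a time inside the selection slot of $u$ and on $c_j$ at a time inside the selection slot of $v$.

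Temporal disjointness then couples the two kinds of paths: the selection path $P_i$ occupies $c_i$ exactly during the slot of its chosen vertex $v_i$, so the verification path $P_{ij}$ can only use an edge slot corresponding to some $\{u,v\} \in E(G)$ with $u = v_i$; a symmetric argument gives $v = v_j$. Hence any temporally disjoint family of paths yields a choice $v_1, \ldots, v_k$ such that every pair $\{v_i, v_j\}$ is an edge of $G$, i.e., a multicolored clique of $G$; conversely, any such clique is easily converted into a temporally disjoint family by running each $P_i$ through the slot of its chosen vertex and each $P_{ij}$ through the slot of the corresponding edge. The main obstacle is to design the edge-verification gadgetry using only a constant number of hub vertices per pair of color classes while still forcing the verification path to genuinely probe both $c_i$ and $c_j$ at the required times and permitting distinct verification paths to coexist. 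I plan to handle this by routing each $P_{ij}$ through degree-two probe pendants attached to $c_i$ and $c_j$ whose incident edges appear only at a single pre-specified moment inside each edge slot; this keeps all non-hub vertices pendant and maintains a vertex cover of size $O(k^2)$.
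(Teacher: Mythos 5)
Your overall strategy is the same as the paper's: a parameterized reduction from \textsc{Multicolored Clique} with $k+\binom{k}{2}$ source-sink pairs, where a constant number of hub vertices per color (and per color pair) forms a vertex cover of size $O(k^2)$, vertex selection is encoded by which time window a selection path uses, and verification paths must probe the hubs of both colors at times consistent with a common edge. The parameter accounting ($|S|$ and the vertex cover number both $O(k^2)$) is fine.

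However, the central coupling mechanism as you describe it is broken, in two ways. First, the single-slot claim does not follow: if $\{s_i,c_i\}$ appears once in every selection slot and $\{c_i,z_i\}$ likewise, nothing prevents the $(s_i,z_i)$-path from entering $c_i$ in the slot of one vertex and leaving in the slot of a much later one, occupying $c_i$ across many slots. Second, and more fundamentally, the direction of the constraint is inverted: if $P_i$ \emph{occupies} $c_i$ throughout the slot of its selected vertex $v_i$ and is free elsewhere, then temporal disjointness forces the verification path $P_{ij}$ (which must sit on $c_i$ inside the slot of the endpoint $u$ of its chosen edge) to pick $u\neq v_i$ --- you end up certifying non-selection rather than adjacency to the selected vertex. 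What you actually need is for $P_i$ to block the hub at all times \emph{except} the chosen slot, leaving a single window that $P_{ij}$ is forced into. A single hub per color cannot achieve this: a temporal path occupies each vertex during one contiguous interval, so its free times at $c_i$ are a prefix and a suffix, never a window in the interior. This is precisely why the paper's gadget has \emph{two} hubs per color, an entry $c$ and an exit $c'$: the selection path occupies $c$ from time $1$ until it departs into the body and occupies $c'$ from its arrival until the final time step, so the only moment at which another path can pass from $c$ to $c'$ is the window of the selected vertex. Your sketch would need this two-hub structure (still $O(k^2)$ hubs overall, so the parameter bound survives), plus the interleaved per-window probe vertices you allude to in order to let all $k-1$ verification paths of a color share its window without colliding --- which is the content of the paper's Lemmas~\ref{claim1} and~\ref{claim2}.
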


We will prove \cref{thm:w1hardnessVCN} by a parameterized reduction from the W[1]-hard problem \textsc{Multicolored Clique} parameterized by the number of colors~\cite{fellows2009multipleinterval}.  Here, given a $k$-partite graph $G=(V_1\uplus V_2 \uplus\ldots\uplus V_k, E)$, we are asked whether $G$ contains a clique of size $k$. The parameter is~$k$. If $v\in V_i$, then we say that $v$ has \emph{color} $i$. W.l.o.g.\ we assume that $|V_1|=|V_2|=\ldots=|V_k|=n$ and that every vertex has at least one neighbor of every color. Let $E_{i,j}$ denote the set of all edges between vertices from $V_i$ and $V_j$. We assume w.l.o.g.\ that $|E_{i,j}|=m$ for all $i\neq j$. 



Before we give the complete description on how we construct an instance of \TempDisjointPaths\
we define the following gadget $H(p,q,t)$. We will use this gadget for ``vertex selection'' as well as ``edge selection''. Each gadget has one source-sink pair associated with it and each gadget has three parameters: $p$, $q$, and $t$. Intuitively speaking, $p$ determines how many options there are for a temporal path from the source to the sink of the gadget. Parameter $q$ determines how many other temporal paths can traverse the gadget without temporally intersecting each other and the temporal path corresponding to the source-sink pair of the gadget. Finally, $t$ is a temporal offset.

Informally speaking, when we use $H(p,q,t)$ as a vertex selection gadget for some color $i$, we will set $p$ to the number of vertices of color $i$. Hence, choosing a temporal path from the source to the sink of the gadget corresponds to selecting a vertex of color $i$. We will set $q$ to $k-1$, that is, for each color \emph{different} from $i$, one temporal path can traverse the gadget. Intuitively, each of those temporal paths will verify that the vertex we selected for color $i$ has an edge to each vertex we selected for the other colors. 

When we use $H(p,q,t)$ as an edge selection gadget for color combination $i,j$, we will set $p$ to the number of edges between vertices of color $i$ and vertices of color $j$. Hence, choosing a temporal path from the source to the sink of the gadget corresponds to selecting an edge of color combination $i,j$. We will set $q$ to one, that is, only one other temporal path can traverse the gadget. This will be the temporal path verifying that the vertex we selected for color $i$ has an edge (the one selected in this gadget) to the vertex we selected for color $j$.

Formally, the gadget is defined as follows.

\begin{construction}
We construct a temporal graph $H(p,q,t)$ with one source and one sink vertex as follows.
\begin{itemize}
    \item The ``skeleton'' of the gadget consists of four vertices $s,z,c,c'$ and two edges $\{s,c\}$, $\{c',z\}$.
    \item The ``body'' of the gadget consists of $p$ different $(q+1)$-tuples of vertices, $(w^{(\ell)}_0, w^{(\ell)}_1, w^{(\ell)}_2, \ldots, w^{(\ell)}_q)$ for $\ell\in[p]$.

    It additionally contains $2p(q+1)$ edges, $\{c,w^{(\ell)}_r\}$, $\{w^{(\ell)}_r,c'\}$ for $\ell\in[p]$ and $0\le r\le q$.
\end{itemize}
The gadget is labelled as follows.
\begin{itemize}
    \item Edge $\{s,c\}$ is labelled with one and $\{c',z\}$ is labelled with $4n^3$ (this will be the largest time label in the constructed temporal graph).
    \item For the $(q+1)$-tuple $(w^{(\ell)}_0, w^{(\ell)}_1, w^{(\ell)}_2, \ldots, w^{(\ell)}_q)$ we add the following labels.
    \begin{itemize}
        \item We label $\{c,w^{(\ell)}_0\}$ with $t+(2\ell-1) q-1$ and we label $\{w^{(\ell)}_0,c'\}$ with $t+2\ell q+2$.
        \item For $r\in[q]$ we label $\{c,w^{(\ell)}_r\}$ with $t+(2\ell-1) q+2r-1$ and we label $\{w^{(\ell)}_r,c'\}$ with $t+(2\ell-1) q+2r$.
    \end{itemize}
\end{itemize}
This finishes the construction of gadget $H(p,q,t)$, it is illustrated in \cref{fig:gadget}. We call $(s,z)$ the source-sink pair of the gadget, we call vertex $c$ the entry of the gadget, and we call vertex $c'$ the exit of the gadget. \cqed
\begin{figure}[t]
\begin{center}
\begin{tikzpicture}[line width=1pt, scale=.8, xscale=2.4]
    \node[vert,label=below:$s$] (S) at (0,1) {};
    \node[vert2,label=left:$c$] (C1) at (1.5,4) {};
    \node[vert2,label=right:$c'$] (C2) at (6.5,4) {};
    \node[vert,label=below:$z$] (Z) at (8,1) {};
    \node[vert,label=above:$w^{(\ell)}_q$] (WQ) at (4,10) {};
    \node[vert,label=above:$w^{(\ell)}_{q-1}$] (WQ2) at (4,8) {};
    \node (D) at (4,7) {$\vdots$};
    \node[vert,label=above:$w^{(\ell)}_{2}$] (W2) at (4,5) {};
    \node[vert,label=above:$w^{(\ell)}_{1}$] (W1) at (4,3) {};
    \node[vert,label=above:$w^{(\ell)}_{0}$] (W0) at (4,0) {};

	\draw (S) --node[timelabel] {$1$} (C1);
	\draw (C2) --node[timelabel] {$4n^3$} (Z);
	\draw (C1) edge[bend left=20] node[timelabel] {$t+2\ell q-1$} (WQ);
	\draw (C1) --node[timelabel] {$t+2\ell q-3$} (WQ2);
	\draw (C1) --node[timelabel] {$t+(2\ell-1) q+3$} (W2);
	\draw (C1) --node[timelabel] {$t+(2\ell-1) q+1$} (W1);
\draw (C1) edge[bend right=20] node[timelabel] {$t+(2\ell-1) q-1$} (W0);
	
    \draw (C2) edge[bend right=20] node[timelabel] {$t+2\ell q$} (WQ);
	\draw (C2) --node[timelabel] {$t+2\ell q-2$} (WQ2);
	\draw (C2) --node[timelabel] {$t+(2\ell-1) q+4$} (W2);
	\draw (C2) --node[timelabel] {$t+(2\ell-1) q+2$} (W1);
	\draw (C2) edge[bend left=20] node[timelabel] {$t+2\ell q+2$} (W0);
\end{tikzpicture}
    \end{center}
    \caption{Illustration of gadget $H(p,q,t)$. Only one $(q+1)$-tuple $(w^{(\ell)}_0, w^{(\ell)}_1, w^{(\ell)}_2, \ldots, w^{(\ell)}_q)$ is depicted. The gray vertices $c,c'$ form a vertex cover of the gadget.}\label{fig:gadget}
\end{figure}
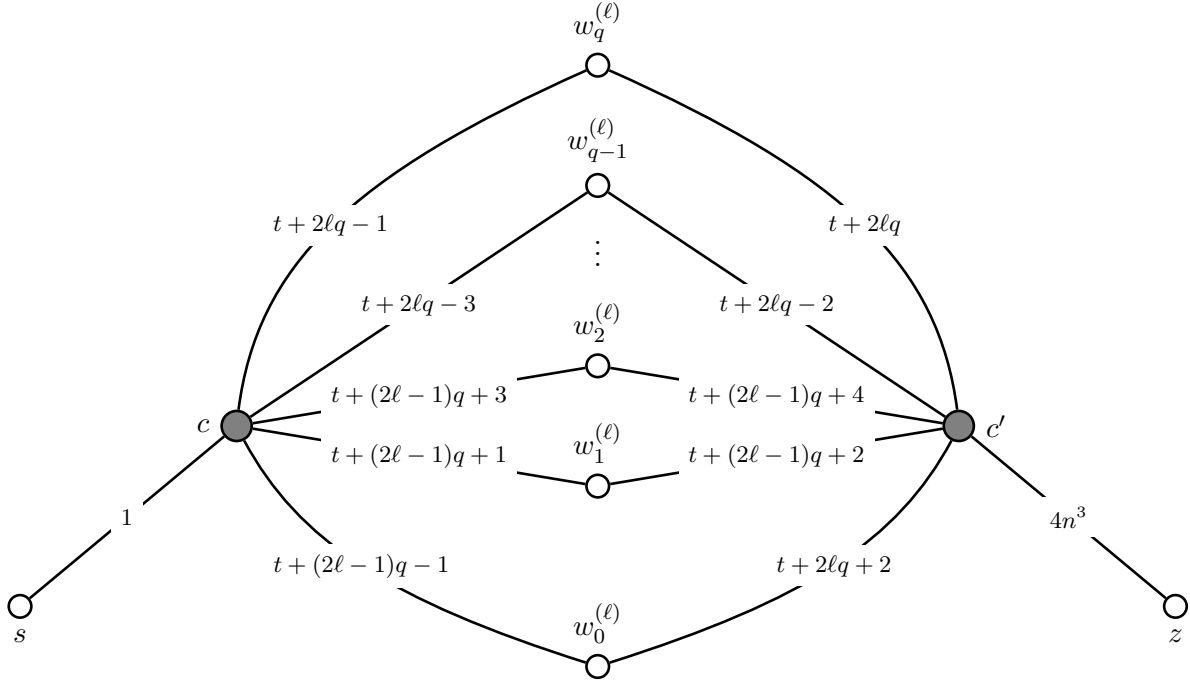
\end{construction}

We first prove some properties of $H(p,q,t)$, reflecting the intuition we gave in the beginning.
\begin{lemma}\label{claim1}
Consider a temporal path $P$ from $s$ to $z$ in gadget $H(p,q,t)$. If $P$ visits $w^{(\ell)}_r$ for some $\ell\in[p]$ and $r\in [q]$ (that is, $r\neq 0$), then every temporal path $P'$ from the entry vertex of the gadget to the exit vertex of the gadget temporally intersects $P$.
\end{lemma}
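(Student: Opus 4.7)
The plan is to pin down exactly when $P$ occupies the entry vertex $c$ and the exit vertex $c'$, and then to observe that any temporal $(c,c')$-path must sit on one of these two vertices while $P$ is still there. First I would note that in $H(p,q,t)$ the only edge incident to $s$ is $\{s,c\}$ with label~$1$ and the only edge incident to $z$ is $\{c',z\}$ with label~$4n^3$, so $P$ must start with the transition $(s,c,1)$ and end with $(c',z,4n^3)$. In between, $P$ visits the single $w$-vertex $w^{(\ell)}_r$, traversing $\{c,w^{(\ell)}_r\}$ at time $t+(2\ell-1)q+2r-1$ and $\{w^{(\ell)}_r,c'\}$ at time $t+(2\ell-1)q+2r$ (these are the labels specific to the case $r\in[q]$, which is where the assumption $r\neq 0$ is used). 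Writing $\tau^\ast \coloneqq t+(2\ell-1)q+2r$, the occupancy definition then gives that $P$ occupies $c$ throughout $[1,\tau^\ast-1]$ and $c'$ throughout $[\tau^\ast,4n^3]$, so these two intervals tile $[1,4n^3]$ with no gap.

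Next, I would argue that any temporal $(c,c')$-path $P'$ in $H(p,q,t)$ has length exactly two. This follows because the neighbours of $c$ in the gadget are only $s$ and the $w$-vertices, $s$ is a dead end (its only incident edge is $\{s,c\}$), and every $w$-vertex is adjacent only to $c$ and $c'$; hence $P'$ must have the form $(c,w^{(\ell')}_{r'},\tau_1),(w^{(\ell')}_{r'},c',\tau_2)$ for some indices $\ell',r'$ and some times $\tau_1<\tau_2$. Because $c$ and $c'$ are the endpoints of $P'$, $P'$ occupies $c$ only at the single time step $[\tau_1,\tau_1]$ and $c'$ only at $[\tau_2,\tau_2]$.

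It then suffices to do a one-line case distinction on $\tau_1$. If $\tau_1\leq\tau^\ast-1$, then $\tau_1$ lies in $P$'s $c$-interval $[1,\tau^\ast-1]$ and $P,P'$ temporally intersect at $c$. Otherwise $\tau_1\geq\tau^\ast$, so $\tau_2>\tau_1\geq\tau^\ast$, and since $\tau_2\leq 4n^3$ (all labels of the gadget lie in $[1,4n^3]$), $\tau_2$ sits inside $P$'s $c'$-interval $[\tau^\ast,4n^3]$, so the two paths intersect at $c'$. Because the two occupancy intervals of $P$ meet at consecutive integers, these cases exhaust all possibilities. There is no real obstacle here once the occupancy intervals are written down; the only thing to keep straight is that $c$ and $c'$ are \emph{interior} vertices of $P$ but \emph{endpoints} of $P'$, which is precisely what makes $P$'s occupancy intervals long while $P'$'s are single time steps, and hence what forces the collision.
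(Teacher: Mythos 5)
Your proof is correct and follows essentially the same route as the paper's: both pin down that $P$ must be the path $s,c,w^{(\ell)}_r,c',z$, observe that the consecutive labels on $\{c,w^{(\ell)}_r\}$ and $\{w^{(\ell)}_r,c'\}$ make $P$'s occupancy intervals of $c$ and $c'$ tile all of $[1,4n^3]$ with no gap, and conclude that any $(c,c')$-path must land in one of the two intervals. Your explicit case distinction on $\tau_1$ and the observation that $P'$ has length two are just slightly more detailed renderings of the paper's one-line closing argument.
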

\begin{proof}
Let $P$ be a temporal path from $s$ to $z$ in gadget $H(p,q,t)$ that visits $w^{(\ell)}_r$ for some $\ell\in[p]$ and $r\in [q]$.
Note that edge $\{s,c\}$ has label one and edge $\{c',z\}$ has label $4n^3$. Edges $\{c,w^{(\ell)}_r\}$ and $\{w^{(\ell)}_r,c'\}$ have consecutive labels for $r\neq 0$, say $x$ and $x+1$ with $1<x<4n^3-1$. Hence, vertex $c$ is occupied by $P$ from time 1 to $x$ and vertex $c'$ is occupied by $P$ from time $x+1$ to $4n^3$.
Now any path $P'$ from $c$ to $c'$ needs to start at $c$ at some time $t>x$. However, then it cannot arrive at $c'$ earlier than $x+1$. Hence, $P'$ is temporally intersecting $P$.
\end{proof}

\begin{lemma}\label{claim2}
Consider a temporal path $P$ from $s$ to $z$ in gadget $H(p,q,t)$ such that $P$ visits $w^{(\ell)}_0$ for some $\ell\in[p]$. Then there are exactly $q$ temporal paths $P'_1,\ldots, P'_q$ from the entry vertex of the gadget to the exit vertex of the gadget that pairwise are not temporally intersecting and do not temporally intersect $P$. Furthermore, each temporal path $P'_1,\ldots,P'_q$ visits one of the vertices $w^{(\ell)}_1,\ldots,w^{(\ell)}_q$.
\end{lemma}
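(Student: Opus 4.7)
The plan is to first pin down exactly what $P$ must look like, then use the intervals during which $P$ occupies $c$ and $c'$ to constrain all competing $(c,c')$-temporal paths, and finally enumerate those that survive the constraints.

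Since $s$ is incident to only the edge $\{s,c\}$ at time $1$ and $z$ is incident to only the edge $\{c',z\}$ at time $4n^3$, the path $P$ must be of the form $s \to c \to w^{(\ell)}_0 \to c' \to z$. Reading off the labels on $\{c,w^{(\ell)}_0\}$ and $\{w^{(\ell)}_0,c'\}$, one obtains that $P$ occupies $c$ during $[1,\,t+(2\ell-1)q-1]$ and $c'$ during $[t+2\ell q+2,\,4n^3]$. A second easy observation is that any $(c,c')$-temporal path in the gadget must traverse exactly one vertex $w^{(\ell')}_{r'}$ and therefore occupies $c$ and $c'$ at the two single time points equal to the labels of its two edges. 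Consequently, temporal disjointness from $P$ forces these two labels to lie strictly between $t+(2\ell-1)q-1$ and $t+2\ell q+2$.

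I would then carry out a case analysis on $\ell'\in[p]$. For $\ell'<\ell$ one checks, by comparing the label formulas, that every label on the $\ell'$-tuple is at most $t+(2\ell-1)q-1$, so any candidate $P'$ through the $\ell'$-tuple would depart $c$ while $P$ still occupies it. Symmetrically, for $\ell'>\ell$ every label on the $\ell'$-tuple is at least $t+2\ell q+2$, so $P'$ would arrive at $c'$ while $P$ is already there. Hence the only feasible choice is $\ell'=\ell$. Within the $\ell$-tuple, the branch $r'=0$ is ruled out by the argument already used in \cref{claim1} (its two edge times coincide with $P$'s own occupation of $w^{(\ell)}_0$), so only $r'\in[q]$ remains, yielding $q$ candidate paths $P'_r = (c, w^{(\ell)}_r, c')$ with transitions at times $t+(2\ell-1)q+2r-1$ and $t+(2\ell-1)q+2r$.

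It then remains to verify that these $q$ paths are pairwise temporally disjoint and disjoint from $P$. They use distinct internal vertices $w^{(\ell)}_r$, and at $c$ (respectively $c'$) they are present at the pairwise distinct time points $t+(2\ell-1)q+2r-1$ (respectively $t+(2\ell-1)q+2r$) for $r\in[q]$, so no two of them temporally intersect; and by construction each $\lambda_1=t+(2\ell-1)q+2r-1$ is strictly greater than $t+(2\ell-1)q-1$ and each $\lambda_2=t+(2\ell-1)q+2r$ is strictly less than $t+2\ell q+2$, giving disjointness from $P$. Combined with the case analysis above, this shows that the set of feasible paths is \emph{exactly} $\{P'_1,\ldots,P'_q\}$ and that each $P'_r$ visits $w^{(\ell)}_r$, as claimed.

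The main obstacle I expect is the bookkeeping in the case analysis on $\ell'$: one must carefully track how the two families of labels (the $r=0$ label pair and the $r\in[q]$ label pairs) for the $\ell'$-tuple sit relative to the interval $(t+(2\ell-1)q-1,\,t+2\ell q+2)$. Everything else reduces to routine inequality checks on the label formulas from the construction.
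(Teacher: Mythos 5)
Your proof follows essentially the same route as the paper's: pin down $P$'s occupation intervals $[1,\,t+(2\ell-1)q-1]$ at $c$ and $[t+2\ell q+2,\,4n^3]$ at $c'$, rule out $\ell'\neq\ell$ via a forced intersection at $c$ or $c'$ and $r'=0$ via an intersection at $w^{(\ell)}_0$, and then verify the $q$ surviving paths pairwise. Two small remarks. First, your blanket claims that \emph{every} label of an $\ell'$-tuple with $\ell'<\ell$ (resp.\ $\ell'>\ell$) lies at most $t+(2\ell-1)q-1$ (resp.\ at least $t+2\ell q+2$) are not literally true for small $q$ (e.g.\ the label $t+2\ell'q+2$ on $\{w^{(\ell')}_0,c'\}$ with $\ell'=\ell-1$ and $q\le 2$); however, your argument only uses the departure labels at $c$ (resp.\ the arrival labels at $c'$), for which the bounds do hold, so the conclusion stands. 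Second, your explicit final check $t+(2\ell-1)q+2r < t+2\ell q+2$ is equivalent to $2r<q+2$ and therefore fails for larger $r$ under the label formulas as literally written in the construction (the label $t+(2\ell-1)q+2q$ on $\{w^{(\ell)}_q,c'\}$ equals $t+2\ell q+q$); this is an arithmetic discrepancy in the gadget's labels rather than in your method, and the paper's own proof, which merely asserts that the check is ``straightforward,'' silently inherits the same issue.
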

\begin{proof}
If $P$ visits $w^{(\ell)}_0$ for some $\ell\in[p]$, then it occupies the entry vertex $c$ from time 1 to $t+(2\ell-1) q-1$ and $P$ occupies the exit vertex $c'$ from time $t+2\ell q+2$ to $4n^3$. Now let $P'_r$ be a temporal path that starts at $c$, then continues to $w^{(\ell')}_r$, and then arrives at $c'$2. Assume that $P$ does not temporally intersect $P'_r$. 
Note that if $\ell\neq \ell'$, then $P'_r$ temporally intersects $P$ in either $c$ or $c'$. Hence we can conclude that $\ell=\ell'$. Furthermore, if $r=0$, then $P'_r$ and $P$ temporally intersect in $w^{(\ell)}_0$. It follows that $r\neq 0$.
We have that $P'_r$ starts at $c$ (at the latest) at time $t+(2\ell-1) q+2r-1$ and arrives at $c'$ at time $t+(2\ell-1) q+2r$. It is straightforward to check that $P'_r$ and $P'_{r'}$ do not temporally intersect if and only if $r\neq r'$. The lemma follows.
\end{proof}

We now continue with the construction of the temporal graph $\TG$ and the set of source-sink pairs~$S$.
\begin{construction}\label{constr}
Let $G=(V_1\uplus V_2 \uplus\ldots\uplus V_k, E)$ be an instance of \textsc{Multicolored Clique}.
For each color $i$ we create one gadget $H(n,k-1,2(i-1)(kn+m)+2k)$ and add its source-sink pair to $S$. For each color combination~$i,j$ with $i<j$ we create one gadget $H(m,1,2(i-1)(kn+m)+2kn)$ and add its source-sink pair to $S$. 
For each color $i$ let the vertices in $V_i$ be ordered in a fixed but arbitrary way, that is, $V_i=\{v^{(i)}_1, v^{(i)}_2, \ldots, v^{(i)}_n\}$.
For each color combination $i,j$ with $i<j$ let the edges in $E_{i,j}$ be ordered in a fixed but arbitrary way, that is, $E_{i,j}=\{e^{(i,j)}_1, e^{(i,j)}_2, \ldots, e^{(i,j)}_m\}$.
We, additionally, add: two vertices $s_{i,j}$, $z_{i,j}$ and add source-sink pair~$(s_{i,j},z_{i,j})$ to $S$; $2m$ vertices $\{v_e,v_e'\mid e\in E_{i,j}\}$; and $2n$ vertices $\{u_\ell^{(i,j)}\mid v_\ell^{(i)}\in V_i\}\cup \{v_\ell^{(i,j)}\mid v_\ell^{(j)}\in V_j\}$.
We add the following edges.
\begin{itemize}
    \item We add an edge between $s_{i,j}$ and each vertex $u_\ell^{(i,j)}$ (with $v_\ell^{(i)}\in V_i$).
    \item We add an edge between each vertex $u_\ell^{(i,j)}$ (with $v_\ell^{(i)}\in V_i$) and the entry of the gadget corresponding to color~$i$.
    \item We add an edge between $z_{i,j}$ and each vertex $v_\ell^{(i,j)}$ with $v_\ell^{(j)}\in V_j$.
    \item We add an edge between each vertex $v_\ell^{(i,j)}$ (with $v_\ell^{(j)}\in V_j$) and the exit of the gadget corresponding to color~$j$.
    \item For every $e\in E_{i,j}$ we add the following edges.
    \begin{itemize}
        \item We add an edge between $v_e$ and the exit of the gadget corresponding to color $i$.
        \item We add an edge between $v_e$ and the entry of the gadget corresponding to color combination~$i,j$.
        \item We add an edge between $v_e'$ and the exit of the gadget corresponding to color combination~$i,j$.
        \item We add an edge between $v_e'$ and the entry of the gadget corresponding to color $j$.
    \end{itemize}
\end{itemize}

For each edge $e^{(i,j)}_\ell=\{v^{(i)}_{\ell'},v^{(j)}_{\ell''}\}\in E_{i,j}$ with $i<j$ we add the following time labels.
\begin{itemize}
    \item We add label $1$ to the edge between $s_{i,j}$ and $u^{(i,j)}_{\ell'}$.
    \item We add label $2(i-1)(kn+m)+2k+(2\ell'-1)(k-1)+2(j-1)-2$ to the edge between $u^{(i,j)}_{\ell'}$ and the entry of the gadget corresponding to color $i$.
    \item We add label $2(i-1)(kn+m)+2k+(2\ell'-1)(k-1)+2(j-1)+1$ to the edge between $v_e$ and the exit of the gadget corresponding to color $i$.
    \item We add label $2(i-1)(kn+m)+2kn+2\ell-1$ to the edge between $v_e$ and the entry of the gadget corresponding to color combination $i,j$.
    \item We add label $2(i-1)(kn+m)+2kn+2\ell+2$ to the edge between $v_e'$ and the exit of the gadget corresponding to color combination $i,j$.
    \item We add label $2(j-1)(kn+m)+2k+(2\ell''-1)(k-1)+2i-2$ to the edge between $v_e'$ and the entry of the gadget corresponding to color $i$.
    \item We add label $4n^3$ to the edge between $z_{i,j}$ and $v^{(i,j)}_{\ell''}$.
    \item We add label $2(j-1)(kn+m)+2k+(2\ell''-1)(k-1)+2i+1$ to the edge between $v^{(i,j)}_{\ell''}$ and the exit of the gadget corresponding to color $j$.
\end{itemize}
%
This finishes the construction of the \TempDisjointPaths\ instance $(\TGcompact,S)$. We give an illustration in \cref{fig:gadget2}.\cqed
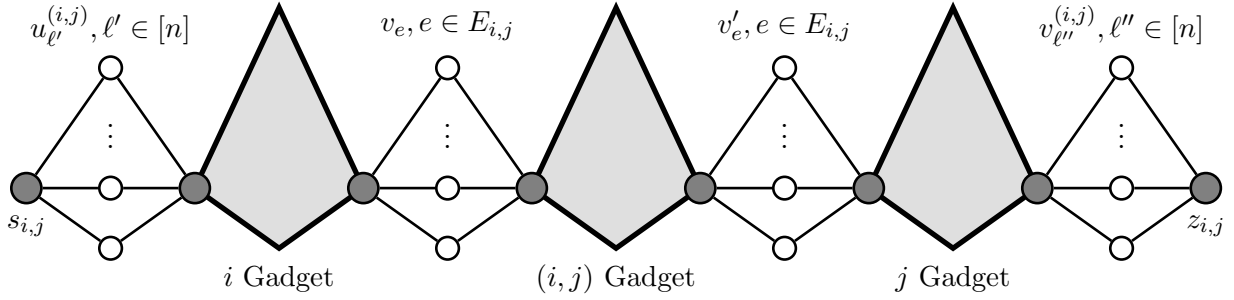
\begin{figure}[t]
\begin{center}
\begin{tikzpicture}[line width=1pt, scale=.8, xscale=1.4]
\draw[line width=2pt,fill=gray!25] (3,4) -- (4,7) -- (5,4) -- (4,3) -- (3,4);
\draw[line width=2pt,fill=gray!25] (-1,4) -- (0,7) -- (1,4) -- (0,3) -- (-1,4);
\draw[line width=2pt,fill=gray!25] (7,4) -- (8,7) -- (9,4) -- (8,3) -- (7,4);

    \node[vert2] (C1) at (3,4) {};
    \node[vert2] (C2) at (5,4) {};
    \node (L) at (4,2.5) {$(i,j)$ Gadget};

    \node[vert2] (C01) at (-1,4) {};
    \node[vert2] (C02) at (1,4) {};
    \node (L0) at (0,2.5) {$i$ Gadget};

    \node[vert2] (C11) at (7,4) {};
    \node[vert2] (C12) at (9,4) {};
    \node (L1) at (8,2.5) {$j$ Gadget};

    \node[vert] (E1) at (2,3) {};
    \node[vert] (E2) at (2,4) {};
    \node (D) at (2,5) {$\vdots$};
    \node[vert] (E3) at (2,6) {};
    \node (E4) at (2,6.7) {$v_e, e\in E_{i,j}$};

    \node[vert] (E21) at (6,3) {};
    \node[vert] (E22) at (6,4) {};
    \node (D2) at (6,5) {$\vdots$};
    \node[vert] (E23) at (6,6) {};
    \node (E24) at (6,6.7) {$v'_e, e\in E_{i,j}$};

    \node[vert] (S1) at (-2,3) {};
    \node[vert] (S2) at (-2,4) {};
    \node (D3) at (-2,5) {$\vdots$};
    \node[vert] (S3) at (-2,6) {};
    \node (S4) at (-2,6.7) {$u^{(i,j)}_{\ell'}, \ell'\in [n]$};
    \node[vert2,label=below:$s_{i,j}$] (S) at (-3,4) {}; 

    \node[vert] (T1) at (10,3) {};
    \node[vert] (T2) at (10,4) {};
    \node (D3) at (10,5) {$\vdots$};
    \node[vert] (T3) at (10,6) {};
    \node (T4) at (10,6.7) {$v^{(i,j)}_{\ell''}, \ell''\in [n]$};
    \node[vert2,label=below:$z_{i,j}$] (T) at (11,4) {}; 




\draw (S) -- (S1);
\draw (S) -- (S2);
\draw (S) -- (S3);
\draw (C01) -- (S1);
\draw (C01) -- (S2);
\draw (C01) -- (S3);

\draw (C02) -- (E1);
\draw (C02) -- (E2);
\draw (C02) -- (E3);
\draw (C1) -- (E1);
\draw (C1) -- (E2);
\draw (C1) -- (E3);

\draw (C2) -- (E21);
\draw (C2) -- (E22);
\draw (C2) -- (E23);
\draw (C11) -- (E21);
\draw (C11) -- (E22);
\draw (C11) -- (E23);

\draw (C12) -- (T1);
\draw (C12) -- (T2);
\draw (C12) -- (T3);
\draw (T) -- (T1);
\draw (T) -- (T2);
\draw (T) -- (T3);
\end{tikzpicture}
    \end{center}
    \caption{Illustration of vertices and edges added for color combination $i,j$. Notice that gray vertices form a vertex cover.}\label{fig:gadget2}
\end{figure}
\end{construction}
The temporal graph $\TG$ described in \cref{constr} can clearly be computed in polynomial time.
Note that $|S|\in O(k^2)$. Furthermore, let $X$ be the set of all vertices that are entry or exit to some gadget. Note that $X\cup S$ forms a vertex cover for the underlying graph of $\TG$ and we have that $|X\cup S|\in O(k^2)$. 

Before we prove the correctness of the reduction, we show the following two properties of the constructed temporal graph $\TG$. Intuitively, we show that once a temporal path enters a gadget, it has to traverse the body of the gadget and then exit the gadget.
\begin{lemma}\label{claim3}
Let $P$ be a temporal path in $\TG$ that visits the entry vertex $c$ of some gadget $H(p,q,t)$ present in $\TG$ and afterwards at least two more vertices. Then after $c$, the path $P$ visits a vertex $w^{(\ell)}_r$ for some $\ell\in\{0,1,\ldots,p\}$ and $r\in [q]$ of the body of the gadget, and then the exit vertex $c'$ of the gadget.
\end{lemma}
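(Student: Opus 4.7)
The plan is to do a case analysis on the first vertex $v_1$ that $P$ visits immediately after $c$. By \cref{constr}, $c$ has exactly three kinds of neighbors in $\TG$: the gadget-internal vertex $s$, the body vertices $w^{(\ell)}_r$, and external vertices (namely, $u^{(i,j)}_{\ell'}$ or $v_e'$ when $c$ is the entry of a color gadget, and $v_e$ when $c$ is the entry of a color-combination gadget). I would rule out the first and third options and then show that the second forces an immediate step to $c'$.

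First, I would rule out $v_1=s$. The edge $\{s,c\}$ carries label $1$, so $P$ can traverse it from $c$ to $s$ only if $c$ is the first vertex of $P$. But then $s$'s only neighbor in $\TG$ is $c$ via that same label-$1$ edge, which cannot be reused since temporal paths strictly increase in time. Thus $P$ terminates at $s$, visiting only one vertex after $c$ and contradicting the hypothesis. Next, I would rule out $v_1$ being any external vertex. Each such vertex has degree exactly two in $\TG$, with one incident edge being the $\{c,v_1\}$ used to arrive at $v_1$. For $v_1=u^{(i,j)}_{\ell'}$, the other edge (to $s_{i,j}$) carries label $1$, hence strictly smaller than the arrival time at $v_1$. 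For $v_1\in\{v_e',v_e\}$, the other edge's label is compared to that of $\{c,v_1\}$ by direct arithmetic from \cref{constr}: the comparison is dominated by the $2(i-1)(kn+m)$ offsets prescribed for the two neighboring gadget entries (or the analogous $2kn$ term in the color-combination entry labels), and these offsets are arranged so that the label used to arrive at $v_1$ is strictly larger than the label on its other incident edge. In every such case $P$ must terminate at $v_1$, again contradicting the hypothesis.

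The only remaining possibility is $v_1 = w^{(\ell)}_r$ for some body tuple. In $\TG$, a body vertex has only two neighbors, $c$ and $c'$; since the edge $\{c,w^{(\ell)}_r\}$ has just been used, the only way for $P$ to extend past $w^{(\ell)}_r$ is to step to $c'$. The hypothesis that $P$ visits at least two vertices after $c$ forces this extension to actually occur. Temporal feasibility of $w^{(\ell)}_r\to c'$ is immediate from the gadget definition: the label on $\{w^{(\ell)}_r,c'\}$ exceeds that of $\{c,w^{(\ell)}_r\}$ by $1$ when $r\ge 1$ and by $q+3$ when $r=0$.

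The main obstacle will be the bookkeeping in the external case, since one must enumerate every external edge incident to each type of entry specified in \cref{constr} and verify a concrete numeric inequality on its label. The $s$-case and the body-case, by contrast, reduce at once to the fact that the vertices involved have very small degree and that time labels must strictly increase along a temporal path.
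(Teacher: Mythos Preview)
Your approach is essentially the same as the paper's: both argue by case analysis on the vertex visited right after $c$, rule out the external (degree-two) neighbors by observing that their other incident edge carries a strictly smaller time label, and then use that body vertices have only $c$ and $c'$ as neighbors. You are in fact slightly more careful than the paper, since you explicitly dispose of the case $v_1=s$ (which the paper's proof silently skips) and you correctly list the external neighbors of each type of entry vertex.
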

\begin{proof}
Let $P$ be a temporal path in $\TG$. 
We show the lemma assuming that $P$ visits the entry vertex $c$ of the gadget corresponding to color $i$ and afterwards at least two more vertices. 
By an analogous argument we can also show that the claim holds if $P$ visits the entry vertex of a gadget corresponding to a color combination $i,j$.

Assume for contradiction that $P$ does not visit a vertex of the body of the gadget. Then by construction it is either a vertex $v_\ell^{(i,j)}$ for some $j>i$, or some $v'_e$ for some $e\in E_{j,i}$ with $j<i$. Note that both $v_\ell^{(i,j)}$ and $v'_e$ have degree two in the underlying graph, hence there is only one possible way for a path to continue. 
In the first case where $P$ visits $v_\ell^{(i,j)}$ for some $j>i$, the path would need to continue to $s_{i,j}$. However, the edge $\{v_\ell^{(i,j)},s_{i,j}\}$ has a smaller label than the edge $\{c,v_\ell^{(i,j)}\}$. Hence we can rule out that $P_{s,z}$ visits $v_\ell^{(i,j)}$.
In the second case where $P$ visits $v'_e$ for some $e\in E_{j,i}$ with $j<i$, the path would need to continue to the exit of the gadget corresponding to the color combination $j,i$. Again, we have that the second edge that the path would have to use has a smaller label than the first one. Hence we can also rule out that $P$ visits $v'_e$. We can conclude that the vertex visited by $P$ directly after $c$ is some vertex of the body of the gadget. All vertices in the body of the gadget have degree two, hence we have that the next vertex visited by $P$ is the exit vertex $c'$.
\end{proof}

\begin{lemma}\label{claim4}
Let $P$ be a temporal path in $\TG$ that visits the exit vertex $c'$ of some gadget $H(p,q,t)$ present in $\TG$ and before at least two more vertices. Then before $c'$, the path $P$ visited a vertex $w^{(\ell)}_r$ for some $\ell\in\{0,1,\ldots,p\}$ and $r\in [q]$ of the body of the gadget, and before that it visited the entry vertex $c$ of the gadget.
\end{lemma}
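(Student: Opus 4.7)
The plan is to mirror the case analysis in the proof of \cref{claim3}, but running backwards from $c'$. Write $P=v_0,v_1,\dots,v_i$ with $v_i=c'$ and $i\ge 2$. My goal is to show that $v_{i-1}$ must be a body vertex $w^{(\ell)}_r$; once this is established the lemma follows immediately, because such a body vertex has only $c$ and $c'$ as neighbors in the underlying graph and a temporal path cannot revisit $v_i=c'$, forcing $v_{i-2}=c$.

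To identify $v_{i-1}$, I would enumerate the non-body neighbors of $c'$ using \cref{constr}: the gadget's sink $z$, plus certain ``connector'' vertices attached to $c'$. For a color-$i$ vertex-selection gadget these connectors are the vertices $v_e$ (with $e\in E_{i,j}$, $j>i$) and the vertices $v^{(a,i)}_\ell$ (with $a<i$); for an edge-selection gadget for $(i,j)$ they are the vertices $v'_e$. Each connector has degree two in the underlying graph, so if $v_{i-1}$ is a connector then its unique other neighbor determines $v_{i-2}$ (unless $v_{i-1}=v_0$ and $i=1$, contradicting $i\ge 2$).

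The three non-body cases are then dispatched as follows. If $v_{i-1}=z$, the only neighbor of $z$ is $c'$, forcing $v_{i-2}=c'=v_i$, which violates the distinct-vertex condition for temporal paths. If $v_{i-1}\in\{v_e,v'_e\}$, its other neighbor is the entry $c^*$ of a neighboring gadget, so $v_{i-2}=c^*$; applying \cref{claim3} to $P$ at $c^*$ (whose hypothesis holds since $v_{i-1}$ and $v_i$ lie strictly after $c^*$ in $P$) forces $v_{i-1}$ to be a body vertex of that neighboring gadget, contradicting that it is a connector. Finally, if $v_{i-1}=v^{(a,i)}_\ell$ then $v_{i-2}=z_{a,i}$, and \cref{constr} gives the edge $\{z_{a,i},v^{(a,i)}_\ell\}$ the time label $4n^3$, which is the maximum label used anywhere in $\TG$; hence no later label is available on the subsequent edge $\{v^{(a,i)}_\ell,c'\}$, a contradiction.

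The main obstacle is the mild bookkeeping of which connector types attach to which kind of gadget's exit. The key conceptual move, which avoids a direct label-by-label analysis in the spirit of the proof of \cref{claim3}, is to invoke \cref{claim3} itself as a black box at the entry $c^*$ of the neighboring gadget; this forces any candidate connector in position $v_{i-1}$ to be simultaneously a body vertex of another gadget, yielding the contradiction.
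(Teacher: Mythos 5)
Your proof is correct, and it fills in details that the paper itself omits: the paper's entire proof of this lemma is the sentence ``this can be proven in an analogous way as \cref{claim3}.'' Your case enumeration of the non-body neighbours of $c'$ is complete (the gadget sink $z$; the connectors $v_e$ and $v^{(a,i)}_\ell$ at the exit of a colour gadget; the connectors $v'_e$ at the exit of a colour-combination gadget), the degree-two observations are accurate, and the hypothesis ``at least two more vertices before $c'$'' is used exactly where needed to guarantee that $v_{i-2}$ exists. The one place where you genuinely deviate from the ``mirror the label arithmetic in reverse'' route that the paper's one-line proof suggests is the connector case $v_{i-1}\in\{v_e,v'_e\}$: instead of comparing the two time labels on the degree-two connector (as the proof of \cref{claim3} does for its connectors), you observe that $v_{i-2}$ is forced to be the entry $c^\star$ of the adjacent gadget and then invoke \cref{claim3} at $c^\star$ to conclude that the vertex following $c^\star$ must be a body vertex of \emph{that} gadget, contradicting that it is a connector. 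This is a clean shortcut: it avoids re-deriving inequalities between labels of the form $2(i-1)(kn+m)+2kn+2\ell-1$ and $2(i-1)(kn+m)+2k+(2\ell'-1)(k-1)+2(j-1)+1$ (which depend on the relative sizes of $k$, $n$, $j$, $\ell$, $\ell'$ and are not entirely trivial to verify in the reversed direction), and it leaves only two label facts to check directly, namely that $z$ has $c'$ as its sole neighbour and that the edge $\{z_{a,i},v^{(a,i)}_\ell\}$ carries the maximum label $4n^3$. Since \cref{claim3} is established independently of the present lemma, there is no circularity, and your argument is a valid --- arguably more robust --- instantiation of the ``analogous'' proof the authors had in mind.
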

\begin{proof}
    This lemma can be proven in an analogous way as \cref{claim3}.
\end{proof}

Assuming $(\TGcompact, S)$ is a yes-instance of \TempDisjointPaths, we use \cref{claim3,claim4} to show an important property of temporal paths in a solution $\mathcal{P}$ to $(\TGcompact, S)$ that connect source-sink pairs $(s_{i,j},z_{i,j})$ with $1\le i<j\le k$. Intuitively, here we show that these temporal path traverse the gadgets as illustrated in \cref{fig:gadget2}.

\begin{lemma}\label{claim5}
Assume $(\TGcompact, S)$ is a yes-instance of \TempDisjointPaths\ and let $\mathcal{P}$ be a solution. 
Let $1\le i<j\le k$ and let $P\in \mathcal{P}$ be a temporal path in $\TG$ from $s_{i,j}$ to $z_{i,j}$. Then $P$ visits a vertex in the body of each of the following gadgets.
\begin{itemize}
    \item The gadget corresponding to color $i$.
    \item The gadget corresponding to color $j$.
    \item The gadget corresponding to color combination $i,j$. 
\end{itemize}
\end{lemma}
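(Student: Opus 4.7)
The plan is to combine a forward trace from $s_{i,j}$ and a backward trace from $z_{i,j}$ through the underlying graph with a capacity argument based on \cref{claim1,claim2}. The two traces will directly place $P$ inside the color $i$ and color $j$ gadgets via \cref{claim3,claim4}, and the capacity argument will then rule out $P$ taking a detour through any color $i_k$ gadget with $i_k\notin\{i,j\}$, which forces the intermediate gadget it must traverse to be exactly the color combination $(i,j)$ gadget.

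For the color $i$ gadget I would observe that in the underlying graph the only neighbors of $s_{i,j}$ are the vertices $u^{(i,j)}_\ell$, each of which has only two neighbors, namely $s_{i,j}$ and the entry of the color $i$ gadget. Since $P$ must reach $z_{i,j}\neq s_{i,j}$, it begins $s_{i,j}\to u^{(i,j)}_{\ell'}\to c$, where $c$ is the entry of the color $i$ gadget. Because $P$ still has further vertices to visit, \cref{claim3} forces $P$ to continue through a body vertex of that gadget to its exit. A symmetric argument applied to the last two edges of $P$, using that $z_{i,j}$'s only neighbors are the degree-two vertices $v^{(i,j)}_\ell$ (which are themselves adjacent only to $z_{i,j}$ and to the exit of the color $j$ gadget), together with \cref{claim4}, shows that $P$ visits a body vertex of the color $j$ gadget.

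For the color combination $(i,j)$ gadget I would exploit that every intermediary vertex in \cref{constr} (that is, every $v_e$, $v'_e$, and every $u$- or $v$-type attachment of a source-sink pair) has degree two in the underlying graph, so iteratively applying \cref{claim3,claim4} lets me trace $P$ as a sequence of gadgets alternating between color gadgets and color-combination gadgets, beginning with color $i$ and ending with color $j$. Suppose for contradiction that this sequence contains an intermediate color $i_k$ with $i_k\notin\{i,j\}$, so that $P$ traverses the body of the color $i_k$ gadget. The source-sink path of that gadget in the solution $\mathcal{P}$ must, by \cref{claim1}, visit some $w^{(\ell_{i_k})}_0$, and then \cref{claim2} bounds the number of other pairwise temporally disjoint temporal paths traversing the gadget by $q=k-1$. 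However, the same forward/backward trace applied to each of the $k-1$ pairs $(s_{i'',i_k},z_{i'',i_k})$ with $i''<i_k$ and $(s_{i_k,j''},z_{i_k,j''})$ with $j''>i_k$ shows that each of their paths must also traverse the color $i_k$ gadget, already saturating the $k-1$ available slots and contradicting temporal disjointness once the traversal by $P$ is added. Consequently the gadget sequence of $P$ reduces to color $i$, one color-combination gadget, color $j$, and this color-combination gadget must be the one for $(i,j)$; a final application of \cref{claim3} places $P$ on a body vertex of it. The main obstacle will be the capacity argument, specifically verifying that each of the $k-1$ ``natural'' pairs for color $i_k$ does traverse the color $i_k$ gadget, but this is the symmetric reuse of the arguments already made for the color $i$ and color $j$ gadgets.
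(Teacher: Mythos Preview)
Your first two steps, the forward trace from $s_{i,j}$ through the color~$i$ gadget and the backward trace from $z_{i,j}$ through the color~$j$ gadget, are exactly what the paper does, invoking \cref{claim3} and \cref{claim4} in the same way.

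Your third step is correct but takes a genuinely different route. The paper argues via the \emph{color-combination} gadgets: it observes that after exiting the color~$i$ gadget $P$ must enter some color-combination gadget, and then runs a global pigeonhole---each of the $\binom{k}{2}$ paths $P_{i',j'}$ visits at least one of the $\binom{k}{2}$ color-combination gadgets, so if any such path visited two of them, some color-combination gadget would be traversed by two $P$-type paths plus its own source-sink path, contradicting the bound $q=1$ from \cref{claim2}. You instead argue via the \emph{color} gadgets: if $P$ entered an intermediate color~$i_k$ gadget with $i_k\notin\{i,j\}$, then together with the $k-1$ paths $P_{i'',i_k}$ and $P_{i_k,j''}$ (which you correctly note must traverse that gadget by the same trace argument) you would have $k$ entry-to-exit paths, exceeding the bound $q=k-1$ from \cref{claim2}. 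Both arguments are valid; the paper's is a touch more economical since it does not need to re-invoke the trace for the other solution paths, while yours has the minor advantage of directly ruling out intermediate color gadgets rather than counting color-combination visits. One small point you should make explicit when writing this out: after the exit vertex of the color~$i$ gadget, you need to exclude the branch toward the vertices $v_\ell^{(i',i)}$ (which would lead to a dead end at $z_{i',i}$), just as the paper does, before concluding that $P$ proceeds into a color-combination gadget.
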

\begin{proof}
By construction of $\TG$, the temporal path $P$ first visits some vertex $v^{(i,j)}_{\ell'}$ and then visits the entry vertex $c_i$ of the gadget corresponding to color $i$ (see \cref{fig:gadget2}). Since $z_{i,j}$ is not directly connected to $c_i$, we have that $P$ visits at least two more vertices and by \cref{claim3} we have that it visits some vertex in the body of the gadget corresponding to color $i$ and then the exit vertex $c'_i$ of the gadget.

Similarly, path $P$ visits some vertex $v^{(i,j)}_{\ell''}$, before arriving at vertex $z_{i,j}$. Before visiting $v^{(i,j)}_{\ell''}$, path $P$ visited the exit vertex $c'_j$ for the gadget corresponding to color $j$. Since $s_{i,j}$ is not directly connected to $c'_j$, we have that $P$ visited at least two more vertices before and by \cref{claim4} we have that it visited some vertex in the body of the gadget corresponding to color $j$ and before that it visited the entry vertex $c_j$ of the gadget.

The exit vertex $c'_i$ of the gadget corresponding to color $i$ is connected to vertices $v_e$ for some $e\in E_{i,i'}$ with $i<i'$ and to vertices $v_{\ell}^{(i',i)}$ for some $i'<i$. We can exclude that $P$ visits a vertex of the latter kind, since those are afterwards only connected to vertices $z_{i',i}$ from which (by construction of $\TG$) $P$ would not be able to continue.
Hence, we can assume that $P$ visits a vertex $v_e$ for some $e\in E_{i,i'}$ with $i<i'$ after the exit vertex $c'_i$ of the gadget corresponding to color $i$. From there, path $P$ has to continue to the entry vertex $c_{i,j}$ of the gadget corresponding to color combination $i,i'$ and since $z_{i,i'}$ is not connected to $c_{i,i'}$, we have that $P$ visits at least two more vertices and by \cref{claim3} we have that it visits some vertex in the body of the gadget corresponding to color combination $i,i'$ and then the exit vertex of the gadget.

Now consider the gadget corresponding to color combination $i,i'$. By \cref{claim2} we have that at most two paths in $\mathcal{P}$ can traverse the gadget, one of which is the one corresponding to the source-sink pair of the gadget. Above we have shown that every path $P\in \mathcal{P}$ from $s_{i,j}$ to $z_{i,j}$ with $1\le i<j\le k$ visits at least one gadget corresponding to a color combination. If path $P$ visits two or more different color combination gadgets, then there is one color combination gadget that is visited in total by at least three paths. By \cref{claim2} this is a contradiction to all paths in $\mathcal{P}$ being pairwise temporally non-intersecting.
\end{proof}

Now we are ready to prove \cref{thm:w1hardnessVCN}.
\begin{proof}[Proof of \cref{thm:w1hardnessVCN}]
Let $(\TGcompact, S)$ be the \TempDisjointPaths\ instance described by \cref{constr} for \textsc{Multicolored Clique} instance $G=(V_1\uplus\ldots\uplus V_k, E)$. 
Recall that $(\TGcompact, S)$ can be computed in polynomial time and the vertex cover number of the underlying graph of $\TG$ is in $O(k^2)$. 
We show that $(\TGcompact, S)$ is a yes-instance of \TempDisjointPaths\ if and only if $G$ contains a clique of size $k$.

$(\Rightarrow)$: Assume $(\TGcompact, S)$ is a yes-instance of \TempDisjointPaths\ and let $\mathcal{P}$ be a solution. 
Note that in the constructed instance $(\TG, S)$ the multiset $S$ does not contain multiple copies of the same source-sink pair, that is, it is actually a set. Hence, for every $(s,z)\in S$ we use $P_{s,z}\in \mathcal{P}$ to denote the temporal path from $s$ to $z$ in $\mathcal{P}$. 

Consider the gadget for color $i$. Let $(s_i,z_i)$ be the source-sink pair of the gadget. 
Since $s_i$ is only connected to the entry vertex $c_i$ of the gadget and $z_i$ is not connected to $c_i$, we have that $P_{s_i,z_i}$ visits the entry vertex $c_i$ of the gadget and then at least two more vertices. By \cref{claim3} we have that $P_{s_i,z_i}$ visits a vertex of the body of the gadget and then the exit vertex $c'_i$. Afterwards $P_{s_i,z_i}$ needs to continue to $z_i$, otherwise the path would have to revisit $c'_i$.

We can conclude that $P_{s_i,z_i}$ has the following form: starting at $s_i$ it first visits the entry $c_i$ of the gadget, then it visits exactly one vertex of the body of the gadget, then it visits the exit $c'_i$ of the gadget, and finally it reaches $z_i$.

Let $w^{(\ell)}_r$ with $\ell\in[n]$ and $r\in\{0,1,\ldots,k-1\}$ be the vertex in the body of the gadget that is visited by $P_{s_i,z_i}$. We say that $w^{(\ell)}_r$ is \emph{selected} in the gadget corresponding to color $i$.

Let $X=\{v_\ell\in V_i\mid i\in[k] \wedge w^{(\ell)}_r \text{ is selected in the gadget corresponding to color } i \text{ for some } r\in\{0,1,\ldots,k-1\}\}$. 
We show that $X$ is a clique in $G$. 
Assume for contradiction that $X$ is not a clique in $G$. Then there are vertices $v_\ell\in X\cap V_i$ and $v_{\ell'}\in X\cap V_j$ for some $i,j\in [k]$ with $i\neq j$ such that $\{v_{\ell'},v_{\ell''}\}\notin E_{i,j}$.

Consider the source-sink pair $(s_{i,j},z_{i,j})$ and let $P_{i,j}$ denote the temporal path in $\mathcal{P}$ from $s_{i,j}$ to~$z_{i,j}$. By \cref{claim5} we know that $P_{i,j}$ first traverses the gadget corresponding to color $i$, then the gadget corresponding to color combination $i,j$, and lastly the gadget corresponding to color $j$.


By construction of $\TG$, there exists an $e\in E_{i,j}$ such that $P_{i,j}$ visits $v_e$ right before entering the gadget corresponding to color combination $i,j$ and $P_{i,j}$ visits $v_e'$ right after exiting the gadget corresponding to color combination $i,j$. In any other case, by \cref{claim2} we have that $P_{i,j}$ would temporally intersect with the temporal path in $\mathcal{P}$ corresponding to the source-sink pair of the gadget.

Now by the construction of $\TG$ and \cref{claim2} we have that $v_{\ell'}$ and $v_{\ell''}$ are the two endpoints of edge $e$, otherwise $P_{i,j}$ would temporally intersect with the temporal path in $\mathcal{P}$ corresponding to the source-sink pair of one or both of the gadgets corresponding to colors $i$ and $j$. Hence, we have that $\{v_{\ell'},v_{\ell''}\}\in E_{i,j}$, a contradiction.

$(\Leftarrow)$: Assume $G$ is a yes-instance of \textsc{Multicolored Clique} and let $X\subseteq\bigcup_{i=1}^k V_i$ with $|X\cap V_i|=1$ for all $i\in [k]$ be a multicolored clique in $G$. 

We use the following temporal paths to connect source-sink pairs of gadgets corresponding to colors. Let $i\in[k]$ and let $(s_i,z_i)$ be the source-sink pair of the gadget corresponding to color $i$. Let $\{v_\ell^{(i)}\}=X\cap V_i$. Let $c_i$ denote the entry vertex and $c_i'$ the exit vertex of the gadget corresponding to color $i$. We use the path starting at $s_i$, then continuing to $c_i$, then to $w_0^{(\ell)}$, then to $c_i'$, and finally arriving at $z_i$. Note that since every edge in $\TG$ has exactly one time label, this uniquely specifies the temporal path. By construction of the gadgets, the time edges used by the path have strictly increasing time labels.
It is easy to see that temporal paths connecting the source-sink pairs of gadgets corresponding to different colors do not temporally intersect, since they do not visit common vertices.

We use the following temporal paths to connect source-sink pairs of gadgets corresponding to color combinations. Let $1\le i<j\le k$ and let $(s'_{i,j},z'_{i,j})$ be the source-sink pair of the gadget corresponding to color combination $i,j$. Let $\{v_{\ell'}^{(i)}\}=X\cap V_i$, let $\{v_{\ell''}^{(j)}\}=X\cap V_j$, and let $e_\ell^{(i,j)}=\{v_{\ell'}^{(i)},v_{\ell''}^{(j)}\}\in E_{i,j}$.
Let $c_{i,j}$ denote the entry vertex and $c_{i,j}'$ the exit vertex of the gadget corresponding to color combination $i,j$. We use the path starting at $s'_{i,j}$, then continuing to $c_{i,j}$, then to $w_0^{(\ell)}$, then to $c_{i,j}'$, and finally arriving at $z'_{i,j}$. Note that since every edge in $\TG$ has exactly one time label, this uniquely specifies the temporal path. By construction of the gadgets, the time edges used by the path have strictly increasing time labels.
It is easy to see that temporal paths connecting the source-sink pairs of gadgets corresponding to different colors combinations do not temporally intersect, since they do not visit common vertices.

We can further observe that temporal paths connecting source-sink pairs of gadgets corresponding to colors do not temporally intersect temporal paths connecting source-sink pairs of gadgets corresponding to color combinations, since they do not visit common vertices.

Next, we describe how to connect source-sink pairs $(s_{i,j},z_{i,j})$ with $1\le i<j\le k$ with a temporal path $P_{i,j}$. Again, let $\{v_{\ell'}^{(i)}\}=X\cap V_i$, let $\{v_{\ell''}^{(j)}\}=X\cap V_j$, and let $e_\ell^{(i,j)}=\{v_{\ell'}^{(i)},v_{\ell''}^{(j)}\}\in E_{i,j}$.
See \cref{fig:gadget2} for an illustration of the gadgets and vertices traversed and visited by $P_{i,j}$. Starting at $s_{i,j}$, the path $P_{i,j}$ continues to $v_{\ell'}^{(i,j)}$, then it traverses the gadget corresponding to color $i$ in the following way: it continues to the entry vertex of the gadget, then to the vertex $w_{j-1}^{(\ell')}$ in the body of the gadget, and then to the exit vertex of the gadget. After the gadget corresponding to color $i$, the path continues to vertex $v_e$. Then it traverses the gadget corresponding to color combination $i,j$ in the following way: it continues to the entry vertex of the gadget, then to the vertex $w_1^{(\ell)}$ in the body of the gadget, and then to the exit vertex of the gadget. After the gadget corresponding to color combination $i,j$, the path continues to vertex $v_e'$. Then it traverses the gadget corresponding to color $j$ in the following way: it continues to the entry vertex of the gadget, then to the vertex $w_i^{(\ell'')}$ in the body of the gadget, and then to the exit vertex of the gadget. Finally, the path continues to $v_{\ell''}^{(i,j)}$ and then to $z_{i,j}$.

First, we argue that $P_{i,j}$ is indeed a temporal path, that is, the labels on the time edges used by the path are strictly increasing. The path traverses the following time edges:
\begin{enumerate}
    \item $\{s_{i,j},v_{\ell'}^{(i,j)}\}$ with label one.
    \item $\{v_{\ell'}^{(i,j)}, c_i\}$, where $c_i$ denotes the entry vertex of the gadget corresponding to color $i$, with label $2(i-1)(kn+m)+2k+(2\ell'-1)(k-1)+2(j-1)-2$.
    \item $\{c_i, w_{j-1}^{(\ell')}\}$ in the body of the gadget corresponding to color $i$, with label $2(i-1)(kn+m)+2k+(2\ell'-1) (k-1)+2(j-1)-1$.
    \item $\{w_j^{(\ell')},c'_i\}$ in the body of the gadget corresponding to color $i$, with label $2(i-1)(kn+m)+2k+(2\ell'-1) (k-1)+2(j-1)$.
    \item $\{c_i',v_e\}$, where $c'_i$ denotes the exit vertex of the gadget corresponding to color $i$, with label $2(i-1)(kn+m)+2k+(2\ell'-1)(k-1)+2(j-1)+1$.
    \item $\{v_e, c_{i,j}\}$, where $c_{i,j}$ denotes the entry vertex of the gadget corresponding to color combination $i,j$, with label $2(i-1)(kn+m)+2kn+2\ell-1$.
    \item $\{c_{i,j},w_1^{(\ell)}\}$ in the body of the gadget corresponding to color combination $i,j$, with label $2(i-1)(kn+m)+2kn+2\ell$.
    \item $\{w_1^{(\ell)},c_{i,j}\}$ in the body of the gadget corresponding to color combination $i,j$, with label $2(i-1)(kn+m)+2kn+2\ell+1$.
    \item $\{c_{i,j}',v_e'\}$, where $c'_{i,j}$ denotes the exit vertex of the gadget corresponding to color combination $i,j$, with label $2(i-1)(kn+m)+2kn+2\ell+2$.
    \item $\{v_e', c_j\}$, where $c_j$ denotes the entry vertex of the gadget corresponding to color $j$, with label $2(j-1)(kn+m)+2k+(2\ell''-1)(k-1)+2i-2$.
    \item $\{c_j, w_i^{(\ell'')}\}$ in the body of the gadget corresponding to color $j$, with label $2(j-1)(kn+m)+2k+(2\ell''-1) (k-1)+2i-1$.
    \item $\{w_i^{(\ell'')},c'_j\}$ in the body of the gadget corresponding to color $j$, with label $2(j-1)(kn+m)+2k+(2\ell''-1) (k-1)+2i$.
    \item $\{c_j',v_{\ell''}^{(i,j)}\}$, where $c'_j$ denotes the exit vertex of the gadget corresponding to color $j$, with label $2(j-1)(kn+m)+2k+(2\ell''-1)(k-1)+2i+1$.
    \item $\{v_{\ell''}^{(i,j)},z_{i,j}\}$ with label $4n^3$.
\end{enumerate}
In most cases it is obvious that the labels are strictly increasing. For the 5th and 6th label, note that we assume $k<n$. For the 9th and 10th label, note that we have that $i<j$.

Next, we show that $P_{i,j}$ does not temporally intersect other temporal paths we constructed. Note that $P_{i,j}$ only shares vertices with temporal paths that also traverse the gadget corresponding to color $i$, the gadget corresponding to color $j$, and temporal paths that also traverse the gadget corresponding to color combination $i,j$.
Temporal paths that traverse the gadget corresponding to color $i$ are temporal paths $P_{i,j'}$ for some $i<j'\neq j$, temporal paths $P_{i',i}$ for some $i'<i$, and the temporal path connecting the source-sink pair of the gadget corresponding to colors $i$. 
Similarly, temporal paths that traverse the gadget corresponding to color $i$ are temporal paths $P_{i',j}$ for some $i\neq i'<j$, temporal paths $P_{j,j'}$ for some $j<j'$, and the temporal path connecting the source-sink pair of the gadget corresponding to colors $j$.
Lastly, the only other temporal path traversing the gadget corresponding to color combination $i,j$ is the one connecting the source-sink pair of the gadget.

Consider the gadget corresponding to color $i$. The only vertices shared by the constructed temporal paths are the entry vertex $c_i$ and the exit vertex $c_i'$ of the gadget.
The temporal path $P_i$ connecting the source-sink pair of the gadget visits $c_i$ from time one to time $2(i-1)(kn+m)+2k+(2\ell'-1) (k-1)-1$. Temporal paths $P_{i,j'}$ with $i<j'$ visit $c_i$ from time $2(i-1)(kn+m)+2k+(2\ell'-1) (k-1)+2(j'-1)-2$ to time $2(i-1)(kn+m)+2k+(2\ell'-1) (k-1)+2(j'-1)-1$. Hence, they do not temporally intersect $P_i$ (since $j'>1$) in $c_i$ and they do not temporally intersect each other in $c_i$.
Similarly, temporal paths $P_{i',i}$ with $i'<i$ visit $c_i$ from time $2(i-1)(kn+m)+2k+(2\ell'-1) (k-1)+2i'-2$ to time $2(i-1)(kn+m)+2k+(2\ell'-1) (k-1)+2i'-1$. Hence, they also do not temporally intersect $P_i$ in $c_i$ and they do not temporally intersect each other in $c_i$. Furthermore, temporal paths of the form $P_{i,j'}$ do not temporally intersect temporal paths of the form $P_{i',i}$ in $c_i$, since $j'>i$ and $i'<i$. By inspecting the arrival and departure times of the constructed temporal paths at exit vertex $c'_i$ of the gadget, we can conclude in an analogous way that the paths do not temporally intersect in $c_i'$.

By analogous arguments, we can show that none of the constructed temporal paths temporally intersect $P_{i,j}$ in the gadget corresponding to color $j$.

Lastly, consider the gadget corresponding to color combination $i,j$. The only temporal path that $P_{i,j}$ shares vertices with of that gadget is the temporal path $P$ connecting the source-sink pair of the gadget. Again, the two temporal paths both visit the entry vertex $c$ and the exit vertex $c'$ of the gadget. Path $P$ visits $c$ from time 1 to time $2(i-1)(kn+m)+2kn+2\ell-2$. The temporal path $P_{i,j}$ visits $c$ from time $2(i-1)(kn+m)+2kn+2\ell-1$ to time $2(i-1)(kn+m)+2kn+2\ell$. Hence, the two temporal paths do not temporally intersect in $c$. By inspecting the arrival and departure times of the constructed temporal paths at exit vertex $c'$ of the gadget, we can conclude in an analogous way that the paths do not temporally intersect in $c'$.

It follows that the constructed temporal paths pairwise do not temporally intersect. This concludes the proof.
\end{proof}

\subsection{Parameterized Hardness of \TempDisjointWalks}\label{sec:hardnesswalks}
\citet{KlobasMMNZ23} left the parameterized complexity of \tdw with respect to the number $|S|$ of source-sink pairs on temporal trees as an open question.
In this section, we answer this question by showing that the problem is \W{1}-hard for this parameterization, even on temporal stars.
This may be somewhat surprising considering that \citet{KlobasMMNZ23} showed that \tdp is \fpt on trees.
This implies that while \tdp is harder than \tdw on arbitrary graphs for $|S|$ as the parameter (the former is \NP-hard for $|S|=2$, while the latter is solvable in polynomial time for constant $|S|$), \tdw is harder than \tdp on temporal trees.

\begin{theorem}
	\label{thm:k-star}
	\tdw{} on temporal stars is \W{1}-hard when parameterized by the number $|S|$ of source-sink pairs.
\end{theorem}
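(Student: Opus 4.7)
The natural approach is a parameterized reduction from \textsc{Multicolored Clique} parameterized by the number $k$ of colors (as in \cref{thm:w1hardnessVCN}), but now with the added constraint that the underlying graph must be a star and with a number of source-sink pairs bounded solely by a function of $k$. Given $G=(V_1 \uplus \ldots \uplus V_k, E)$ with $|V_i|=n$, I would construct a temporal star $(\TG, S)$ with center $c$ and leaves that include, for each color $i$, a vertex-selection source-sink pair $(s_i, z_i)$, and for each unordered pair $i<j$, an edge-verification source-sink pair $(s_{i,j}, z_{i,j})$, giving $|S|=k+\binom{k}{2}\in O(k^2)$.

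The key design principle mirrors the roles that the vertex and edge gadgets $H(p,q,t)$ played in \cref{thm:w1hardnessVCN}, but now all gadget logic must be encoded purely in the temporal schedule at $c$. I would partition the lifetime into $k$ color phases; inside the $i$-th phase, I would create $n$ disjoint ``vertex-slots'' (one per $v\in V_i$), each being a short time window in which only an edge $\{c,\ell_v\}$ (and its return) is present. The walk for $(s_i, z_i)$ is forced to dwell at $c$ for the bulk of its life, except that it \emph{must} step out to some leaf $\ell_v$ during exactly one of its $n$ vertex-slots; this is enforced by having the edge-verification walks compete for $c$ during every other part of the phase, so that failing to yield $c$ during some slot causes an unavoidable collision at $c$. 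The slot used by walk $i$ thereby encodes the vertex selected for color $i$. The walk for $(s_{i,j}, z_{i,j})$ is built to need $c$ during specific sub-windows of the $i$-th and $j$-th color phases that depend on which $e=\{u,v\}\in E_{i,j}$ it ``picks''; by aligning these sub-windows with the vertex-slots of $u$ and $v$, the edge-walk can be scheduled without colliding at $c$ with the color-$i$ and color-$j$ walks if and only if those walks chose exactly $u$ and $v$, respectively.

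The main obstacle is that on a star \emph{every} walk funnels through the single vertex $c$, so all pairwise disjointness constraints meet at one resource, and the design of time labels must simultaneously give each color-walk enough freedom to encode $n$ different vertex choices while leaving no slack for an edge-walk to pick an edge that is not actually present in $G$. I expect to handle this by making the windows tight enough that an edge-walk's only feasible ``escape'' from $c$ during a given color phase is precisely into the selection leaf used by that color's walk, forcing consistency. Correctness will then follow by a forward direction that builds explicit walks from any multicolored $k$-clique and a reverse direction that reads off, from any temporally disjoint solution, a consistent vertex in each $V_i$ and argues via the collision structure at $c$ that the induced $k$-tuple must be a clique. Since all constructed walks will in fact be temporal paths in the star, the reduction could additionally be reused to strengthen \cref{thm:n-star} to a $|V|$-bounded setting, but the essential novelty here is that $|S|\in O(k^2)$.
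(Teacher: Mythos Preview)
Your high-level plan---reduce from \textsc{Multicolored Clique}, use $O(k^2)$ source-sink pairs split into per-color ``selection'' pairs and per-color-pair ``verification'' pairs, and encode all gadget logic in the temporal schedule at the center $c$---matches the paper's approach. However, there is a genuine gap in your mechanism, and your final remark is incorrect.

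The gap is in enforcing that each color selects \emph{at most one} vertex. You write that the color-$i$ walk ``must step out to some leaf $\ell_v$ during exactly one of its $n$ vertex-slots,'' but the enforcement you sketch only argues the ``at least one'' direction (failing to yield $c$ somewhere causes a collision with the verification walks). Nothing in your outline prevents the color-$i$ walk from stepping out during \emph{several} slots, creating several gaps at $c$; then the $k-1$ verification walks involving color $i$ could use different gaps, each corresponding to a different vertex of $V_i$, and the backward direction would no longer yield a consistent clique. The paper's solution to this is the main technical idea of the proof: it uses \emph{two} source-sink pairs $(s_i,z_i)$ and $(\tilde s_i,\tilde z_i)$ per color whose walks are forced into an interlocking alternation at $c$ (each must vacate $c$ just before the other returns, via auxiliary leaves $w^i_\ell,\tilde w^i_\ell$). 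The only way either walk can break this lockstep is by visiting a designated ``selection'' leaf $y^i_\ell$ (resp.\ $\tilde y^i_\ell$); doing so shifts the pair into a second, phase-offset alternation (now through leaves $x^i_\ell,\tilde x^i_\ell$) from which no further selection leaf is reachable. This two-walk lock is precisely what guarantees at most one selection per color; your single-walk design does not supply a comparable mechanism.

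Separately, your closing remark that ``all constructed walks will in fact be temporal paths'' cannot be right: any walk that starts at a leaf, idles at $c$, detours to a selection leaf, returns to $c$, and ends at another leaf visits $c$ at least twice. Indeed, if a reduction of this shape produced only paths it would show that \tdp{} on stars is \W{1}-hard for $|S|$, contradicting the fixed-parameter tractability of \tdp{} on trees established by \citet{KlobasMMNZ23}.
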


We will give a parameterized reduction from the \W{1}-hard~\cite{fellows2009multipleinterval} problem \textsc{Multicolored Clique}.
The input for this problem consists of an integer $k$ and a properly $k$-colored graph $G=(V_1\uplus V_2 \uplus \ldots \uplus V_k,E)$ and one is asked to decide whether $G$ contains a clique of size~$k$.
Any such clique must, of course, contain exactly one vertex from each color class.

\begin{construction}
	\label{constr:star}
	Let $(G=(V_1\uplus V_2 \uplus \ldots \uplus V_k,E),k)$ be an instance of \textsc{Multicolored Clique}.
	We may assume w.l.o.g.\ 
 that $\abs{V_1} = \abs{V_2} = \ldots \abs{V_k} \eqqcolon n$.
	Suppose that $V_i = \{v^i_1,\ldots,v^i_n\}$.
	We will now construct an instance $(\TGcompact,S)$ of \tdw.
	
	We start by describing $S$.
	For every $i \in [k]$, there are two terminal pairs $(s_i,z_i)$ and $(\ts_i,\ttt_i)$.
	The temporal walks that connect these two pairs will encode the selection of a vertex in $V_i$.
	Additionally, for every $i,j \in [k]$ with $i<j$ there is a terminal pair $(s_{i,j},z_{i,j})$.
	The temporal walk for this pair verifies that at least one vertex has been selected in each of $V_i$ and $V_j$ and that those two vertices are adjacent.
	Let $S$ denote this set of terminal pairs.
	
	Next we will define $\calG=(V,E_1,\ldots,E_T)$.
	We start by giving $V$.
	For every $i\in [k]$ there are two sets of vertices, the first, $W_i$, is intended to be used by the temporal walk connecting $(s_i,z_i)$ and the other, $\tW_i$, by the temporal walk for $(\ts_i,\ttt_i)$.
	Let $W_i \coloneqq \{w^i_1,\ldots,w^i_{kn},x^i_1,\ldots,x^i_{kn},y^i_1,\ldots,y^i_n\}$ and $\tW_i \coloneqq \{\tw^i_1,\ldots,\tw^i_{kn},\tx^i_1,\ldots,\tx^i_{kn},\ty^i_1,\ldots,\ty^i_n\}$.
	Then, for every edge $e=\{u,v\} \in E$ with $u\in V_i$ and $v\in V_j$ for some $i<j$ there are vertices $W_e \coloneqq \{\alpha_u^j,\beta_e,\gamma_v^i\}$.
	Finally, there is a central vertex $c$, to which every edge will be incident.
	Let $V \coloneqq \{c\} \cup \bigl(\bigcup_{(s,z) \in S} \{s,z\}\bigr) \cup \bigl(\bigcup_{i \in [k]} W_i \cup \tW_i\bigr) \cup \bigl(\bigcup_{e\in E} W_e\bigr)$.
	
	It remains to define $E_1,\ldots,E_T$.
	For every $i \in [k]$, there is a sequence of edge sets $E^i_0,\ldots,E^i_{4kn+6}$.
	Informally speaking, in this sequence the temporal walks connecting $(s_i,z_i)$ and $(\ts_i,\ttt_i)$ select a vertex in $V_i$. The temporal walk connecting $(s_{i,j},z_{i,j})$ for $j \neq i$ verifies that the selected vertex is adjacent to the one selected in $V_j$.
	First, for $\ell \in [kn]$, the vertices $w_\ell^i$ are adjacent to $c$ in the layers $E_{4\ell-3}$ and $E^i_{4\ell}$, and the vertices $\tw_\ell^i$ have an edge to $c$ in $E^i_{4\ell-1}$ and $E^i_{4\ell+2}$.
	Next, for $\ell \in [kn]$, the vertices $x^i_\ell$ are adjacent to $c$ in $E^i_{4\ell-2}$ and $E^i_{4\ell+1}$ and $\tx_{\ell}^i$ have edges to $c$ in $E^i_{4\ell}$ and $E^i_{4\ell+3}$.
	Finally, for $\ell \in [n]$, the vertices $y^i_\ell$ are adjacent to $c$ in the layers $E^i_{4k(\ell-1)+1}$ and $E^i_{4k\ell+1}$, while for $\ty^i_\ell$ those layers are $E^i_{4k(\ell-1)+3}$ and $E^i_{4k\ell+3}$.
	Additionally, the starting vertex $s_i$ is adjacent to $c$ in the layer $E^i_0$ and $z_i$ has an edge to $c$ in $E^i_{4kn+4}$.
	Similarly, for $\ts_i$ and $\ttt_i$, those layers are $E^i_2$ and $E^i_{4kn+6}$, respectively.

	For any $i,j \in [k]$ with $i<j$, there is a layer $E^{i,j}_1$, which contains the edge $\{s_{i,j},c\}$ and a second subsequent layer $E^{i,j}_2$, which contains  $\{c,\alpha_{v}^j\}$ for all $v\in V_i$ that have a neighbor in $V_j$.
	There is also a layer $E_{f-1}^{i,j}$, which connects  $c$ to $\gamma_v^i$ for all $v\in V_j$ that have a neighbor in $V_i$, and finally $E_f^{i,j}$ connecting $z_{i,j}$ to $c$.
	Next, consider edge $e \in E$.
	Suppose that one endpoint of that edge is $v^i_a \in V_i$ and the other endpoint is $v^j_b \in V_j$, with $i < j$.
	Then, there is an edge from $\alpha_{v^i_a}^j$ to $c$ in the layer $E^i_{4k(a-1)+4(j-1)}$, from $c$ to $\beta_e$ in layer $E^i_{4k(a-1)+4(j-1)+2}$, from $\beta_e$ to $c$ in $E^j_{4k(b-1)+4i}$, and from $c$ to $\gamma_{v^j_b}^i$ in $E^j_{4k(b-1)+4i+2}$.
	
	The order of the layers in $\calG$ is as follows.
	The layers $E^{i,j}_1$ and $E^{i,j}_2$ for each $i,j\in k$ are consecutive to one another and all such layers come at the very beginning of the temporal graph.
	Then, come the layers $E^1_1,\ldots,E^1_{4kn+6}$, followed by $E^2_1,\ldots,E^2_{4kn+6}$, and so on.
	The temporal graph concludes with the layers $E^{i,j}_{f-1}$ and $E^{i,j}_f$ for each $i,j\in k$ being consecutive to one another. We give an illustration of the construction in \cref{fig:vertexselection}.
	\cqed
\end{construction}

We will now give a brief overview of the intuition as to why this construction is correct before proving this claim formally.
First, consider for any $i \in [k]$ temporal walks $P_i$ and $\tP_i$ that connect $(s_i,z_i)$ and $(\ts_i,\ttt_i)$, respectively.
For the purpose of explanation, assume for now that the vertices $y^i_\ell$ and $\ty^i_\ell$ did not exist.
The first walk, $P_i$, must move from $s_i$ to $c$ in layer $E^i_0$, because $s_i$ is subsequently isolated.
The second walk, $\tP_i$, must similarly arrive in $c$ in $E^i_2$.
Hence, $P_i$ must leave $c$ in $E^i_1$, otherwise it temporally intersects $\tP_i$.
It must move to $w^i_1$. 
That vertex is again adjacent to $c$ in layer $E^i_4$ and isolated after that.
Hence, $P_i$ must return to $c$ in that layer.
Therefore, $\tP_i$ must leave $c$ in layer $E^i_3$, otherwise it temporally intersects $P_i$.
It must move to $\tw^i_1$. 
We can see that the two walks $P_i,\tP_i$ are ``locked'' into alternatingly moving from the center $c$ to vertices $w^i_\ell$ and $\tw^i_\ell$, respectively. For an illustration see \cref{fig:vertexselection}.

Now consider that the vertices $y^i_\ell$ and $\ty^i_\ell$ exist.
The pattern in which $P_i$ and $\tP_i$ move can be ``broken'' if $P_i$ or $\tP_i$ moves to $y^i_\ell$ or $\ty^i_\ell$, respectively. Now we can make two observations.
\begin{itemize}
    \item If the two walks do so almost simultaneously it creates a interval of size $O(k)$ where neither of the two walks occupy the center vertex $c$. This interval corresponds to a vertex in $V_i$. Hence, by choosing when to move to $y^i_\ell$ and $\ty^i_\ell$, respectively, a vertex from $V_i$ is ``selected''.
    \item After the two walks move back to $c$, they are locked in a similar pattern, where they alternatingly move from the center to vertices $x^i_\ell$ and $\ty^i_\ell$, respectively. (This happens also if only one of the walks move to $y^i_\ell$ or $\ty^i_\ell$.) From this pattern, they cannot move to vertices $y^i_\ell$ or $\ty^i_\ell$, hence at most one vertex is selected per color.
\end{itemize}
If no vertex is selected, that is, no vertices $y^i_\ell$ or $\ty^i_\ell$ are visited, then the temporal walk $P_{i,j}$ from $s_{i,j}$ to $z_{i,j}$ for any $i<j$ will temporally intersect $P_i$ or $\tP_i$. (If $i=k$ we make an analogous observation, where $i$ and $j$ exchange their role.)
Hence, we can assume that one vertex of every color is selected.
We can observe that $P_{i,j}$ first must move from $s_{i,j}$ to $c$ and then to $\alpha_e$ where $e$ is any edge connecting a vertex in $V_i$ to a vertex in $V_j$. Informally speaking, this is only possible without temporally intersecting any of the temporal walks $P_i,\tP_i,P_j,\tP_j$ if the endpoints of $e$ are selected as vertices for colors $i$ and $j$, respectively. Thereby, we verify that the selected vertices indeed form a clique in $G$.

\begin{figure}[t]
    \centering
    \begin{tikzpicture}[line width=1pt, xscale=.4]
    \node (Y1) at (-1.5,5) {$y^2_\ell$};
    \node (X1) at (-1.5,4) {$x^2_\ell$};
    \node (W1) at (-1.5,3) {$w^2_\ell$};
    \node (C) at (-1.5,2) {$c$};
    \node (Y2) at (-1.5,-1) {$\ty^2_\ell$};
    \node (X2) at (-1.5,0) {$\tx^2_\ell$};
    \node (W2) at (-1.5,1) {$\tw^2_\ell$};
		\path
			(0,2) node[tvert](b0) {}
			(2,2) node[tvert](b1) {}
			(4,2) node[tvert](b2) {}
			(6,2) node[tvert](b3) {}
			(8,2) node[tvert](b4) {}
			(10,2) node[tvert](b5) {}
			(12,2) node[tvert](b6) {}
			(14,2) node[tvert](b7) {}
			(16,2) node[tvert](b8) {}
			(18,2) node[tvert](b9) {}
			(20,2) node[tvert](b10) {}
			(22,2) node[tvert](b11) {}
			(24,2) node[tvert](b12) {}
			(26,2) node[tvert](b13) {}
			(28,2) node[tvert](b14) {}
			(30,2) node[tvert](b15) {}
			(32,2) node[tvert](b16) {}
			(34,2) node[tvert](bn) {};

        \draw[lightgray,dashed,line width=2pt] (8,6) -- (9,2);
        \node (A) at (8,6.25) {$\beta_{e'}$};
        \draw[lightgray,dashed,line width=2pt] (6,6) -- (9,2);
        \node (A) at (6,6.25) {$\beta_{e}$};
        \draw[lightgray,dashed,line width=2pt] (13,2) -- (14,6);
        \node (B1) at (14,6.25) {$\gamma_v^1$};
  
        \draw[gray,dotted,line width=2pt] (16,6) -- (17,2);
        \node (B2) at (16,6.25) {$\alpha_v^3$};
        \draw[gray,dotted,line width=2pt] (21,2) -- (22,6);
        \node (G) at (22,6.25) {$\beta_{e''}$};
        \draw[gray,dotted,line width=2pt] (21,2) -- (24,6);
        \node (G) at (24,6.25) {$\beta_{e'''}$};
        
		\draw[blue] (0,3) -- (1,2);

		\draw[blue] (3,2) -- (4,3);  
		\draw[blue] (4,3) -- (8,3);
		\draw[blue] (8,3) -- (9,2);
  
		\draw[blue] (11,2) -- (12,3);  
		\draw[blue] (12,3) -- (16,3);
		\draw[blue] (16,3) -- (17,2);
  
		\draw[blue] (19,2) -- (20,3); 
		\draw[blue] (20,3) -- (24,3); 
		\draw[blue] (24,3) -- (25,2); 
  
		\draw[blue] (27,2) -- (28,3); 
		\draw[blue] (28,3) -- (32,3); 
		\draw[blue] (32,3) -- (33,2); 
  

    	\draw[cyan] (2,4) -- (0,4);
    	\draw[cyan] (2,4) -- (3,2);

		\draw[cyan] (5,2) -- (6,4);  
		\draw[cyan] (6,4) -- (10,4);
		\draw[cyan] (10,4) -- (11,2);
  
		\draw[cyan] (13,2) -- (14,4);  
		\draw[cyan] (14,4) -- (18,4);
		\draw[cyan] (18,4) -- (19,2);
  
		\draw[cyan] (21,2) -- (22,4); 
		\draw[cyan] (22,4) -- (26,4); 
		\draw[cyan] (26,4) -- (27,2); 

    	\draw[cyan] (29,2) -- (30,4); 
		\draw[cyan] (30,4) -- (34,4); 
  
  		\draw[red] (4,1) -- (0,1);
  		\draw[red] (4,1) -- (5,2);

		\draw[red] (7,2) -- (8,1);  
		\draw[red] (8,1) -- (12,1);
		\draw[red] (12,1) -- (13,2);
  
		\draw[red] (15,2) -- (16,1);  
		\draw[red] (16,1) -- (20,1);
		\draw[red] (20,1) -- (21,2);
  
		\draw[red] (23,2) -- (24,1);
		\draw[red] (24,1) -- (28,1);
		\draw[red] (28,1) -- (29,2);
  
		\draw[red] (31,2) -- (32,1);
		\draw[red] (32,1) -- (34,1);

    	\draw[orange] (1,2) -- (2,0);
    	\draw[orange] (2,0) -- (6,0);
    	\draw[orange] (6,0) -- (7,2);

		\draw[orange] (9,2) -- (10,0);  
		\draw[orange] (10,0) -- (14,0);
		\draw[orange] (14,0) -- (15,2);
  
		\draw[orange] (17,2) -- (18,0);  
		\draw[orange] (18,0) -- (22,0);
		\draw[orange] (22,0) -- (23,2);
  
		\draw[orange] (25,2) -- (26,0);
		\draw[orange] (26,0) -- (30,0);
		\draw[orange] (30,0) -- (31,2);
  
		\draw[orange] (33,2) -- (34,0);

		\draw[teal, line width=2pt] (0,5) -- (2,5); 
		\draw[teal, line width=2pt] (2,5) -- (3,2);  
  
		\draw[teal, line width=2pt] (3,2) -- (4,5);  
		\draw[teal, line width=2pt] (26,5) -- (4,5);  
		\draw[teal, line width=2pt] (26,5) -- (27,2); 
  
		\draw[teal, line width=2pt] (27,2) -- (28,5); 
		\draw[teal, line width=2pt] (28,5) -- (34,5); 
 
		\draw[purple, line width=2pt] (0,-1) -- (6,-1);  
		\draw[purple, line width=2pt] (6,-1) -- (7,2); 

		\draw[purple, line width=2pt] (7,2) -- (8,-1);  
		\draw[purple, line width=2pt] (30,-1) -- (8,-1);  
		\draw[purple, line width=2pt] (30,-1) -- (31,2); 
   
		\draw[purple, line width=2pt] (31,2) -- (32,-1);  
		\draw[purple, line width=2pt] (32,-1) -- (34,-1); 

  \draw[edge, line width=2pt] (b0) edge node [midway, below] {} (bn);

\end{tikzpicture}
    \caption{Illustration of the part of the temporal graph $\TG$ as defined by \cref{constr:star} that corresponds to color $i=2$ for $k=3$ colors. Vertices are represented by horizontal lines. The horizontal position of the lines indicate the ``vertex type'' ($w^i_\ell,x^i_\ell,y^i_\ell,\tw^i_\ell,\tx^i_\ell,\ty^i_\ell$ for $\ell\in[n]$, and $c$), as described to the left. The non-horizontal lines represent time edges, where the label corresponds to the position of their connection to the center vertex $c$ (black horizontal line in the middle). Positions further to the left correspond to earlier time labels. Edges $e,e'$ connect vertices of color~1 to vertex $v$ of color 2. Edges $e'',e'''$ connect vertices of color 3 to vertex $v$ of color 2.}
    \label{fig:vertexselection}
\end{figure}
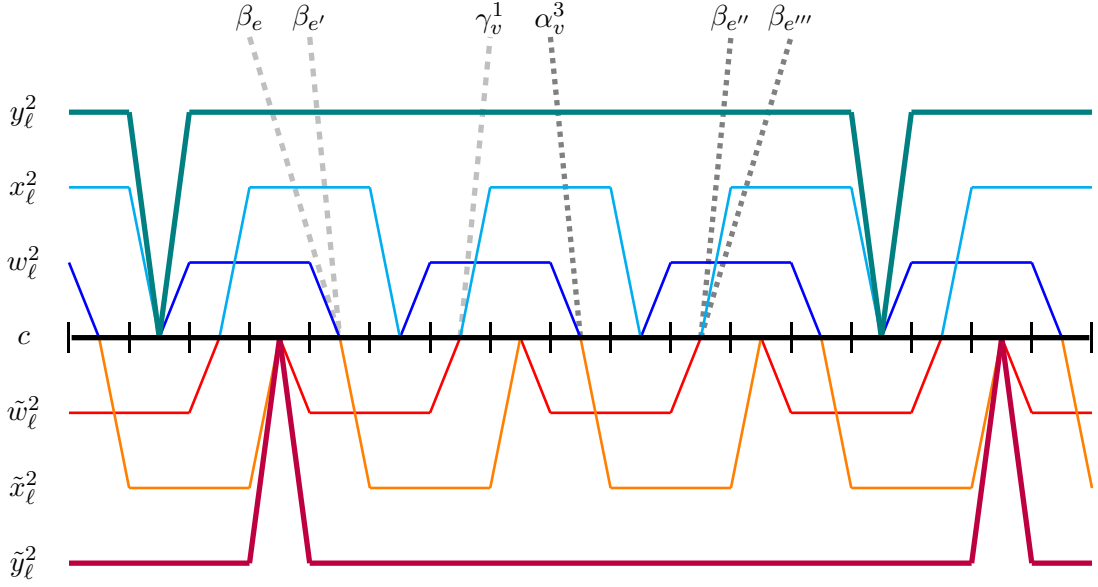

Now we prove the correctness of our reduction. We start by formally showing that the above claimed properties of temporal walks $P_i,\tP_i$ connecting $(s_i,z_i)$ and $(\ts_i,\ttt_i)$, respectively, hold.
\begin{lemma}\label{lemma1}
    Let $P_i,\tP_i$ be temporal walks connecting $(s_i,z_i)$ and $(\ts_i,\ttt_i)$, respectively, such that $P_i$ and $\tP_i$ are temporally disjoint. Then the following holds.
    \begin{itemize}
        \item If $P_i$ visits no vertex $y^i_\ell$ with $\ell\in[n]$ and $\tP_i$ visits no vertex $\ty^i_{\ell}$ with $\ell\in[n]$, then $P_i$ visits all vertices $w^i_1,\ldots, w^i_{kn}$ and $\tP_i$ visits all vertices $\tw^i_1,\ldots, \tw^i_{kn}$.
        \item $P_i$ visits at most one vertex $y^i_\ell$ with $\ell\in[n]$ and no vertex $\ty^i_\ell$ with $\ell\in[n]$. 
        \item  If $P_i$ visits $y^i_\ell$, then $P_i$ visits all vertices $w^i_1,\ldots, w^i_{k(\ell-1)}$ and all vertices $x^i_{k\ell+1},\ldots, x^i_{kn}$. Furthermore, $\tP_i$ visits all vertices $\tw^i_1,\ldots, \tw^i_{k(\ell-1)}$ and all vertices $\tx^i_{k\ell+1},\ldots, \tx^i_{kn}$.
        \item $\tP_i$ visits at most one vertex $\ty^i_\ell$ with $\ell\in[n]$ and no vertex $\ty^i_\ell$ with $\ell\in[n]$.
        \item If $\tP_i$ visits $\ty^i_{\ell'}$, then $\tP_i$ visits all vertices $\tw^i_1,\ldots, \tw^i_{k(\ell'-1)}$ and all vertices $\tx^i_{k\ell'+1},\ldots, \tx^i_{kn}$. Furthermore, $P_i$ visits all vertices $w^i_1,\ldots, w^i_{k(\ell'-1)}$ and all vertices $x^i_{k\ell'+1},\ldots, x^i_{kn}$.
        \item If $P_i$ visits $y^i_\ell$ and $\tP_i$ visits $\ty^i_{\ell'}$, then $\ell=\ell'$.
    \end{itemize}
\end{lemma}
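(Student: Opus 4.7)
The proof proceeds by tracking the intervals during which $P_i$ and $\tP_i$ occupy the central vertex $c$, and arguing via a dead-end analysis that most alternatives at each transition lead to leaves from which the walk cannot be completed. I would start with the base case: $P_i$ must take its unique edge $\{s_i,c\}$ at time $0$ and $\tP_i$ the edge $\{\ts_i,c\}$ at time $2$, so temporal disjointness at $c$ forces $P_i$ to leave $c$ at time $1$ and $\tP_i$ to leave at time $3$. Inspecting the edges incident to $c$ in the relevant layers shows that the only non-dead-end options at time $1$ are $\{c,w^i_1\}$ and $\{c,y^i_1\}$, and at time $3$ are $\{c,\tw^i_1\}$ and $\{c,\ty^i_1\}$.

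The main inductive claim is that, as long as $P_i$ has not yet visited any $y^i_\ell$, after successively traversing $w^i_1,\ldots,w^i_j$ the walk is back at $c$ during $[4j,4j+1]$ and must leave by time $4j+1$; enumerating the edges incident to $c$ in layer $E^i_{4j+1}$, the only non-dead-end options are $w^i_{j+1}$ (continuing the pattern) and, precisely when $k\mid j$, the shortcut $y^i_{j/k+1}$. Every other edge in that layer goes to a leaf whose remaining edges to $c$ all lie in the past, stranding the walk before it can reach $z_i$. A symmetric induction handles $\tP_i$ with the $\tw$- and $\ty$-families, yielding the first two bullets.

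Once $P_i$ does take the shortcut to $y^i_\ell$, its only entry is via the first edge of $y^i_\ell$ at time $4k(\ell-1)+1$, which pins down the preceding $w$-phase as exactly $w^i_1,\ldots,w^i_{k(\ell-1)}$; while at $y^i_\ell$ during $[4k(\ell-1)+1,4k\ell+1]$ the walk cannot occupy $c$. After returning to $c$ at time $4k\ell+1$, a second dead-end enumeration forces the $x$-pattern $x^i_{k\ell+1},\ldots,x^i_{kn}$: any edge out of $c$ in this phase to a further $y^i_{\ell''}$ or to any $\ty^i_{\ell'}$ would be the ``second edge'' of that leaf, after which the walk is stuck. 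The fully symmetric argument handles $\tP_i$ visiting $\ty^i_{\ell'}$, giving the ``at most one $y$, no $\ty$'' claims together with the intermediate bullets about $\tw$ and $\tx$.

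For the synchronization $\ell=\ell'$, I would tabulate the generic $c$-occupation intervals produced by each phase: $P_i$'s $w$- and $x$-phases use intervals of the form $[4j,4j+1]$ and $[4j+1,4j+2]$, while $\tP_i$'s $\tw$- and $\tx$-phases use $[4j+2,4j+3]$ and $[4j+3,4j+4]$. Only the alignment $\ell=\ell'$ keeps these four families pairwise disjoint at $c$; a misalignment, with $\tP_i$ switching phase earlier or later than $P_i$, produces an overlap at a shared integer time point. I expect the main obstacle to be the volume of case work: one must also verify that the extra edges $\{c,\alpha\}$, $\{c,\beta\}$, $\{c,\gamma\}$ introduced in \cref{constr:star} for the color-pair gadgets do not offer any useful alternative transition at the critical layers, and treat the corner cases $\ell\in\{1,n\}$ where the $w$- or $x$-phase collapses.
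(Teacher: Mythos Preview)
Your approach is essentially the paper's: track occupation of $c$, use the mutual blocking of $P_i$ and $\tP_i$ to force each walk to vacate $c$ at the right layer, and enumerate the non-dead-end outgoing edges there. Your interval tabulation for the last bullet ($\ell=\ell'$) is in fact a tidier way to see the synchronization than the paper, which extracts it as a by-product of the forcing argument.

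One point to tighten, since it is the crux of the middle bullets: in your third paragraph you write that after $P_i$ returns from $y^i_\ell$ to $c$ at time $4k\ell+1$, ``a second dead-end enumeration forces the $x$-pattern.'' Dead-end analysis alone does \emph{not} force this. After returning, $P_i$ could in principle linger at $c$ and later resume the $w$-pattern (e.g., take the live first edge to $w^i_{k\ell+2}$ at time $4k\ell+5$). What rules this out is again the mutual pressure: the paper argues that at time $4k\ell+1$ the walk $\tP_i$ can only be at $\tx^i_{k\ell}$ or $\ty^i_\ell$ (being at $\tw^i_{k\ell}$ would send $\tP_i$ back to $c$ at $4k\ell+2$, colliding with $P_i$), and from either of those it returns to $c$ at $4k\ell+3$. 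This forces $P_i$ out of $c$ at $4k\ell+2$, where the only non-dead-end edge is $\{c,x^i_{k\ell+1}\}$. So do not phrase the two inductions as ``symmetric'' and separable; the interleaved forcing is what drives every step, including the phase change, and you should make it explicit there just as you did in the base case.
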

\begin{proof}
Before we start, observe that each edge in the constructed graph $\TG$ has at most two labels. This means that if a temporal walk visits a leaf vertex (that is not its source or sink), then it is determined when the walk arrives at that vertex and when it leaves the vertex. Inspecting the labels on edges $\{c,\alpha_v^j\}$, $\{c,\beta_e\}$, and $\{c,\gamma_v^j\}$, we can observe that $P_i$ and $\tP_i$ do not traverse these edges.

Assume that $P_i$ visits no vertex $y^i_\ell$ with $\ell\in[n]$ and $\tP_i$ visits no vertex $\ty^i_{\ell}$ with $\ell\in[n]$.
By construction, $P_i$ must move from $s_i$ to $c$ in layer $E^i_0$, because $s_i$ is subsequently isolated.
Similarly, $\tP_i$ must move to $c$ in $E^i_2$.
Hence, $P_i$ must leave $c$ in $E^i_1$, otherwise it temporally intersects $\tP_i$.
It must move to $w^i_1$, since we assume that it does not move to $y^i_1$.
That vertex is again adjacent to $c$ in layer $E^i_4$ and isolated after that.
Hence, $P_i$ must return to $c$ in that layer.
Therefore, $\tP_i$ must leave $c$ in layer $E^i_3$, otherwise it temporally intersects $P_i$.
It must move to $\tw^i_1$, since we assume that it does not move to $y^i_1$.
By repeating this argument, we can see that the two walks $P_i$ and $\tP_i$ are ``locked'' into alternatingly moving from the center $c$ to vertices $w^i_\ell$ and $\tw^i_\ell$, respectively. For an illustration see \cref{fig:vertexselection}. It follows that $P_i$ visits all vertices $w^i_1,\ldots, w^i_{kn}$ and $\tP_i$ visits all vertices $\tw^i_1,\ldots, \tw^i_{kn}$.

The temporal walks $P_i$ and $\tP_i$ can only deviate from the above pattern if $P_i$ visits a vertex $y^i_\ell$ with $\ell\in[n]$ or $\tP_i$ visits a vertex $\ty^i_{\ell}$ with $\ell\in[n]$. Assume that $P_i$ is the first walk that visits a vertex $y^i_\ell$ with $\ell\in[n]$. The case where $\tP_i$ visits a vertex $\ty^i_{\ell}$ with $\ell\in[n]$ first is analogous. By the arguments above we immediately get that $P_i$ visits all vertices $w^i_1,\ldots, w^i_{k(\ell-1)}$ and $\tP_i$ visits all vertices $\tw^i_1,\ldots, \tw^i_{k(\ell-1)}$. Now we take a closer look on when $P_i$ moves back to $c$ from $y^i_\ell$ and which consequences that as on how $P_i$ and $\tP_i$ continue. Let $t$ denote the time step when $P_i$ moves back to $c$ from $y^i_\ell$, that is, let $t$ be the larger of the two time labels on edge $\{y^i_\ell,c\}$. By construction, $t$ is also the smaller time label on edges $\{y^i_{\ell+1},c\}$ and $\{w^i_{k\ell+1},c\}$, see \cref{fig:vertexselection} for an illustration. It follows that $P_i$ can neither visit $y^i_{\ell+1}$ nor $w^i_{k\ell+1}$.
Now we investigate the behavior of $\tP_i$. Clearly, $\tP_i$ cannot occupy the center vertex $c$ at time $t$, since then it would temporally intersect $P_i$. It follows that at time $t$, walk $\tP_i$ occupies $\tw^i_{k\ell}$, $\tx^i_{k\ell}$, or $\ty^i_{\ell}$, see \cref{fig:vertexselection} for an illustration. If $\tP_i$ occupies $\tw^i_{k\ell}$ at time $t$, then $\tP_i$ would move to $c$ at time $t+1$ and temporally intersect $P_i$. We can conclude that $\tP_i$ occupies $\tx^i_{k\ell}$ or $\ty^i_{\ell}$ at time $t$. In both cases $\tP_i$ moves back to $c$ at time $t+2$. It follows that $P_i$ cannot occupy $c$ at time $t+2$. The only way to realize this is for $P_i$ to move to $x^i_{k\ell+1}$ at time $t+1$.
Then $P_i$ moves back to $c$ at time $t+4$, which means that $\tP_i$ must leave $c$ at time $t+3$. The only way to realize this is for $P_i$ to move to $\tx^i_{k\ell+1}$ and then move back to $c$ at time $t+6$.
By repeating this argument, we can see that the two walks $P_i$ and $\tP_i$ are now ``locked'' into alternatingly moving from the center $c$ to vertices $x^i_{\ell'}$ and $\tx^i_{\ell'}$, respectively, for $\ell'\ge k\ell+1$. It follows that $P_i$ visits all vertices $x^i_{k\ell+1},\ldots, x^i_{kn}$ and $\tP_i$ visits all vertices $\tx^i_{k\ell+1},\ldots, \tx^i_{kn}$. Furthermore, we have that $P_i$ visits no $y^i_{\ell'}$ with $\ell'\neq\ell$ and no vertex $\ty^i_{\ell'}$ with $\ell'\in[n]$, and if $\tP_i$ visits a vertex $\ty^i_{\ell'}$ for some $\ell'\in[n]$, then it needs to be $\ty^i_{\ell}$. By an analogous chain of arguments, we can show an analogous behavior if $P_i$ and $\tP_i$ switch their roles.
\end{proof}

\begin{lemma}\label{lemma2}
Assume $(\TGcompact, S)$ is a yes-instance of \TempDisjointWalks\ and let $\mathcal{P}$ be a solution.
    Let $P_i,\tP_i$ be temporal walks connecting $(s_i,z_i)$ and $(\ts_i,\ttt_i)$, respectively, such that $P_i$ and $\tP_i$ are temporally disjoint.
    Let $P_{i',j'}\in\mathcal{P}$ with $i'=i$ or $j'=i$ be a temporal walk connecting $(s_{i',j'},z_{i',j'})$.
    If one of $P_i$ and $\tP_i$ visits no vertex $y^i_\ell$ with $\ell\in[n]$ and no vertex  $\ty^i_\ell$ with $\ell\in[n]$, then $P_{i',j'}$ temporally intersects $P_i$ or~$\tP_i$.
\end{lemma}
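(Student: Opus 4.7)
The plan is to reduce, by the obvious symmetry between the roles of $P_i$ and $\tP_i$ in \cref{constr:star}, to the case that $P_i$ itself visits no $y^i_\ell$ and no $\ty^i_\ell$. The key claim will then be that during the block of layers $E^i_0,\ldots,E^i_{4kn+6}$ corresponding to color $i$, the center vertex $c$ is essentially ``continuously blocked'' by the pair $\{P_i,\tP_i\}$, in the sense that at every integer time at which $P_{i',j'}$ could plausibly occupy $c$ inside this block, at least one of the two walks is at $c$. Since the construction of $\TG$ forces $P_{i',j'}$ to occupy $c$ at some such integer time, this immediately yields the desired temporal intersection.

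To make this precise I would invoke \cref{lemma1}. Given the hypothesis on $P_i$, the lemma leaves exactly two structural possibilities: either $\tP_i$ also visits no $\ty^i_\ell$, in which case the first bullet of \cref{lemma1} puts both walks into the locked ``$w,\tw$'' pattern; or $\tP_i$ visits some $\ty^i_{\ell'}$, which is the setting of the fifth bullet. In the first subcase, reading off the labels of the $\{c,w^i_\ell\}$ and $\{c,\tw^i_\ell\}$ edges shows that $P_i$ occupies $c$ during $[4\ell,4\ell+1]$ for $\ell=0,\ldots,kn-1$ together with $[4kn,4kn+4]$, while $\tP_i$ occupies $c$ during $[4\ell+2,4\ell+3]$ for $\ell=0,\ldots,kn-1$ together with $[4kn+2,4kn+6]$; together these cover every integer time of the block. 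In the second subcase, the known structure of $\tP_i$ (traversing $\tw^i_1,\ldots,\tw^i_{k(\ell'-1)}, \ty^i_{\ell'}, \tx^i_{k\ell'+1},\ldots,\tx^i_{kn}$) imposes a strict ``deadline'' at each return of $P_i$ to $c$, and the hypothesis that $P_i$ avoids $y$ and $\ty$ leaves exactly one admissible outgoing edge at each step. Iterating, this pins the path of $P_i$ to $s_i,c,w^i_1,c,\ldots,c,w^i_{k\ell'},c,x^i_{k\ell'+1},c,\ldots,c,x^i_{kn},c,z_i$, and a direct computation then shows that the union of the $c$-occupation intervals of $P_i$ and $\tP_i$ contains every multiple of $4$ in $\{0,4,8,\ldots,4kn,4kn+4\}$.

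To finish, I would inspect \cref{constr:star} to describe $P_{i',j'}$ inside the color-$i$ block. If $i'=i$, then $P_{i',j'}$ must arrive at $c$ from some $\alpha_{v^i_a}^{j'}$ via $E^i_{4k(a-1)+4(j'-1)}$ and then leave $c$ towards some $\beta_e$; if $j'=i$, then symmetrically $P_{i',j'}$ arrives at $c$ from some $\beta_e$ via $E^i_{4k(b-1)+4i'}$ and leaves towards some $\gamma$-vertex. In either subcase the arrival time at $c$ is a multiple of $4$ lying in $[4,4kn-4]$, so the coverage established above produces the required intersection with $P_i$ or $\tP_i$. The only alternative is that $P_{i',j'}$ altogether avoids an $\alpha$/$\beta$/$\gamma$ detour inside the color-$i$ block and instead sits at $c$ throughout that block; in that case $P_{i',j'}$'s $c$-occupation already contains the time at which $P_i$ enters $c$ from $s_i$, so the intersection is immediate.

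The most delicate step is the second subcase of the middle paragraph: one must rule out that $P_i$ sneaks into some $\tw^i_\ell$ or $\tx^i_\ell$ that $\tP_i$ happens not to visit. This is handled by a careful label-by-label inspection in the spirit of the proof of \cref{lemma1}, showing that any such attempted detour either immediately conflicts with $\tP_i$'s next $c$-occupation or leads to a dead end from which $P_i$ cannot reach $z_i$.
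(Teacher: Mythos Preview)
Your argument in the second subcase contains a genuine gap. The claim that ``the hypothesis that $P_i$ avoids $y$ and $\ty$ leaves exactly one admissible outgoing edge at each step'' is false precisely during the window in which $\tP_i$ sits at~$\ty^i_{\ell'}$. Concretely, after $P_i$ returns from $w^i_{k(\ell'-1)+1}$ to $c$ in layer $E^i_{4k(\ell'-1)+4}$, the walk $\tP_i$ is already at $\ty^i_{\ell'}$ and will not reappear at $c$ until $E^i_{4k\ell'+3}$; there is therefore no deadline forcing $P_i$ to take $w^i_{k(\ell'-1)+2}$. Instead $P_i$ may wait and move to $x^i_{k(\ell'-1)+2}$ in $E^i_{4k(\ell'-1)+6}$ and thereafter follow the $x$-pattern early. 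This is compatible with $\tP_i$ (they re-synchronise when $\tP_i$ returns, since $P_i$ is then at $x^i_{k\ell'+1}$), yet under this alternative the union of the $c$-occupation intervals of $P_i$ and $\tP_i$ misses, for example, the multiple of~$4$ at time $4k(\ell'-1)+8$. Hence your coverage statement fails, and a single walk $P_{i',j'}$ whose $\alpha$-transition lands at that time is not caught by your argument. The hedge in your final paragraph does not repair this: the detour to $x^i_{k(\ell'-1)+2}$ neither conflicts with $\tP_i$'s next $c$-occupation nor leads to a dead end.

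The paper's proof avoids this issue by a genuinely different mechanism. Rather than trying to show that $P_i,\tP_i$ alone block~$c$ throughout the color-$i$ block, it first localises every non-intersecting $P_{i',j'}$ to the short window $k(\ell^{\star\star}-1)<\ell^\star<k\ell^{\star\star}+1$, and then runs a counting argument over \emph{all} $k-1$ such walks simultaneously: since their $\ell^\star$-values must be pairwise distinct inside an interval of size $k-1$, each $P_{i',j'}$ is forced to move immediately from its $\alpha$- (or $\beta$-) vertex through $c$ to the next $\beta$- (or $\gamma$-) vertex, occupying $c$ for three consecutive steps. It then observes that any detour of $P_i$ away from $c$ to a $w,x,\tw,\tx$-vertex lasts at most three steps, so $P_i$ cannot leave three consecutive steps at $c$ free and must intersect one of the $P_{i',j'}$. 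The key idea you are missing is this global pigeonhole over the $k-1$ verification walks; arguing about a single $P_{i',j'}$ against $P_i,\tP_i$ alone is not enough.
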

\begin{proof}
Consider the case where $i'=i$, the case where $j'=i$ can be shown in an analogous way. Assume that $P_i$ visits no vertex $y^i_\ell$ with $\ell\in[n]$ and no vertex  $\ty^i_\ell$ with $\ell\in[n]$. The case where $\tP_i$ visits no such vertex is analogous.

Temporal walk $P_{i,j'}$ (with $i<j'$) starts at $s_{i,j'}$ and then moves to the center vertex $c$. In the next time step, $P_{i,j'}$ has to move to some vertex $\alpha_v^{j'}$ with $v\in V_i$, otherwise $P_{i,j'}$ temporally intersects some other temporal walk $P_{i'',j''}$ connecting $(s_{i'',j''},z_{i'',j''})$ or the temporal walk $P_1$ connecting $(s_1,z_1)$.
By construction, $P_{i,j'}$ returns from $\alpha_v^{j'}$ to $c$ at some time step $t$ that is larger than the time label on $\{s_i,c\}$ and smaller than the time label on $\{c,z_i\}$. More precisely, by construction, there exist a vertex $w_{\ell^\star}^i$ for some $\ell^\star\in[kn]$ such that the larger of the two labels on $\{w_{\ell^\star}^i,c\}$ is also $t$. Now consider the case where $\tP_i$ also visits no vertex $y^i_\ell$ with $\ell\in[n]$ and no vertex $\ty^i_\ell$ with $\ell\in[n]$. Then by \cref{lemma1} we have that $P_i$ visits all vertices $w^i_1,\ldots, w^i_{kn}$ and hence $P_i$ and $P_{i,j'}$ temporally intersect.

It remains to consider the case where $\tP_i$ visits some vertex $y^i_\ell$ with $\ell\in[n]$ or some vertex  $\ty^i_\ell$ with $\ell\in[n]$.
By \cref{lemma1} we have that $\tP_i$ visits exactly one vertex $y^i_{\ell^{\star\star}}$ with ${\ell^{\star\star}}\in[n]$ and no vertex $\ty^i_\ell$ with $\ell\in[n]$. Furthermore, \cref{lemma1} gives us that in this case $P_i$ visits all vertices $w^i_1,\ldots, w^i_{k({\ell^{\star\star}}-1)}$ and all vertices $x^i_{k{\ell^{\star\star}}+1},\ldots, x^i_{kn}$.
We immediately get that if $\ell^\star\le k({\ell^{\star\star}}-1)$, then $P_i$ and $P_{i,j'}$ temporally intersect. By construction, we also have that $t$ is exactly by one smaller than the larger label on the edge $\{x_{\ell^\star}^i,c\}$. Hence, if $\ell^\star\ge k{\ell^{\star\star}}+1$, then $P_i$ and $P_{i,j'}$ temporally intersect.

We can conclude that $k({\ell^{\star\star}}-1) <\ell^\star< k{\ell^{\star\star}}+1$.
For this remaining case we need to consider all temporal walks $P_{i',j'}\in\mathcal{P}$ with $i'=i$ or $j'=i$ connecting $(s_{i',j'},z_{i',j'})$.
Above we analysed the case that $i'=i$ and $i<j'$. 
We observed that, by construction, every temporal walk $P_{i,j'}$ returns from some $\alpha_v^{j'}$ to $c$ at some time step $t_{j'}$ and there exist a vertex $w_{\ell_{j'}^\star}^i$ for some $\ell_{j'}^\star\in[kn]$ such that the larger of the two labels on $\{w_{\ell_{j'}^\star}^i,c\}$ is also $t_{j'}$. Hence, we have that $k({\ell^{\star\star}}-1) <\ell_{j'}^\star< k{\ell^{\star\star}}+1$ holds for all $j'>i$. Furthermore, we must have $\ell_{j'}^\star\neq \ell_{j''}^\star$ for all $i<j'<j''$, since otherwise temporal walks $P_{i,j'}$ and $P_{i,j''}$ temporally intersect.

Now consider the case that $i'<i$ and $i=j'$. 
Similarly, here we can observe that, by construction, every temporal walk $P_{i',i}$ has to move from $c$ to some $\gamma_v^{i}$ at some time step $t_{i'}$ and there exist a vertex $\tw_{\ell_{i'}^\star}^i$ for some $\ell_{i'}^\star\in[kn]$ such that the larger of the two labels on $\{\tw_{\ell_{i'}^\star}^i,c\}$ is also $t_{i'}$.
By a similar chain of arguments as made before, we can arrive at the conclusion that $k({\ell^{\star\star}}-1) <\ell_{i'}^\star< k{\ell^{\star\star}}+1$ must hold for all $i'<i$, and $\ell_{i'}^\star\neq \ell_{i''}^\star$ must hold for all $i'<i''<i$. Finally, by construction of $\TG$, we have that $\ell_{i'}^\star\neq \ell_{j'}^\star$ must hold for all $i'<i<j'$. 
Note that $k({\ell^{\star\star}}-1) - k{\ell^{\star\star}}+1 = k-1$.
Since there are $k-1$ different temporal walks $P_{i',j'}$ with $i'=i$ or $j'=i$, we get the following. Each temporal walk $P_{i,j'}$ with $i<j'$ must leave $c$ by moving to a next possible vertex $\beta_e$ for some $e\in E$. Similarly, each temporal walk $P_{i',i}$ with $i'<i'$ must have arrived at $c$ by moving from a latest previously possible vertex $\beta_e$ for some $e\in E$.
This means that each $P_{i,j'}$ and $P_{i',i}$ occupies $c$ for three consecutive time steps. Furthermore, we have that $P_i$ cannot move from $c$ to vertices $\alpha_v^{j'}$, $\beta_e$, or $\gamma_v^{i'}$ with out temporally intersecting with some $P_{i,j'}$ or $P_{i',i}$. However, if $P_i$ moves from $c$ to some vertex $w^i_\ell,x^i_\ell,y^i_\ell,\tw^i_\ell,\tx^i_\ell,\ty^i_\ell$ for some $\ell\in[n]$, it has to move back to $c$ three time steps later. Hence, there are no three consecutive time steps where $c$ is not occupied by $P_i$. It follows that $P_i$ temporally intersects $P_{i,j'}$.
\end{proof}

\cref{lemma1,lemma2} essentially show that the vertex selection in our reduction works as intended. We obtain the following corollary.

\begin{corollary}\label{corollary1}
    Assume $(\TGcompact, S)$ is a yes-instance of \TempDisjointWalks\ and let $\mathcal{P}$ be a solution. Let $P_i,\tP_i\in\mathcal{P}$ be temporal walks connecting $(s_i,z_i)$ and $(\ts_i,\ttt_i)$, respectively. There exists an $\ell \in [n]$ such that $P_i$ visits $y^i_\ell$ and $\tP_i$ visits $\ty^i_\ell$. Furthermore,  $P_i$ visits no $y^i_{\ell'}$ with $\ell'\neq\ell$ and $\tP_i$ visits no $\ty^i_{\ell'}$ with $\ell'\neq\ell$.
\end{corollary}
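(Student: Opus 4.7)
The plan is to combine \cref{lemma1} and \cref{lemma2} directly. First I would observe that since $k \ge 2$, there exists some index $j \neq i$ with $1 \le i, j \le k$, and so the solution $\mathcal{P}$ contains a temporal walk $P_{i',j'}$ connecting $(s_{i',j'}, z_{i',j'})$ with either $i' = i < j' = j$ or $i' = j < j' = i$. By \cref{lemma2}, if either of the walks $P_i$ or $\tP_i$ visits neither a vertex $y^i_\ell$ nor a vertex $\ty^i_\ell$ for $\ell \in [n]$, then $P_{i',j'}$ would temporally intersect $P_i$ or $\tP_i$, contradicting the assumption that $\mathcal{P}$ is a valid solution.

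Next I would invoke \cref{lemma1}: that lemma states that $P_i$ visits no $\ty^i_\ell$ at all, and $\tP_i$ visits no $y^i_\ell$ at all. Combined with the previous paragraph, $P_i$ must visit some $y^i_\ell$ and $\tP_i$ must visit some $\ty^i_{\ell'}$ (otherwise the walk that avoids its corresponding $y$/$\ty$ vertices would also avoid the other type, contradicting \cref{lemma2}). Then the sixth bullet of \cref{lemma1} forces $\ell = \ell'$, giving a common index.

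Finally, uniqueness is immediate from the second and fourth bullets of \cref{lemma1}, which assert that $P_i$ visits at most one vertex $y^i_\ell$ and $\tP_i$ visits at most one vertex $\ty^i_\ell$. Therefore the index $\ell$ identified above is the unique such index for both walks, which is exactly what the corollary claims.

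The main obstacle is essentially bookkeeping: making sure that a suitable source-sink pair $(s_{i',j'}, z_{i',j'})$ exists so that \cref{lemma2} is applicable (which is why one uses $k \ge 2$, or notes that the problem is trivial otherwise), and then carefully pairing the two possibilities (``$P_i$ avoids the $y$/$\ty$ vertices'' vs.\ ``$\tP_i$ avoids them'') with the two alternatives in \cref{lemma2}. Both cases lead to a contradiction in the same way, so the argument is short once this matching is set up.
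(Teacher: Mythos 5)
Your proposal is correct and matches the paper's intent: the paper leaves the corollary's proof implicit as a direct combination of \cref{lemma1} and \cref{lemma2}, which is exactly the derivation you spell out (using \cref{lemma2} to force each of $P_i,\tP_i$ to visit some $y$-type vertex, the second and fourth bullets of \cref{lemma1} to pin down the type and uniqueness, and the sixth bullet to equate the indices). Your added care about the existence of a pair $(s_{i',j'},z_{i',j'})$ with $i'=i$ or $j'=i$ is a reasonable bookkeeping point that the paper glosses over.
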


Next, we show a property of the temporal walks $P_{i,j}$ connecting $(s_{i,j},z_{i,j})$ for $i<j$ in yes-instances.
\begin{lemma}\label{lemma3}
    Assume $(\TGcompact, S)$ is a yes-instance of \TempDisjointWalks\ and let $\mathcal{P}$ be a solution. Let $P_{i,j}\in\mathcal{P}$ be the temporal walk connecting $(s_{i,j},z_{i,j})$ for $i<j$. There exists an $e=\{u,v\}\in E$ with $u\in V_i$ and $v\in V_j$ such that $P_{i,j}$ has the transitions
    \begin{multicols}{4}
    \begin{itemize}
        \item $(s_{i,j},c,t_1)$,
        \item $(c,\alpha_{u}^j,t_2)$,
        \item $(\alpha_{u}^j,c,t_3)$,
        \item $(c,\beta_e,t_4)$,
        \item $(\beta_e,c,t_5)$,
        \item $(c,\gamma_{v}^i,t_6)$,
        \item $(\gamma_{v}^i, c, t_7)$, 
        \item $(c,z_{i,j},t_8)$,
    \end{itemize}
    \end{multicols}
    for some $t_1<t_2<\ldots<t_8$.
\end{lemma}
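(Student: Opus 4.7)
My plan is to trace the eight transitions of $P_{i,j}$ in order, exploiting the fact that nearly every intermediate vertex along its way has only two incident time-edges, so that visiting such a vertex deterministically fixes both the arrival and departure times.

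First I pin down the endpoints. Since $s_{i,j}$ is incident to $c$ only in layer $E^{i,j}_1$, the walk must begin with $(s_{i,j},c,t_1)$ at the time of $E^{i,j}_1$. Symmetrically, $z_{i,j}$ is incident to $c$ only in $E^{i,j}_f$, forcing the final transition $(c,z_{i,j},t_8)$. Next I claim that $P_{i,j}$ must leave $c$ already in $E^{i,j}_2$: otherwise it still occupies $c$ when, in some later layer $E^{i',j'}_1$, the walk $P_{i',j'}$ tries to move from $s_{i',j'}$ onto $c$, causing a temporal intersection. In $E^{i,j}_2$ the only edges at $c$ are $\{c,\alpha_u^j\}$ for $u\in V_i$ with a neighbor in $V_j$, so the second transition is $(c,\alpha_u^j,t_2)$ for some $u=v_a^i$. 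A symmetric argument, working backwards from $z_{i,j}$, forces $(\gamma_v^i,c,t_7)$ at the time of $E^{i,j}_{f-1}$ and, immediately before it, $(c,\gamma_v^i,t_6)$ in layer $E^j_{4k(b-1)+4i+2}$ for some $v=v_b^j$ with a neighbor in $V_i$.

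Then I connect the two ends through the $E^i$ and $E^j$ sequences. The vertex $\alpha_u^j$ is incident to $c$ in exactly one layer after $E^{i,j}_2$, namely $E^i_{4k(a-1)+4(j-1)}$, so the third transition $(\alpha_u^j,c,t_3)$ is forced. By \cref{corollary1}, the walks $P_i,\tP_i$ select some $\ell^\star$, and by the pattern established in the proofs of \cref{lemma1,lemma2}, $P_{i,j}$ can occupy $c$ for only three consecutive layers of the $E^i$-sequence, falling inside the selection gap around $\ell^\star$; any longer stay yields a collision with $P_i$ or $\tP_i$. Hence $P_{i,j}$ must depart $c$ at the very next available layer, which is $E^i_{4k(a-1)+4(j-1)+2}$, giving $(c,\beta_e,t_4)$ for some edge $e$ incident to $u$. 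Since $\beta_e$ is incident to $c$ in exactly one further layer, the transition $(\beta_e,c,t_5)$ at time $E^j_{4k(b'-1)+4i}$ is forced, where $v'=v_{b'}^j$ is the other endpoint of $e$.

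The main obstacle is identifying $v'$ with the $v$ found in the tail argument, so that indeed $e=\{u,v\}$. For this, once $P_{i,j}$ returns to $c$ at time $t_5$ inside the $E^j$-sequence, applying \cref{corollary1} and the selection-gap reasoning of \cref{lemma2} now to $P_j,\tP_j$ forces $P_{i,j}$ to depart at the very next available layer, which is $E^j_{4k(b'-1)+4i+2}$, moving to $\gamma_{v'}^i$. The tail analysis already pinpointed a unique departure layer from $c$ to $\gamma_v^i$, namely $E^j_{4k(b-1)+4i+2}$, so comparing time labels gives $b'=b$ and hence $v'=v$. Assembling the eight transitions with the monotone time sequence $t_1<\cdots<t_8$, which follows from the global ordering of layers in $\TG$, completes the proof.
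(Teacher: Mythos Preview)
Your overall structure—pinning down the three head transitions through $\alpha_u^j$ and the three tail transitions through $\gamma_v^i$, then closing the gap in the middle—matches the paper's. The head and tail arguments are essentially the same; the only omission is the boundary case where $(i,j)$ is the last pair in the ordering of the $E^{i',j'}_1,E^{i',j'}_2$ layers, in which the collision forcing departure at $E^{i,j}_2$ is with $P_1$ rather than with another $P_{i',j'}$ (the paper mentions this explicitly).

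The divergence, and the genuine gap, is in how you force the fourth and fifth transitions through some $\beta_e$. The paper does this by a \emph{global pigeonhole}: each colour's selection gap can accommodate at most $k-1$ of the walks $P_{i',j'}$; there are $\binom{k}{2}$ such walks and each must enter at least the two gaps corresponding to its own colours; since the total capacity is $k(k-1)=2\binom{k}{2}$, no walk can enter a third gap. Hence after $P_{i,j}$ leaves the gap of colour $i$, its next occurrence at $c$ is already in the gap of colour $j$, and the only leaf permitting that jump is some $\beta_e$ with $e=\{u,v\}$.

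Your local ``three consecutive layers'' argument does not do this job. First, the assertion that ``any longer stay yields a collision with $P_i$ or $\tP_i$'' is incorrect: you are inside the selection gap, which is precisely the interval in which neither $P_i$ nor $\tP_i$ occupies $c$; any collision forcing $P_{i,j}$ out would have to be with some other $P_{i',j'}$, and bounding that requires the counting over all $k-1$ competing walks that you never carry out. Second, even granting that $P_{i,j}$ departs $c$ in layer $E^i_{4k(a-1)+4(j-1)+2}$, that layer also contains the edge $\{c,x^i_{k(a-1)+j}\}$ (this is the first of the two adjacencies of that $x$-vertex), so the destination $\beta_e$ is not the only option available; nothing in your argument excludes a detour through $x^i_{k(a-1)+j}$ and back before eventually taking $\beta_e$. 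Ruling out such detours—and hence establishing that $P_{i,j}$ consists of exactly the eight listed transitions—requires the pigeonhole step or an equivalent global argument over all $\binom{k}{2}$ verification walks.
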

\begin{proof}
Temporal walk $P_{i,j}$ (with $i<j$) starts at $s_{i,j}$ and then moves to the center vertex $c$. Hence, the first transition of $P_{i,j}$ is $(s_{i,j},c,t_1)$ for some $t_1$.
In the next time step, $P_{i,j}$ has to move to some vertex $\alpha_u^{j}$ with $u\in V_i$, otherwise $P_{i,j}$ temporally intersects some other temporal walk $P_{i',j'}$ connecting $(s_{i',j'},z_{i',j'})$ or the temporal walk $P_1$ connecting $(s_1,z_1)$. It follows that the second transition of $P_{i,j}$ is $(c,\alpha_{u}^j,t_2)$ for $t_2=t_1+1$ and the third transition of $P_{i,j}$ is $(\alpha_{u}^j,c,t_3)$ for some $t_3>t_2$.

Next, we consider the last three transitions of $P_{i,j}$. The last one is clearly $(c,z_{i,j},t_8)$ for some $t_8$. The previous transition must occur at time $t_7=t_8-1$ since otherwise $P_{i,j}$ temporally intersects some other temporal walk $P_{i',j'}$ connecting $(s_{i',j'},z_{i',j'})$ or the temporal walk $P_k$ connecting $(s_k,z_k)$. It follows that the second-last transition of $P_{i,j}$ is $(\gamma_{v}^i, c, t_7)$ for some $v\in V_j$ and the third-last transition is $(c,\gamma_{v}^i,t_6)$ for some $t_6<t_7$.

It remains to show what happens between the third and the third-last transition of $P_{i,j}$.
Let $P_i,\tP_i\in\mathcal{P}$ be temporal walks connecting $(s_i,z_i)$ and $(\ts_i,\ttt_i)$, respectively. By \cref{corollary1} we have that there exists an $\ell \in [n]$ such that $P_i$ visits $y^i_\ell$ and $\tP_i$ visits $\ty^i_\ell$. Furthermore, we have that $P_i$ visits no $y^i_{\ell'}$ with $\ell'\neq\ell$ and $\tP_i$ visits no $\ty^i_{\ell'}$ with $\ell'\neq\ell$. It follows that, by construction, there is one interval of size $4(k-1)$ where $c$ is neither occupied by $P_i$ nor $\tP_i$ (when the walks move to $y^i_\ell$ and $\ty^i_\ell$, respectively), that we associate with color $i$. Analogously, there is one such interval that we associate with color $j$.
We have that with $(\alpha_{u}^j,c,t_3)$ the temporal walk $P_{i,j}$ enters the interval corresponding to color $i$ and with $(c,\gamma_{v}^i,t_6)$ leaves the interval corresponding to color $j$, otherwise $P_{i,j}$ would temporally intersect at least one of the walks $P_i,\tP_i,P_j,\tP_j$, where $P_j,\tP_j\in\mathcal{P}$ are the temporal walks connecting $(s_j,z_j)$ and $(\ts_j,\ttt_j)$, respectively. We can observe that at most $k-1$ walks $P_{i',j'}\in\mathcal{P}$ connecting $(s_{i',j'},z_{i',j'})$ can enter and leave the interval corresponding to a color. There are $k$ such intervals, there are $\binom{k}{2}$ such walks and each such walk needs to enter at least two such intervals.
By the pigeonhole principle we have that if $P_{i,j}$ enters more than two intervals, that is, the one of color $i$, the one of color $j$, and at least one more of some color $i'$, then there is one interval that is visited by $k$ walks, a contradiction to the assumption that all temporal walks in $\mathcal{P}$ are pairwise temporally disjoint.
It follows that after $P_{i,j}$ leaves the interval of color $i$, the next time it visits the center vertex $c$ needs to be in the interval of color $j$. This is only possible if $e=\{u,v\}\in E$ and $P_{i,j}$ can use transitions $(c,\beta_e,t_4)$ and $(\beta_e,c,t_5)$ with $t_3<t_4<t_5<t_6$. 
\end{proof}

We are now ready to prove \cref{thm:k-star}.
\begin{proof}[Proof of \cref{thm:k-star}]
Let $(\TGcompact, S)$ be the \TempDisjointWalks\ instance described by \cref{constr:star} for \textsc{Multicolored Clique} instance $G=(V_1\uplus\ldots\uplus V_k, E)$. 
The instance $(\TGcompact, S)$ can clearly be computed in polynomial time and we can observe that $|S|\in O(k^2)$. 
We show that $(\TGcompact, S)$ is a yes-instance of \TempDisjointWalks\ if and only if $G$ contains a clique of size $k$.

$(\Rightarrow)$: Assume $(\TGcompact, S)$ is a yes-instance of \TempDisjointWalks\ and let $\mathcal{P}$ be a solution. Consider color $i$ and let $P_i,\tP_i\in\mathcal{P}$ be temporal walks connecting $(s_i,z_i)$ and $(\ts_i,\ttt_i)$, respectively. By \cref{corollary1} we have that there exists an $\ell \in [n]$ such that $P_i$ visits $y^i_\ell$ and $\tP_i$ visits $\ty^i_\ell$. Furthermore,  $P_i$ visits no $y^i_{\ell'}$ with $\ell'\neq\ell$ and $\tP_i$ visits no $\ty^i_{\ell'}$ with $\ell'\neq\ell$. Note that this implies that $\ell$ is uniquely defined. We say that $v_\ell^i$ is the \emph{selected} vertex of color $i$. Now let $X$ be the set of the selected vertices of all colors, that is, $X=\{v_\ell^i\in V_i\mid i\in[k] \text{ and }v_\ell^i\text{ is the selected vertex of color }i\}$. We claim that $X$ is a clique in $G$. Note that by definition, $X$ contains exactly one vertex of each color.

Assume for contradiction that $X$ is not a clique in $G$. Then there exist $v_\ell^i,v_{\ell'}^j\in X$ with $i\neq j$ such that $\{v_\ell^i,v_{\ell'}^j\}\notin E$. Consider the temporal walk $P_{i,j}\in\mathcal{P}$ that connects $(s_{i,j},z_{i,j})$. By \cref{lemma3} we have that there is some $e=\{u,v\}\in E$ with $u\in V_i$ and $v\in V_j$ such that $P_{i,j}$ visits vertices $\alpha_u^j$, $\beta_e$, and $\gamma_v^i$. 
Let $P_i,\tP_i\in\mathcal{P}$ be temporal walks connecting $(s_i,z_i)$ and $(\ts_i,\ttt_i)$, respectively, and let $P_j,\tP_j\in\mathcal{P}$ be temporal walks connecting $(s_j,z_j)$ and $(\ts_j,\ttt_j)$, respectively.
By the construction of~$\TG$ (\cref{constr:star}), we have that the temporal walks $P_{i,j},P_i,\tP_i,P_j,\tP_j$ can only all be pairwise temporally disjoint if $u=v_\ell^i$ and $v=v_{\ell'}^j$, and hence $e=\{v_\ell^i,v_{\ell'}^j\}\in E$, a contradiction. 

$(\Leftarrow)$: Assume $G$ is a yes-instance of \textsc{Multicolored Clique} and let $X\subseteq\bigcup_{i=1}^k V_i$ with $|X\cap V_i|=1$ for all $i\in [k]$ be a multicolored clique in $G$. We construct a solution $\mathcal{P}$ for $(\TG,S)$ as follows.
For each $i \in [k]$, let $a_i \in [n]$ be chosen such that $v^i_{a_i} \in X \cap V_i$.
For each $(s,z) \in S$, we must give a temporal $(s,z)$-walk and then show that these walks are pairwise temporally disjoint.

We start with $(s_{i,j},z_{i,j})$ for $i,j \in [k]$ with $i<j$.
Let $v \coloneqq v^i_{a_i}$, $v'\coloneqq v^j_{a_j}$, and $e \coloneqq \{v,v'\}$.
Observe that, because $v,v' \in X$, the edge $e$ is present in $G$.
The $(s_{i,j},z_{i,j})$-walk $P_{i,j}$ starts in $s_{i,j}$, moves to $c$ in the layer $E^{i,j}_1$, and continues to $\alpha_{v}$ in layer $E^{i,j}_2$.
It remains there until the layer $E^i_{4k(a_i-1)+4(j-1)}$, at which point it returns to $c$.
It then moves on to $\beta_e$ in layer $E^i_{4k(a_i-1)+4(j-1)+2}$.
From there, the walk moves to $c$ in layer $E^j_{4k(a_j-1)+4i}$ and on to $\gamma_{v'}$ in $E^j_{4k(a_j-1)+4i+2}$.
Finally, in the layers $E^{i,j}_{f-1}$ and $E^{i,j}_{f}$, it moves through $c$ to $z_{i,j}$.

Next consider the walks $P_i$ and $\tP_i$ for $i \in [k]$ connecting $(s_i,z_i)$ and $(\ts_i,\ttt_i)$, respectively.
In $E^i_{0}$, the walk $P_i$ moves to from $s_i$ to $c$.
For $\ell \in [(a_i-1)k]$, it moves from $c$ to $w^i_\ell$ in layer $E^i_{4\ell -3}$ and back to $c$ in $E^i_{4\ell}$.
In layer $E^i_{4(a_i-1)k+1}$, the walk then proceeds to $y^i_{a_i}$, remaining there until it returns to $c$ in $E^i_{4a_ik+1}$.
For every $\ell \in [(a_i+1)k,nk]$, the walk moves from $c$ to $x^i_{\ell}$ in $E^i_{4\ell-2}$ and back to $c$ in $E^i_{4\ell+1}$.
In $E^i_{4kn+4}$, the walk reaches its goal $z_i$.
The walk for $\tP_i$ is analogous.

We must now argue that none of the constructed walks temporally intersect.
The only vertex used by more than one walk is $c$, so the constructed walks cannot temporally intersect in any other vertex.
They also cannot temporally intersect in $c$ in a layer $E^{i,j}_r$ with $r \in \{1,2,f-1,f\}$ because only $P_{i,j}$ occupies $c$ in those layers.
In the layers $E^i_{r}$, the vertex $c$ is occupied by the walks $P_i$, $\tP_i$, $P_{i,j}$ with $j\in[i+1,k]$ and $P_{j,i}$ with $j \in [i-1]$.
The walk $P_i$ only occupies $c$ in layers $E^i_{r}$ with
\begin{itemize}
	\item $r \bmod 4 \in \{0,1\}$ and $r \leq 4(a_i-1)k$ or
	\item $r \bmod 4 \in \{1,2\}$ and $r \geq 4(a_i + 1)k +1$.
\end{itemize}
The walk $\tP_i$ only occupies $c$ in layers $E^i_{r}$ with
\begin{itemize}
	\item $r \bmod 4 \in \{2,3\}$ and $r \leq 4(a_i-1)k$ or
	\item $r \bmod 4 \in \{0,4\}$ and $r \geq 4(a_i + 1)k +1$.
\end{itemize}
Hence, those two also do not temporally intersect.
Meanwhile, the walks $P_{i,j}$ and $P_{j,i}$ only occupy $c$ in the layers $E^i_{4k(a_i-1)+4+(j-1)},\ldots,E^i_{4k(a_i-1)+4+(j-1)+2}$, so they also do not intersect pairwise or with $P_i$ and $\tP_i$.
\end{proof}

\section{Algorithms for \AllProbs}
In this section, we present two new algorithms, one for \TempDisjointPaths\  and one for \TempDisjointWalks.
For \TempDisjointPaths, we present in \cref{sec:FESalgo} an FPT-algorithm for the combination of the number of source-sink pairs and the feedback edge number of the underlying graph as a parameter. For \TempDisjointWalks, we give  in \cref{sec:walksonlines} an FPT-algorithm for the number of source-sink pairs that requires the underlying graph of the input to be a path.

In both cases we generalize results and resolve open questions by \citet{KlobasMMNZ23}. Furthermore, our computational hardness results in \cref{sec:hardness} imply that our algorithmic result cannot be improved significantly (from a classification standpoint).

\subsection{Algorithm for \tdp}\label{sec:FESalgo}
In this section, we present an FPT-algorithm for \TempDisjointPaths\ parameterized by the combination of the number of source-sink pairs and the feedback edge set number of the underlying graph. This generalizes the FPT-algorithm by \citet{KlobasMMNZ23} for \TempDisjointPaths\ parameterized by the number of source-sink pairs for temporal forests. \cref{thm:w1hardnessVCN} implies that we presumably cannot replace the feedback edge set number of the underlying graph by a smaller parameter such as the feedback vertex number or the treewidth and still obtain fixed-parameter tractability.

\begin{theorem}\label{thm:fes}
\tdp\ is fixed-parameter tractable when parameterized by the combination of the number $|S|$ of source-sink pairs and the feedback edge number of the underlying graph.    
\end{theorem}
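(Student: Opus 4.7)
The strategy is to branch on the combinatorial pattern by which each temporal path uses the feedback edges, then reduce each branch to the forest case covered by the FPT algorithm of \citet{KlobasMMNZ23}. First I would compute, in polynomial time, a minimum feedback edge set $F \subseteq E_U$ of the underlying graph $G_U$; let $f = |F|$ and let $\mathcal{F} = (V, E_U \setminus F)$ be the resulting spanning forest. Since temporal paths are vertex-simple, every $P_i$ in a hypothetical solution traverses each feedback edge at most once, and so admits a unique decomposition
\[
P_i \;=\; Q_i^0 \cdot (e_i^1, t_i^1) \cdot Q_i^1 \cdots (e_i^{k_i}, t_i^{k_i}) \cdot Q_i^{k_i},
\]
with $k_i \leq f$, each $e_i^j \in F$, and each $Q_i^j$ a (possibly trivial) temporal path inside a single tree of $\mathcal{F}$.

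For each source-sink pair $(s_i,z_i) \in S$, I would branch on the \emph{skeleton} of $P_i$, namely the ordered sequence of signed (oriented) feedback edges it traverses. The number of skeletons per pair is at most $\sum_{k=0}^{f} (2f)^k = 2^{O(f \log f)}$, giving $2^{O(f |S| \log f)}$ skeleton tuples overall, which is FPT in $f + |S|$. Once the skeleton of $P_i$ is fixed, the endpoints (and hence the unique static walk in $\mathcal{F}$) of every $Q_i^j$ are determined. It then remains to pick a time label of $\calG$ for each feedback-edge traversal and a temporal realization of each $Q_i^j$ such that each $P_i$ is a valid temporal path and all $P_i$'s are pairwise temporally disjoint. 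To further decouple these choices, I would branch on the relative time order of the at most $|S| f$ feedback-edge traversals across all paths; this contributes a factor of $(|S| f)!$, still FPT in $f + |S|$.

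Given a skeleton assignment together with a guessed relative order of feedback-edge traversals, the problem reduces to a \emph{windowed} forest instance of \tdp: each sub-path $Q_i^j$ contributes a source-sink pair in $\mathcal{F}$ whose start- and end-times must lie in time windows determined by the guessed order of the two adjacent feedback-edge traversals. The plan is to extend the tree-DP of \citet{KlobasMMNZ23} so that such window constraints are tracked as part of the DP state; since the total number of windowed sub-paths is $\sum_i (k_i+1) \leq |S|(f+1)$, this would yield an algorithm that is FPT in $|S| + f$. The hard part will be twofold: (a) argue that the tree-DP extends to the windowed setting without blowing up the running time beyond FPT in the number of sub-paths, and (b) recover absolute time labels for the feedback edges from a successful windowed solution; for (b), once the sub-path schedules have been fixed, it should suffice to process the feedback-edge traversals in the guessed order and greedily assign each the earliest time label of the edge that falls inside its window and lies strictly between the completion time of the preceding sub-path and the start time of the next one, then verify consistency in polynomial time per branch.
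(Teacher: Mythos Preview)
Your high-level decomposition (branch on how each path uses feedback edges, then work in the spanning forest) is natural, but the reduction to a ``windowed'' forest instance has a genuine gap. You say the start/end windows of each sub-path $Q_i^j$ are ``determined by the guessed order of the two adjacent feedback-edge traversals,'' but an \emph{order} alone does not yield concrete time intervals: the window for $Q_i^j$ is $[t_i^j, t_i^{j+1}]$, and the $t_i^j$'s are precisely the unknowns you propose to assign in step~(b) \emph{after} the forest instance has been solved. This is circular---the forest DP needs the windows, the windows need the feedback-edge times, and those need the forest solution. A second, related modelling issue: when you split $P_i$ into independent terminal pairs, the endpoint $u$ of $Q_i^{j-1}$ is, in the full path $P_i$, occupied over the entire interval from its arrival at $u$ until $t_i^j$, not merely at the arrival instant. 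Treating the $Q_i^j$'s as separate source--sink pairs in \tdp{} on the forest silently drops these waiting intervals at feedback-edge endpoints, so temporal disjointness there is no longer enforced. To make your plan go through, the modified tree-DP would have to track the unknown $t_i^j$'s and these extended occupation intervals simultaneously; you have not argued this stays FPT, and it is not clear it does.

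The paper avoids both issues by not using the forest algorithm as a black box. After pruning degree-one non-terminal vertices, it decomposes the underlying graph into $O(|F|+|S|)$ \emph{interesting path segments}: maximal paths whose internal vertices have degree~$2$ and whose endpoints are sources/sinks, feedback-edge endpoints, or vertices of degree $\ge 3$. It then branches on which segments each temporal path uses (its \emph{configuration}) and on the relative order in which the solution paths occupy each interesting vertex. A local consistency check---two paths sharing a segment must have the same relative order at both of its endpoints, since they cannot overtake or pass through one another---filters out infeasible orderings. For each surviving combination the algorithm linearises the combined order and processes the pairs (interesting vertex, source--sink pair) greedily in that order, computing prefix-foremost temporal sub-paths along each segment and deleting spent time edges. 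Because every segment is internally branch-free, the greedy prefix-foremost choice is safe, and feedback-edge endpoints are just interesting vertices whose occupation order has already been guessed---so no separate feedback-edge time-assignment step is needed. The resulting running time is $2^{O((|F|\cdot |S|+|S|^2)^2\log |S|)}\cdot |\TG|^{O(1)}$.
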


The high-level idea of the algorithm is as follows.
We can bound the number of paths in the underlying graph between any source-sink pair in a function of its feedback edge number. We can do the same for the number of how often two such paths intersect. Hence, for a given set of paths, the total number of such intersections is bounded by a function of the feedback edge set number and the number of source-sink pairs. For each such intersection, we can consider all possibilities in which order it is traversed by the temporal paths. This gives us enough information to verify in polynomial time whether the possibility of how and in which order the source-sink pairs should be connected is realizable.

\begin{proof}[Proof of \cref{thm:fes}]
Let $(\TGcompact, S)$ be an instance of \TempDisjointPaths\ such that the underlying graph of $\TG$ is a path. Let $G$ denote the underlying graph of $\TG$. Recall that $\hat{S}$ denotes the set of all vertices in $V$ that appear as sources or sinks in $S$.

Since we only consider temporal paths in the solution, we can perform the following preprocessing step. We exhaustively remove all vertices $v\in V$ that have degree at most one in the underlying graph $G$ and that are not sources or sinks, that is, $v\notin \hat{S}$. A temporal $(s,z)$-path for $(s,z)\in S$ cannot visit such vertices. Let $G'$ denote the underlying graph after the preprocessing step and let $F$ be a minimum feedback edge set of $G'$. We call a vertex $v$ of $G'$ \emph{interesting} if one of the following properties hold.
\begin{itemize}
    \item Vertex $v$ is a source or a sink, that is, $v\in\hat{S}$.
    \item Vertex $v$ is incident with a feedback edge.
    \item Vertex $v$ has degree at least three in $G'$.
\end{itemize}
Let $D$ denote the set of interesting vertices. We have that there are $O(|F|+|S|)$ interesting vertices in $G'$, that is, $|D|\in O(|F|+|S|)$. Furthermore, there are $O(|F|+|S|)$ \emph{interesting path segments} in $G'$,  that is, paths in $G'$ start and end at interesting vertices and have no internal vertices that are interesting~\cite[Lemma~2]{BentertDKNN20}. Let $\mathcal{P}$ denote the set of interesting path segments. Clearly, every temporal $(s,z)$-path for $(s,z)\in S$ follows a path in $G'$ that is composed of interesting path segments. For each $(s,z)\in S$ there are $2^{O(|F|+|S|)}$ different set of interesting path segments that are visited by a temporal $(s,z)$-path. For a temporal $(s,z)$-path $P$, we call the set of interesting path segments that are visited by $P$ the \emph{configuration} of $P$. Note that there is one unique order and direction for $P$ to traverse the interesting path segments in its configuration. 
Considering all source-sink pairs, we have $2^{O(|F|\cdot |S|+|S|^2)}$ possible sets of configurations for the temporal paths in a solution.

For each such set of configurations we consider in which order the temporal $(s,z)$-paths with $(s,z)\in S$ occupy the interesting vertices $D$. 
For a fixed set of configurations there are $|S|^{O(|S|)}$ possible orderings for one fixed interesting vertex. Hence, since $|D|\in O(|F|+|S|)$, there are
$|S|^{O(|F|\cdot |S|+|S|^2)}$ possible sets of orderings for all interesting vertices. However, some of these orderings may not be compatible with each other, that is, they are not realizable without making the temporal $(s,z)$-paths temporally intersect. 
Consider a fixed set of configurations. We add the following constraint to the set of orderings.
\begin{itemize}
    \item 
    If the two temporal paths $P,P'$ connecting $(s,z)\in S$ and $(s',z')\in S$, respectively, have the same interesting path segment in their configuration, then they must have the same relative order in both endpoints of the path segment.
\end{itemize}
If both temporal paths traverse the interesting path segment in the same direction, they cannot ``overtake'' each other without temporally intersecting, that is, if the order in which they occupy the first vertex of the segment is the reversed order in which they occupy the last vertex of the segment, then there must be a vertex in the path segment where the two paths temporally intersect. Similarly, if they traverse the path segment in opposite directions the earlier temporal path must finish traversing the segment before the later temporal path can start, otherwise they would temporally intersect. This is illustrated in \cref{fig:condition}.

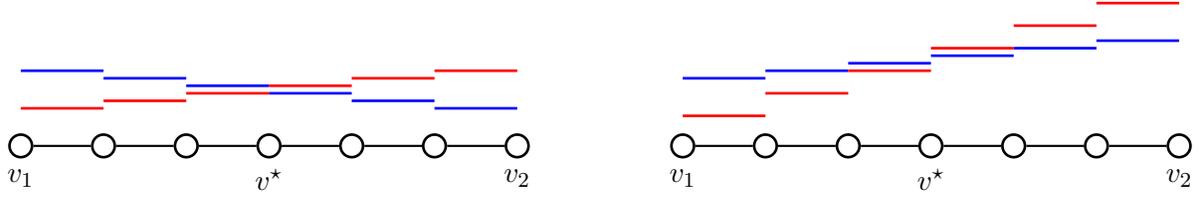
\begin{figure}[t]
    \centering
    \begin{tikzpicture}[line width=1pt, xscale=.55]
		\path
			(0,2) node[vert,label=below:$v_1$](b0) {}
			(2,2) node[vert](b1) {}
			(4,2) node[vert](b2) {}
			(6,2) node[vert,label=below:$v^\star$](b3) {}
			(8,2) node[vert](b4) {}
			(10,2) node[vert](b5) {}
			(12,2) node[vert,label=below:$v_2$](b6) {};
		\draw[edge] (b0) edge node [midway, below] {} (b1);
		\draw[edge] (b1) edge node [midway, below] {} (b2);
		\draw[edge] (b2) edge node [midway, below] {} (b3);
		\draw[edge] (b3) edge node [midway, below] {} (b4);
		\draw[edge] (b4) edge node [midway, below] {} (b5);
		\draw[edge] (b5) edge node [midway, below] {} (b6);
		\draw[blue] (10,2.5) -- (12,2.5);
		\draw[blue] (8,2.6) -- (10,2.6);
		\draw[blue] (6,2.7) -- (8,2.7);
		\draw[blue] (4,2.8) -- (6,2.8);
		\draw[blue] (2,2.9) -- (4,2.9);
		\draw[blue] (0,3) -- (2,3);
    
        \draw[red] (0,2.5) -- (2,2.5);
		\draw[red] (2,2.6) -- (4,2.6);
		\draw[red] (4,2.7) -- (6,2.7);
		\draw[red] (6,2.8) -- (8,2.8);
		\draw[red] (8,2.9) -- (10,2.9);
		\draw[red] (10,3) -- (12,3);

  		\path
			(16,2) node[vert,label=below:$v_1$](c0) {}
			(18,2) node[vert](c1) {}
			(20,2) node[vert](c2) {}
			(22,2) node[vert,label=below:$v^\star$](c3) {}
			(24,2) node[vert](c4) {}
			(26,2) node[vert](c5) {}
			(28,2) node[vert,label=below:$v_2$](c6) {};
		\draw[edge] (c0) edge node [midway, below] {} (c1);
		\draw[edge] (c1) edge node [midway, below] {} (c2);
		\draw[edge] (c2) edge node [midway, below] {} (c3);
		\draw[edge] (c3) edge node [midway, below] {} (c4);
		\draw[edge] (c4) edge node [midway, below] {} (c5);
		\draw[edge] (c5) edge node [midway, below] {} (c6);
		\draw[blue] (16,2.9) -- (18,2.9);
		\draw[blue] (18,3) -- (20,3);
		\draw[blue] (20,3.1) -- (22,3.1);
		\draw[blue] (22,3.2) -- (24,3.2);
		\draw[blue] (24,3.3) -- (26,3.3);
		\draw[blue] (26,3.4) -- (28,3.4);
    
        \draw[red] (16,2.4) -- (18,2.4);
		\draw[red] (18,2.7) -- (20,2.7);
		\draw[red] (20,3) -- (22,3);
		\draw[red] (22,3.3) -- (24,3.3);
		\draw[red] (24,3.6) -- (26,3.6);
		\draw[red] (26,3.9) -- (28,3.9);
\end{tikzpicture}
    \caption{Illustration of an interesting path segments from interesting vertex $v_1$ to interesting vertex~$v_2$ and two temporal paths that follow the segment (red and blue). The horizontal position of the colored edges represents the time labels. Vertex $v_1$ is occupied first be the red path and then by the blue path. Vertex $v_2$ is occupied first by the blue path and then by the red path. On both sides, the red temporal path follows the interesting path segment from $v_1$ to $v_2$. On the right side, the blue temporal path also follows the segment from $v_1$ to $v_2$, and on the left side, the blue temporal path follows the segment from $v_2$ to $v_1$. We can see that in both cases, the paths temporally intersect in $v^\star$.}
    \label{fig:condition}
\end{figure}

Furthermore, we can consider each set of orderings as a partial order $<_1$ over $D\times S$, that is, all combinations of an interesting vertex and a source-sink pair.
For each source-sink pair $(s,z)\in S$ the configuration of temporal $(s,z)$-path implicitly defines an ordering in which the interesting vertices are visited by the path. Considering those orders for all source-sink pairs, we get another partial order $<_2$ over $D\times S$.
Let $<^\star$ denote the transitive closure of the relation $<_{12}$ over $D\times S$ with $x<_{12}y$ if $x<_{1}y$ or $x<_{2}y$. If $<^\star$ is a partial order, that is, there are are no $x,y\in D\times S$ such that $x <^\star y$ and $y <^\star x$, then $<_1$ and $<_2$ have a common linearization which we call the \emph{iteration order}.
If this is the case and the set of orderings also meets the above constraint, then we call the set of orderings \emph{valid}. Note that we can check in polynomial time whether a given set of orderings is valid and if so compute an iteration order in polynomial time.

Having a configuration together with a valid set of orderings, we show that we can check in polynomial time whether it is realizable by pairwise temporally disjoint $(s,z)$-paths for all $(s,z)\in S$.
Formally, the algorithm executes the following steps.
\begin{enumerate}
	\item Exhaustively remove vertices with degree $\le 1$ from the underlying graph $G$, except sources or sinks, that is, vertices in $\hat{S}$. Let~$G'$ be the resulting (static) graph.
	\item Compute a minimum feedback edge set $F$ of $G'$. 
	\item Let $V^{\ge 3}$ denote all vertices of $G'$ with degree at least three. 
		Partition the forest $G' - F$ into a set of maximal paths $\mathcal{P}$ with endpoints in $D=\bigcup_{e\in F} e \cup V^{\ge 3} \cup \hat{S}$, and intermediate vertices all of degree 2.
		It holds that $|\mathcal{P}| \in \mathcal{O}(|F|+|S|)$~\cite[Lemma~2]{BentertDKNN20}.

  Here, $D$ is the set of interesting vertices and $\mathcal{P}$ is the set of interesting path segments.
	\item 
		Iterate over all configurations for all temporal $(s,z)$-paths with $(s,z)\in S$.
  \item Consider a fixed configuration for all temporal $(s,z)$-paths with $(s,z)\in S$.
  
  Iterate over all valid sets of orderings in which the temporal $(s,z)$-paths in the solution occupy the vertices in $D$.

  \item Consider a fixed valid set of orderings. Compute an iteration order for $D\times S$ and iterate over the combinations of interesting vertices and source-sink pairs in that order. Let $(v, (s,z))$ with $v\in D$ and $(s,z)\in S$ be the current combination. 
  \begin{itemize}
      \item If $v=s$, then let $v^+\in D$ denote the next interesting vertex that is visited by the temporal path connecting $s$ and $z$. Let $v'$ be the vertex in the path segment from $v$ to $v^+$ that is right before $v^+$. 
      Compute a prefix-foremost temporal path $P$ (earliest possible arrival time at every vertex) from $v$ to $v'$ that follows the path segment from $v$ to $v^+$. 

    If no such temporal path exists, discard the current combination of configuration with valid set of orderings.
    Otherwise, for each transition $(v,w,t)$ in $P$, remove all time edges incident with $v$ or $w$ that have a time label $t'\le t$.
      %
    \item If $v=z$, then let $v^-\in D$ denote the previous interesting vertex that is visited by the temporal path connecting $s$ and $z$. Let $v'$ be the vertex in the path segment from $v^-$ to $v$ that is right before $v$. Let $(\{v',v\},t)$ be the time edge between $v'$ and $v$ with the smallest time label.

    If no such time edge exists, discard the current combination of configuration with valid set of orderings.
    Otherwise, remove all time edges incident with $v'$ or $v$ that have a time label $t'\le t$.
    \item If $s\neq v\neq z$, then let $v^+\in D$ denote the next interesting vertex that is visited by the temporal path connecting $s$ and $z$ and let $v^-\in D$ denote the previous interesting vertex that is visited by the temporal path connecting $s$ and $z$. Let $v_1$ be the vertex in the path segment from $v^-$ to $v$ that is right before $v$ and let $v_2$ be the vertex in the path segment from $v$ to $v^+$ that is right before $v^+$. 
      Compute a prefix-foremost temporal path $P$ (earliest possible arrival time at every vertex) from $v_1$ to $v_2$ that visits $v$ right after $v_1$ and then follows the path segment from $v$ to $v^+$. 

    If no such temporal path exists, discard the current combination of configuration with valid set of orderings.
    Otherwise, for each transition $(v,w,t)$ in $P$, remove all time edges incident with $v$ or $w$ that have a time label $t'\le t$.
  \end{itemize}
  Proceed with the next combination of interesting vertex and source-sink pair. If there is no further combination, output YES.
      \item If all combinations of configuration with valid set of orderings were discarded, output NO.
\end{enumerate}
 Since we have $2^{O(|F|\cdot |S|+|S|^2)}$ possible configurations and $|S|^{O(|F|\cdot |S|+|S|^2)}$ possible valid sets of orderings, the running time of the algorithm is in $2^{O((|F|\cdot |S|+|S|^2)^2\cdot\log |S|)}\cdot |\TG|^{O(1)}$. 
Note that with polynomial overhead, the algorithm can also output the solution.

By the arguments made before, it is easy to check that if the algorithm outputs YES, then we face a yes-instance. 

For the other direction, assume that we face a yes-instance and thus there is a solution $\mathcal{S}$. For each $(s,z)\in S$, the corresponding temporal $(s,z)$-path in $\mathcal{S}$ follows a path in the underlying graph $G$. Since the temporal paths do not revisit vertices, all degree one vertices (that are not sources or sinks) can be exhaustively removed. It follows that each temporal $(s,z)$-path in $\mathcal{S}$ follows a path in $G'$. Consider the corresponding paths in $G'$. Each of them can be segmented into interesting path segments, where the set of segments implicitly defines the order and direction in which they are traversed. Hence, we can assume that in some iteration of the algorithm, we are considering for each $(s,z)\in S$ the set of interesting path segments that is traversed by the temporal $(s,z)$-path in $\mathcal{S}$, that is, the path's configuration. Naturally, we also have that every interesting vertex is occupied by (a subset of) the temporal $(s,z)$-paths in $\mathcal{S}$ in a certain order. 
By the arguments in the description of the algorithm, we have that if the two temporal paths in $\mathcal{S}$ connecting $(s,z) \in S$ and $(s',z') \in S$ traverse the same interesting path segment, then they must have the same relative order in both endpoints of the path segment. Hence, we have that the set of orderings is valid and will be considered by the algorithm. 

Lastly, note that we can assume w.l.o.g.\ that the temporal $(s,z)$-paths in $\mathcal{S}$ traverse the interesting path segments in a prefix-foremost (among the ones that do not temporally intersect) way up until the second-last vertex in the segment, and the last vertex is visited at the earliest possible time.
If they are not, we can simply replace the temporal path segments (up to the second last vertex) with prefix-foremost ones and replace the last transition with the earliest possible one. We can conclude that the algorithm outputs YES. 
\end{proof}





\subsection{Algorithm for \TempDisjointWalks}\label{sec:walksonlines}
In this section, we present an FPT-algorithm for \TempDisjointWalks\ parameterized by the number of source-sink pairs for the case where the underlying graph is a path.
Recall that a temporal graph that has a path as underlying graph is called a \emph{temporal line}.
\citet{KlobasMMNZ23} showed that \tdw{} is \NP-hard on temporal lines and they gave an FPT-algorithm for \tdp\ parameterized by the number of source-sink pairs for temporal forests. \cref{thm:k-star} implies that we presumably cannot adapt this FPT-algorithm for \tdw. However, we can obtain tractability for the case of temporal lines. This answers an open question by \citet{KlobasMMNZ23}.

\begin{theorem}
	\label{thm:k-line}
	\tdw{} on temporal lines is \fpt{} with respect to the number $|S|$ of source-sink pairs.
\end{theorem}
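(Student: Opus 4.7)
The plan is to combine a structural normalization lemma with an enumeration-plus-feasibility-check algorithm in the spirit of the FPT result for \tdp{} on forests by \citet{KlobasMMNZ23}, now adapted to exploit the one-dimensional geometry of the underlying path.

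First, I would establish a \emph{normalization lemma} asserting that every yes-instance admits a solution in which each temporal walk has bounded combinatorial complexity. Concretely, I would argue that we can always find a solution in which walk $i$ visits an interval of vertices $[l_i,r_i]$ with $\min(s_i,z_i)-l_i$ and $r_i-\max(s_i,z_i)$ both bounded by $O(|S|)$, and in which the walk undergoes at most $O(|S|)$ direction changes. The intuition is that on a line, any ``extra'' movement of a walk (a range extension beyond the source--sink segment, or a direction reversal) is only useful to avoid a temporal conflict with one of the other $|S|-1$ walks, since the line admits no alternative routes. An exchange-type argument, iteratively shrinking unnecessary detours and collapsing consecutive direction reversals that serve no avoidance purpose, should establish the bound.

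Second, based on this lemma, I would enumerate \emph{profiles} for the walks. A profile for walk $i$ specifies its range $[l_i,r_i]$ together with a direction-change sequence recorded as a bounded-length list of (position, direction) pairs. Because the distinguished positions that can appear as endpoints of ranges or as turning points lie in $O(|S|^2)$ many candidates (sources, sinks, and their $O(|S|)$-neighborhoods on the line), there are at most $|S|^{O(|S|)}$ profiles per walk, and hence $f(|S|)=2^{O(|S|^2\log|S|)}$ combinations of profiles across all $|S|$ walks.

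Third, for each combination of profiles, I would test feasibility in polynomial time. Given the profiles, the ordering of every walk's visits to every vertex is fully determined up to the choice of concrete time stamps; one can then scan the time axis in increasing order and greedily assign each walk's next prescribed transition the earliest available time edge, verifying along the way that no two walks temporally intersect (in the sense of overlapping occupation intervals, taking into account the asymmetric source/sink instantaneous occupation). If this greedy schedule completes for some profile combination, output YES; otherwise, output NO. The total running time is $f(|S|)\cdot|\TG|^{O(1)}$.

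The main obstacle will be proving the normalization lemma in a form tight enough to yield the claimed bound. A potential-function or amortization argument seems necessary to rule out cascading avoidance maneuvers, where walk $i$ detours to avoid walk $j$, which in turn must detour to avoid walk $k$, and so on, so that the total number of direction changes remains $O(|S|)$ per walk rather than exploding. Additional care will be needed to handle the occupation-interval semantics of temporal walks and the ``disappear at target'' convention at sources and sinks; the greedy feasibility check in the third step must be designed so that its correctness is invariant under the precise choice of time stamps within each profile.
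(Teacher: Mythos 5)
Your overall strategy mirrors the paper's: bound the number of direction changes per walk by $O(|S|)$ and localize them to $O(|S|)$-neighborhoods of the sources and sinks, then enumerate the resulting combinatorial shapes and check each candidate in polynomial time. However, there are two genuine gaps. First, the normalization lemma is exactly the crux of the proof, and you leave it unproved: you correctly identify the cascading-avoidance problem (walk $i$ detours to avoid walk $j$, which detours to avoid walk $k$, \ldots) but only gesture at ``a potential-function or amortization argument.'' The paper's resolution is concrete and worth internalizing: fix a solution minimizing the \emph{total length} of all walks; then any direction change of a walk at a vertex $a$ over a time window $(t,t')$ must be forced by another walk occupying $a$ strictly inside $(t,t')$ (otherwise deleting the back-and-forth yields a shorter solution that is still temporally disjoint), and that second walk either starts or ends at $a$ or itself turns around at a neighbour of $a$ within a strictly nested time window. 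Induction on the window length yields a chain of at most $|S|$ walks marching one vertex per step toward a source or sink, which simultaneously gives the $2|S|$ bound on turns per walk and the distance-$|S|$ localization of the turning points. Without an argument of this kind, your enumeration space is not provably bounded by a function of $|S|$.

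Second, your feasibility check is flawed as stated. You claim that, given the profiles, ``the ordering of every walk's visits to every vertex is fully determined up to the choice of concrete time stamps.'' It is not: two walks whose prescribed vertex ranges overlap can pass through the shared region in either temporal order, and the profiles say nothing about which goes first. A single greedy scan that hands each available time edge to whichever walk wants it next has no well-defined priority rule, and it can commit a time edge to one walk that another walk needs, with no exchange argument showing this is safe across distinct walks. The paper closes this hole by additionally enumerating, for each configuration of turning points, all \emph{valid} relative orderings of the resulting path segments (contributing another $|S|^{O(|S|^2)}$ factor), and only then running a prefix-foremost greedy assignment within a fixed ordering, where the standard ``replace by prefix-foremost'' exchange argument does apply. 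You would need to add this ordering enumeration, or prove that some canonical priority rule is always safe (which is not obvious), for your third step to be correct.
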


Before we prove \cref{thm:k-line}, we first investigate properties of solutions $\mathcal{S}$ to an instance of \tdw{} that minimize the sum of the lengths of its walks (our algorithm will produce such a solution).
We show that we can upper-bound the number of times a temporal walk in such a solution $\mathcal{S}$ changes its direction by a function of $|S|$. Furthermore, we show that the direction changes always occur in ``regions'' (whose size is upper-bounded by a function of $|S|$) ``around'' the sources and sinks in $S$. Intuitively speaking, this allows us to iterate over all possibilities in which direction, how often, and in which order the temporal walks connecting the source-sink pairs move through the regions around the source and sink vertices in $S$. Given such a possibility, we have enough information to check whether there exist temporally disjoint walks that realize this behavior.

For the remainder of this section, let $\mathcal{S}$ be a solution to an instance of \tdw{} that minimizes the sum of the lengths of its temporal walks. We first show that if a temporal walk in~$\mathcal{S}$ changes its direction, there has to be another temporal walk in~$\mathcal{S}$ that enforces this behavior as follows.

\begin{lemma}\label{claim:fptwalks:1}
Let $W$ be a temporal $(s,z)$-walk in $\mathcal{S}$ such that $(a,b,t),(b,a,t')$ with $t<t'$ are consecutive in~$W$. 
Then, there exists a temporal $(s',z')$-walk $W'$ in $\mathcal{S}$ with $(c,a,t'')$ or $(a,c,t'')$ 
in $W'$ where $t<t''<t'$.
\end{lemma}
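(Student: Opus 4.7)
My plan is to prove the lemma by contradiction using the minimality of $\mathcal{S}$. If no walk $W'$ with the claimed transition exists, I will surgically remove the two transitions $(a,b,t)$ and $(b,a,t')$ from $W$ to obtain a shorter walk $W^*$, contradicting the choice of $\mathcal{S}$. Note that $W$ itself has no transition at $a$ with time in $(t,t')$ because these two transitions are consecutive in $W$, so the purported $W'$ is necessarily distinct from $W$.

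The first key step is to upgrade the contrapositive assumption into the stronger claim that no walk $W' \in \mathcal{S} \setminus \{W\}$ occupies $a$ at any time in the open interval $(t, t')$. I would argue by case analysis on the maximal interval $[\tau_1, \tau_2]$ during which such a $W'$ occupies $a$ and which contains a hypothetical intermediate time $t^\star \in (t, t')$. If $\tau_1 > t$, then $W'$ either starts at $a$ (contributing a transition $(a,c,\tau_1)$) or enters $a$ via a transition $(c,a,\tau_1)$, and in either case $\tau_1 \le t^\star < t'$ puts this transition inside $(t, t')$, contradicting the assumption. If $\tau_1 \le t$, then $W'$ occupies $a$ at time $t$; but $W$ also occupies $a$ at time $t$ via the transition $(a,b,t)$, so $W$ and $W'$ temporally intersect, contradicting the fact that $\mathcal{S}$ is a valid solution.

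The second key step is to construct $W^*$ by deleting the transitions $(a,b,t)$ and $(b,a,t')$ from $W$ and merging the two adjacent visits to $a$ into a single visit. Validity as a temporal $(s,z)$-walk is immediate: the transition in $W$ immediately preceding $(a,b,t)$ (if any) has time strictly less than $t$, and the one immediately following $(b,a,t')$ (if any) has time strictly greater than $t'$, so after deletion the sequence of transition times remains strictly increasing. For temporal disjointness, observe that compared to $W$, the walk $W^*$ no longer occupies $b$ during $[t,t']$ (which can only remove conflicts) and now occupies $a$ over the merged interval $[t_{\mathrm{prev}}, t_{\mathrm{next}}]$ instead of $[t_{\mathrm{prev}}, t] \cup [t', t_{\mathrm{next}}]$, where $t_{\mathrm{prev}}, t_{\mathrm{next}}$ are the times of the surrounding transitions. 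The newly occupied times lie inside $(t, t')$, which by the claim above are free from every other walk in $\mathcal{S}$. Hence $(\mathcal{S} \setminus \{W\}) \cup \{W^*\}$ is a valid solution of total walk length smaller by two, contradicting the minimality of $\mathcal{S}$.

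The main obstacle I anticipate is the case analysis in the first step, and in particular a uniform treatment of boundary phenomena: walks $W'$ that happen to enter or leave $a$ exactly at time $t$ or $t'$, and walks for which $a$ is the source or the sink (so that the occupation at that endpoint is only a single time point). These are handled cleanly by the observation that $W$ itself occupies $a$ at both times $t$ and $t'$, which, together with disjointness of $\mathcal{S}$, already forbids any other walk from being at $a$ at those times, so only the strictly interior times in $(t,t')$ need to be checked.
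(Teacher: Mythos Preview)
Your proposal is correct and follows essentially the same approach as the paper: both remove the two back-and-forth transitions to obtain a strictly shorter $(s,z)$-walk and invoke the minimality of $\mathcal{S}$ to locate a conflicting walk $W'$ at vertex $a$ during the interval $(t,t')$. Your first step, upgrading ``no transition at $a$ in $(t,t')$'' to ``no occupation of $a$ in $(t,t')$'' via the disjointness of $\mathcal{S}$ at times $t$ and $t'$, makes explicit a point the paper's proof handles somewhat implicitly, but the overall argument is the same.
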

\begin{proof}
Let $(a',a,t_{-1})$ be the transition in $W$ before $(a,b,t)$ and let $(a,a'',t'_{+1})$ be the transition in $W$ after $(b,a,t')$. Now consider $\hat{W}$ that is obtained from $W$ by removing the transitions $(a,b,t),(b,a,t')$. Clearly, $\hat{W}$ is a temporal $(s,z)$-walk and its length is smaller than the length of $W$.  Since $\hat{W}\notin \mathcal{S}$, there must be a $W'\in \mathcal{S}$ that temporally intersects with $\hat{W}$ but not with $W$. Comparing $W$ with $\hat{W}$, the only vertex that is occupied by $\hat{W}$ for a longer time than by $W$ is vertex $a$. The temporal walk $W$ occupies $a$ from time $t_{-1}$ to time $t$ and then again from time $t'$ to time $t'_{+1}$ (with $t_{-1}<t<t'<t'_{+1}$), whereas the temporal walk $\hat{W}$ occupies vertex $a$ from time $t_{-1}$ to time $t'_{+1}$. Hence, the time period where $a$ is occupied by $\hat{W}$ but not by $W$ is $(t,t')$. It follows that $W'$ has to occupy $a$ at least once at some time $t''$ with $t<t''<t'$. Assume that $\hat{W}$ arrives at $a$ at time $t''$, then $\hat{W}$ has to contain a transition $(c,a,t'')$ for some vertex $c$. If $\hat{W}$ never arrives at $a$, then it must start at $a$ and contain a transition $(a,c,t'')$ for some $c$.
\end{proof}

Having \cref{claim:fptwalks:1}, we can inductively show if a temporal walk $W_0$ in $\mathcal{S}$ changes direction, then there is a sequence of temporal walks in $\mathcal{S}$ that change direction right before $W_0$ followed by a temporal walk in $\mathcal{S}$ that is either starting at its source or arriving at its sink.

\begin{lemma}\label{claim:fptwalks:2}
Let $W_0$ be a temporal $(s_0,z_0)$-walk in $\mathcal{S}$ such that $(a_0,b_0,t_0),(b_0,a_0,t_0')$ with $t_0<t'_0$ are consecutive in $W_0$. 
Then, there exist temporal $(s_1,z_1)$-walk $W_1$, \ldots, temporal $(s_r,z_r)$-walk $W_r$ in $\mathcal{S}$ and $a_1,b_1,t_1,t_1',\ldots,a_r,b_r,t_r,t_r'$ so that:
\begin{itemize}
    \item For every $1\le i<r$, $(a_i,b_i,t_i),(b_i,a_i,t_i')$ are consecutive in $W_i$, 
    $t_{i-1}<t_i\le t_i'<t_{i-1}'$ and $a_{i-1}=b_i$.
    \item either $(a_r,b_r,t_r)$ or $(b_r,a_r,t_r)$ in $W_r$, $t_{r-1}<t_r<t_{r-1}'$, $a_{r-1}=b_r$ and either $b_r=z_r$ or $b_r=s_r$.
\end{itemize}
\end{lemma}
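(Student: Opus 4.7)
The plan is to prove the lemma by iterating \cref{claim:fptwalks:1}, building the walks $W_1, W_2, \ldots$ one at a time until we reach a walk that starts or ends at the relevant vertex. First I would apply \cref{claim:fptwalks:1} to $W_0$ with $a = a_0$, $b = b_0$, $t = t_0$, $t' = t_0'$, obtaining a walk $W \in \mathcal{S}$ with a transition of the form $(c, a_0, t_1)$ or $(a_0, c, t_1)$ where $t_0 < t_1 < t_0'$ and $c$ is a neighbor of $a_0$ in the underlying graph. I would then analyze $W$'s local behavior at $a_0$: either (a) the witnessing transition is the first or last transition of $W$, in which case $a_0$ is the source or sink of $W$ (so set $r=1$, $W_r = W$, $b_r = a_0$, and terminate), or (b) the transition is internal to $W$, in which case I claim $W$ necessarily performs a U-turn at $a_0$. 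In case (b), I set $a_1 = c$, $b_1 = a_0$ and take $t_1, t_1'$ to be the pair of consecutive transitions of $W$ at $a_0$; then I repeat the argument with $W_1$ in place of $W_0$.

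The main obstacle is establishing case (b) -- that is, ruling out the possibility that $W$ merely passes through $a_0$ from one neighbor to the other rather than turning around. Here I plan to use the line structure decisively: $a_0$ has at most two neighbors, one of which is $b_0$. A temporal-disjointness argument at $b_0$ (where $W_0$ resides throughout $[t_0, t_0']$) first forces $c \neq b_0$, so $c$ must be $a_0$'s other neighbor. If $W$ were to pass through $a_0$ in case (b), then after arriving from (or before departing to) $c$, its continuing transition at $a_0$ would have to head to $b_0$; but then $W$ would occupy $a_0$ over an interval extending past $t_0'$, colliding with $W_0$'s occupancy of $a_0$ that resumes at time $t_0'$. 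Hence $W$ must reverse back to $c$, yielding the desired U-turn $(a_1, b_1, t_1), (b_1, a_1, t_1')$ with the required inequalities $t_0 < t_1 \le t_1' < t_0'$ following directly from the fact that $W$'s occupancy of $a_0$ sits inside $(t_0, t_0')$. A symmetric argument handles the subcase in which \cref{claim:fptwalks:1} hands us a ``leaving'' transition $(a_0, c, t_1)$ rather than an ``arriving'' one.

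For the inductive step, the same analysis applies verbatim with $(a_0, b_0, t_0, t_0')$ replaced throughout by $(a_{i-1}, b_{i-1}, t_{i-1}, t_{i-1}')$, since all arguments above were local to the U-turn interval and used only temporal disjointness against $W_{i-1}$ at $a_{i-1}$ and $b_{i-1}$. Finally, the recursion is guaranteed to terminate: each step strictly shrinks the witnessing time window via $t_{i-1} < t_i \le t_i' < t_{i-1}'$, so the nonnegative integer $t_{i-1}' - t_{i-1}$ decreases by at least $2$ per iteration. Therefore, after finitely many steps, the window becomes too small to host another nested U-turn, and the only remaining possibility is the terminating case where the witnessing walk $W_r$ starts or ends at $b_r = a_{r-1}$, yielding the conclusion.
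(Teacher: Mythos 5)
Your proposal is correct and follows essentially the same route as the paper: iterate \cref{claim:fptwalks:1}, argue via temporal disjointness (together with the degree-two line structure) that any non-terminating witness must itself perform a U-turn inside a strictly smaller time window, and conclude by descent/induction on the length $t_0'-t_0$ of that window. The only cosmetic slip is in your terminating case, which should be triggered by $a_{i-1}$ being the \emph{source or sink} of the witnessing walk rather than by the witnessing transition being its first or last transition (a first transition of the form $(c,a_{i-1},t)$ makes $c$, not $a_{i-1}$, the source, and then the walk must still U-turn and the recursion should continue); with that adjustment your argument matches the paper's proof.
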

\begin{proof}
Consider $W_0$. We prove the statement by induction on $t_0'-t_0$, that is, the time difference between when $W_0$ leaves vertex $a_0$ and when it comes back. 
 By \cref{claim:fptwalks:1} there exists an $(s',z')$-walk $W'$ in $\mathcal{S}$ with $(c,a_0,t'')$ or $(a_0,c,t'')$ 
in $W'$ where $t_0<t''<t_0'$. If $a_0=s'$ or $a_0=z'$, then we have $r=1$ and $W'=W_r$ and the statement holds. 

Otherwise, we have the following. Assume that $s'\neq a_0\neq z'$. Rename $c=a_1$, $a_0=b_1$, $s'=s_1$, $z'=z_1$, and $W'=W_1$. By \cref{claim:fptwalks:1} we have that $(a_1,b_1,t_1)$ or $(b_1,a_1,t_1)$ with $t_0<t_1<t_0'$ are transitions of $W_1$. However, since $s_1\neq b_1=a_0\neq z_1$, we have that both transitions must be consecutive in $W_1$, where $(a_1,b_1,t_1)$ is the first and $(b_1,a_1,t_1')$ is the second for some $t_0'>t_1'>t_1>t_0$, otherwise $W_1$ would temporally intersect with $W_0$.

Now $W_1$ is an $(s_1,z_1)$-walk in $\mathcal{S}$ such that $(a_1,b_1,t_1),(b_1,a_1,t_1')$ are consecutive in $W_1$, with $t_1'-t_1<t_0'-t_0$. Hence, by induction we know that there exist $(s_2,z_2)$-walk $W_2$, \ldots, $(s_r,z_r)$-walk $W_r$ in $\mathcal{S}$ and $a_2,b_2,t_2,t_2',\ldots,a_r,b_r,t_r,t_r'$ so that:
\begin{itemize}
    \item For every $2\le i<r$, $(a_i,b_i,t_i),(b_i,a_i,t_i')$ are consecutive in $W_i$ and $b_i$ is closer to $u_i$ than $a_i$, $t_{i-1}<t_i\le t_i'<t_{i-1}'$ and $a_{i-1}=b_i$.
    \item either $(a_r,b_r,t_r)$ or $(b_r,a_r,t_r)$ in $W_r$, $t_{r-1}<t_r<t_{r-1}'$, $a_{r-1}=b_r$ and either $b_r=z_r$ or $b_r=s_r$.
\end{itemize}
Hence, the lemma follows.
\end{proof}

From \cref{claim:fptwalks:2} we can draw two important corollaries that will help us to design the algorithm and prove its correctness. We give an illustration in \cref{fig:lemma}. The first corollary is that direction changes of temporal walks in $\mathcal{S}$ occur not too far away from source or sink vertices.

\begin{figure}[t]
    \centering
    \begin{tikzpicture}[line width=1pt, scale=1.2]
		\path
			(0,2) node[vert,label=below:$a_r$](b0) {}
			(2,2) node[vert, diamond, fill=black,label=below:$b_r\eq a_{r-1}$](b1) {}
			(4,2) node[vert,label=below:$b_{r-1}\eq a_{r-2}$](b2) {}
			(6,2) node[vert,label=below:$b_{r-2}\eq a_{r-3}$](b3) {};
   \path
			(10,2) node[vert,label=below:$b_1\eq a_0$](b5) {}
			(12,2) node[vert,label=below:$b_0$](b6) {};
		\draw[edge] (b0) edge node [midway, below] {} (b1);
		\draw[edge] (b1) edge node [midway, below] {} (b2);
		\draw[edge] (b2) edge node [midway, below] {} (b3);
		\draw[edge, dashed] (b3) edge node [midway, below] {} (b5);
		\draw[edge] (b5) edge node [midway, below] {} (b6);

		\draw[blue] (0,3) -- (2,3);
    
        \draw[red,dashed] (0,2.8) -- (2,2.8);
		\draw[red] (2,2.9) -- (2,2.8);
		\draw[red] (2,2.9) -- (4,2.9);
		\draw[red] (4,3.1) -- (4,2.9);
		\draw[red] (4,3.1) -- (2,3.1);
		\draw[red] (2,3.2) -- (2,3.1);
		\draw[red,dashed] (2,3.2) -- (0,3.2);

          \draw[olive,dashed] (2,2.6) -- (4,2.6);
		\draw[olive] (4,2.8) -- (4,2.6);
		\draw[olive] (4,2.8) -- (6,2.8);
		\draw[olive] (6,3.2) -- (6,2.8);
		\draw[olive] (6,3.2) -- (4,3.2);
		\draw[olive] (4,3.4) -- (4,3.2);
		\draw[olive,dashed] (4,3.4) -- (2,3.4);

            \draw[violet,dashed] (4,2.4) -- (6,2.4);
            \draw[violet,dashed] (6,2.6) -- (6,2.4);
            \draw[violet,dashed] (6.3,2.6) -- (6,2.6);
		\draw[violet,dashed] (6.3,3.4) -- (6,3.4);
		\draw[violet,dashed] (6,3.6) -- (6,3.4);
		\draw[violet,dashed] (6,3.6) -- (4,3.6);

		\draw[brown,dashed] (7.7,2.4) -- (8,2.4);
		\draw[brown,dashed] (8,2.7) -- (8,2.4);
		\draw[brown] (8,2.7) -- (10,2.7);
		\draw[brown] (10,3.3) -- (10,2.7);
		\draw[brown] (10,3.3) -- (8,3.3);
		\draw[brown,dashed] (8,3.6) -- (8,3.3);
		\draw[brown,dashed] (8,3.6) -- (7.7,3.6);

            \draw[orange,dashed] (8,2.2) -- (10,2.2);
		\draw[orange] (10,2.5) -- (10,2.2);
		\draw[orange] (10,2.5) -- (12,2.5);
		\draw[orange] (12,3.5) -- (12,2.5);
		\draw[orange] (12,3.5) -- (10,3.5);
		\draw[orange] (10,3.8) -- (10,3.5);
		\draw[orange,dashed] (10,3.8) -- (8,3.8);
\end{tikzpicture}
    \caption{Illustration for \cref{claim:fptwalks:2} and \cref{claim:fptwalks:3}. The colored paths represent temporal $(s_r,z_r)$- to $(s_0,z_0)$-walks. The horizontal position of the colored edges represents the time labels. The diamond shaped vertex $b_r=a_{r-1}$ equals $s_r$ or $z_r$ and hence is a vertex of interest. Since $r\le |S|$, the distance between $b_r$ and $b_0$ is also at most $|S|$.}
    \label{fig:lemma}
\end{figure}
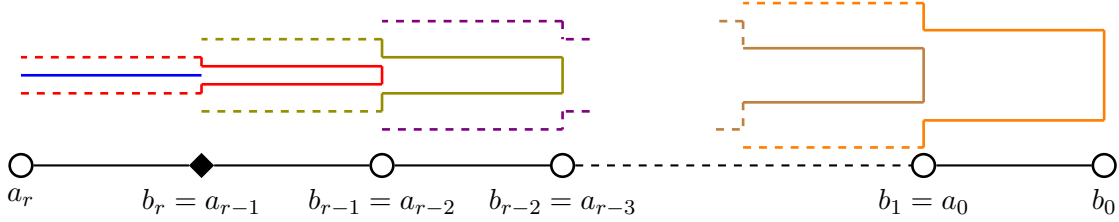

\begin{corollary}\label{claim:fptwalks:3}
Let $W$ be an $(s,z)$-walk in $\mathcal{S}$ such that $(a,b,t),(b,a,t')$ are consecutive in $W$ with $t<t'$. 
Then, there exists an $(s',z')$-walk $W'\neq W$ in $\mathcal{S}$ with at least one among $(s',a',t'')$ or $(a',z',t'')$ in $W'$ for some $a'$, where $t<t''<t'$ and the distance between $a$ and $a'$ in the underlying path of the temporal graph is at most $|S|$.
\end{corollary}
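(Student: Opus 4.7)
The plan is to apply \cref{claim:fptwalks:2} directly to $W$ with the given pair $(a,b,t),(b,a,t')$, obtaining walks $W_0 = W, W_1, \ldots, W_r$ in $\mathcal{S}$ together with vertices $a_i, b_i$ and times $t_i, t_i'$. I will then take $W' := W_r$, $a' := a_r$, $t'' := t_r$, and let $(s',z')$ be the source-sink pair of $W_r$. By the terminal clause of \cref{claim:fptwalks:2}, $b_r \in \{s_r, z_r\} = \{s', z'\}$ and either $(a_r, b_r, t_r)$ or $(b_r, a_r, t_r)$ is a transition of $W_r$; either sub-case yields a transition of the required form $(s', a', t'')$ or $(a', z', t'')$ in $W'$. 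The strict inequality $t < t'' < t'$ is immediate from the chain $t = t_0 < t_1 < \cdots < t_r < t_{r-1}' < \cdots < t_0' = t'$ provided by \cref{claim:fptwalks:2}.

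The substantive work is to verify that $W' \neq W$ and that the distance between $a$ and $a'$ in the underlying path is at most $|S|$. For the distance bound, I first plan to show that $a_0, a_1, \ldots, a_r$ forms a simple subpath of the underlying line, so that the distance between $a_0$ and $a_r$ equals $r$. Adjacency of consecutive $a_i$'s is immediate since $W_i$ traverses the edge $\{a_i, a_{i-1}\}$. For non-backtracking, I note that $W_{i-1}$ occupies $a_{i-2}$ throughout $[t_{i-1}, t_{i-1}']$; if $a_i$ were equal to $a_{i-2}$, then $W_i$ would also have to be at $a_{i-2}$ during a sub-interval of $(t_{i-1}, t_{i-1}')$, contradicting the temporal disjointness of $W_i$ and $W_{i-1}$ that is guaranteed by the construction via \cref{claim:fptwalks:1}.

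To bound $r$ by $|S|$, I will select the sequence with $r$ minimal and argue that $W_0, W_1, \ldots, W_r$ are pairwise distinct walks in $\mathcal{S}$, which gives $r + 1 \leq |\mathcal{S}| = |S|$ and hence $r \leq |S|$. Distinctness will follow from a splicing argument: a repetition $W_i = W_j$ with $i < j$ would allow concatenating the prefix $W_0, \ldots, W_i$ with the suffix $W_{j+1}, \ldots, W_r$ (re-indexing accordingly) to obtain a shorter valid sequence, contradicting the minimality of $r$. The same minimality consideration handles $W' \neq W$: if $W_r$ coincided with $W$, then $W$ itself would have a source or sink at $a_{r-1}$, and the sequence could have been terminated earlier, again contradicting minimality.

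The main obstacle is the splicing argument for distinctness. Matching the inductive conditions of \cref{claim:fptwalks:2} at the splice point requires ensuring that the reversal of the repeated walk $W_i = W_j$ (which simultaneously plays the role of $W_i$ at step $i$ and of $W_j$ at step $j$) can be consistently identified with the endpoint of the shortened sequence, so that the vertex $a_i$ and the time window $(t_i, t_i')$ line up with the suffix starting from $W_{j+1}$; handling these compatibility requirements carefully is where the technical work lies.
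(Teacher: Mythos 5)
Your overall route is the one the paper intends: \cref{claim:fptwalks:3} is stated without its own proof, as a direct consequence of \cref{claim:fptwalks:2}, and the implicit argument is exactly yours --- take the chain $W_0=W,W_1,\dots,W_r$, set $W'=W_r$, $a'=a_r$, $t''=t_r$, read off $t<t''<t'$ from the nested time windows, bound the distance by $r$ via the adjacencies $a_{i-1}=b_i$, and bound $r$ by $|S|$ via pairwise distinctness of the walks. (Your simple-subpath digression is superfluous: $a_0,a_1,\dots,a_r$ is a walk of length $r$ in the underlying graph, so $\mathrm{dist}(a_0,a_r)\le r$ whether or not it backtracks.)

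The one genuine gap is the step you yourself flag as the main obstacle: establishing distinctness (and $W'\neq W$) via minimality of $r$ and splicing. That argument does not go through as described. If $W_i=W_j$ with $i<j$, concatenating $W_0,\dots,W_i$ with $W_{j+1},\dots,W_r$ requires the linking condition $a_i=b_{j+1}\;(=a_j)$ and requires the time window of $W_{j+1}$ to nest inside $(t_i,t_i')$ at the correct vertex; the mere equality of the walks $W_i$ and $W_j$ forces neither, since nothing says their two direction changes occur at the same place. Fortunately, no splicing is needed, because a repetition is outright impossible: for $i<r$ the transitions $(a_i,b_i,t_i)$ and $(b_i,a_i,t_i')$ are \emph{consecutive} in $W_i$, so $W_i$ has no transition at any time strictly between $t_i$ and $t_i'$; yet for every $j$ with $i<j\le r$ the nesting $t_i<t_{i+1}<\dots<t_j$ and $t_j\le t_j'<t_{j-1}'<\dots<t_i'$ gives $t_i<t_j<t_i'$, and $W_j$ does have a transition at time $t_j$. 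Hence $W_j\neq W_i$ for all $0\le i<j\le r$. Applying this with $i=0$ (where $t_0=t$ and $t_0'=t'$ are consecutive by hypothesis) gives $W_r\neq W$ directly, with no appeal to early termination of the sequence. This yields $r+1\le|\mathcal{S}|=|S|$, hence $\mathrm{dist}(a,a')\le r\le|S|-1\le|S|$, and the proof closes.
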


The second corollary is that the temporal walks in $\mathcal{S}$ do not change their direction too often.

\begin{corollary}\label{claim:fptwalks:4}
Let $W$ be an $(s,z)$-walk in $\mathcal{S}$. Then at most $2|S|$ pairs of triples of the form $(a,b,t),(b,a,t')$ with $t<t'$ are consecutive in $W$. 
\end{corollary}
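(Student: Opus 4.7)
The plan is to set up an injection from the direction changes of $W$ into the collection of ``source-or-sink slots'' across the walks of $\mathcal{S}\setminus\{W\}$; since there are at most $2(|S|-1)\le 2|S|$ such slots, this yields the claimed bound. The injection will be forced by combining \cref{claim:fptwalks:3} with a minimality consequence that I would derive first.

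The first step is a \emph{minimality lemma}: every walk $P\in\mathcal{S}$ contains at most one transition leaving its source and at most one transition entering its sink. The argument is a truncation. If $P$ re-enters its source $s$ via an intermediate transition $(v_{j-1},s,t_j)$ with $j>1$, I would replace $P$ by the suffix $P'$ starting at the next departure $(s,v_{j+1},t_{j+1})$. Then $P'$ is a strictly shorter $(s,z)$-walk, and a term-by-term check using the definition of \emph{occupies} from \cref{sec:prelims} shows that every vertex is occupied by $P'$ during a subset of the time interval during which it was occupied by $P$; here the key point is that the source $s$ is occupied by $P'$ only at the single time $\{t_{j+1}\}\subseteq[t_j,t_{j+1}]$, where $[t_j,t_{j+1}]$ is the interval during which $P$ revisits $s$. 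Hence $P'$ preserves temporal disjointness with every other walk, contradicting the minimality of $\sum_{Q\in\mathcal{S}}|Q|$. A symmetric prefix-truncation handles the sink case.

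For each direction change $(a,b,t),(b,a,t')$ of $W$, I would then apply \cref{claim:fptwalks:3} to obtain a walk $W'\neq W$ in $\mathcal{S}$ containing either a transition $(s',a',t'')$ leaving its source $s'$ or a transition $(a',z',t'')$ entering its sink $z'$, with $t<t''<t'$. By the minimality lemma, $W'$ has at most one transition out of $s'$ and at most one transition into $z'$, and in each case the time of this transition is uniquely determined by $W'$. I then charge the direction change to the pair $(W',\text{source})$ in the first case and to $(W',\text{sink})$ in the second; in either case the witness time $t''$ is a fixed quantity depending only on the slot.

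To conclude, distinct direction changes of $W$ have pairwise disjoint open time intervals $(t,t')$, so two direction changes charged to the same slot would force the slot's unique witness time to lie in two disjoint intervals, which is impossible. Hence the charging is injective, and since there are at most $2(|S|-1)\le 2|S|$ slots across $\mathcal{S}\setminus\{W\}$, the bound follows. The main obstacle is the truncation argument in the minimality lemma: one must check carefully that the modification strictly \emph{decreases} (rather than increases) the occupation of the source and of every other vertex, which depends delicately on the convention that a walk's source is occupied only at the single time of its first transition.
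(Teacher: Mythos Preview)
Your proof is correct. The paper states \cref{claim:fptwalks:4} as a corollary of \cref{claim:fptwalks:2} (and \cref{claim:fptwalks:3}) without giving any argument, so there is no explicit proof to compare against; your derivation via \cref{claim:fptwalks:3} together with the truncation (``minimality'') lemma is exactly the kind of argument the paper leaves implicit.

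One remark worth making is that your minimality lemma is doing slightly more work than you advertise. The passage from \cref{claim:fptwalks:2} to \cref{claim:fptwalks:3} already needs (half of) it: \cref{claim:fptwalks:2} only guarantees a transition of $W_r$ \emph{incident to} $s_r$ or $z_r$, whereas \cref{claim:fptwalks:3} asserts a transition \emph{leaving} the source or \emph{entering} the sink. Ruling out ``entering the source'' or ``leaving the sink'' is precisely your observation that a sum-minimal walk never revisits its source or pre-visits its sink. So your lemma is not an extra detour but the missing glue between \cref{claim:fptwalks:2} and both corollaries; the paper simply suppresses it. The injection step (disjointness of the open intervals $(t,t')$ for distinct direction changes of $W$, combined with uniqueness of the source-departure and sink-arrival times) is clean and correct, and the count $2(|S|-1)\le 2|S|$ matches the paper's bound.
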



We now have all the pieces we need to prove \cref{thm:k-line}.

\begin{proof}[Proof of \cref{thm:k-line}]
Let $(\TGcompact, S)$ be an instance of \TempDisjointWalks\ such that the underlying graph of $\TG$ is a path. Let $G$ denote the underlying graph of $\TG$. Recall that $\hat{S}$ denotes the set of all vertices in $V$ that appear as sources or sinks in $S$. We know by \cref{claim:fptwalks:3} that we may assume w.l.o.g.\ that all temporal walks in a solution to $(\TG, S)$ change direction only at vertices that are of distance (in $G$) at most $|S|$ from some vertex $v\in \hat{S}$. Let $D=\{v\in V\mid \exists v'\in \hat{S} \text{ such that } \text{dist}_G(v,v')\le |S|\}$ denote the set of all vertices in $V$ where some temporal walk in the solution potentially changes direction. Observe that $|D|\le 4|S|^2$. By \cref{claim:fptwalks:4} we know that w.l.o.g.\ all temporal walks in a solution to $(\TG,S)$ change direction at most $2|S|$ times. It follows that for each source-sink pair in $(s,z)\in S$, there are $|S|^{O(|S|)}$ possibilities we need to consider for where the temporal $(s,z)$-walk in the solution changes directions. Considering all source-sink pairs, we have $|S|^{O(|S|^2)}$ possible configurations the we need to consider for where the temporal walks in the solution change directions.

Consider one specific configuration. We now analyse how many different relative orderings of the temporal walks we need to consider. To do this, we treat every temporal walk as at most $2|S|$ temporal path segments that form the walk, that is, at the endpoints of each path segment, the walk changes direction (or starts/ends). In total, this gives us $2|S|^2$ temporal path segments. Note that any two of these path segments $P,P'$ have to property that they either do not visit common vertices, or if they do, then for all common vertices we have that either $P$ occupies each of them before $P'$ or vice versa. Otherwise, $P$ and $P'$ would be temporally intersecting. It follows that there exist a total ordering of all path segments such that whenever two path segments visit common vertices, the ordering defines which of the two path segments occupies each of the common vertices first.
Overall, we have $|S|^{O(|S|^2)}$ possible orderings for the path segments.

However, some of these orderings might not yield pairwise temporally disjoint walks when we reconnect all path segments to form the respective temporal walks. Let $P_1$ and $P_2$ be two consecutive path segments of some temporal walk $W$ such that $P_1$ is the path segment right before $P_2$, then the ordering must obey two requirements.
\begin{enumerate}
    \item Path segment $P_1$ must occur before path segment $P_2$ in the ordering.
    \item Let vertex $v$ be the endpoint of $P_1$ and the starting point of $P_2$, implying that $W$ changes direction at vertex $v$. Then for each path segment $P'$ that contains vertex $v$ we have that $P'$ either must be before $P_1$ and $P_2$ in the ordering or $P'$ must be after $P_1$ and $P_2$ in the ordering.
\end{enumerate}
The first requirement must be met, since otherwise connecting $P_1$ and $P_2$ does not yield a temporal walk. To see why the second requirement must be met, let $W'$ be the temporal walk of which $P'$ is a path segment. If $W=W'$, then the first requirement is not met. If $W\neq W'$, then the two temporal walks would temporally intersect in vertex $v$.

We call an ordering of the path segments \emph{valid} if both the above requirements are met. Given an ordering, we can clearly check in polynomial time whether it is valid or not.

The algorithm now proceeds as follows:
\begin{enumerate}
    \item Iterate over all possible configurations for where the temporal walks in the solution change directions.
    \item For each configuration, iterate over all valid orderings of the path segments implicitly given by the configuration.
    \item For each configuration with a valid ordering, iterate over the path segments according to the ordering.

    For each path segment, compute a prefix-foremost temporal path $P$ (earliest possible arrival time at every vertex) from the starting point to the endpoint of the path segment. 
    
    If no such temporal path exists, discard the current combination of configuration with valid ordering.
    Otherwise, for each transition $(v,w,t)$ in $P$, remove all time edges incident with $v$ or $w$ that have a time label $t'\le t$. Continue with the next path segment. If there is no further path segment, output YES.
    \item If all combinations of configuration with valid ordering were discarded, output NO.
\end{enumerate}
Since we have $|S|^{O(|S|^2)}$ possible configurations and $|S|^{O(|S|^2)}$ possible valid orderings, the running time of the algorithm is in $|S|^{O(|S|^4)}\cdot |\TG|^{O(1)}$. 
Note that with polynomial overhead, the algorithm can also output the solution.

By the arguments made before, it is easy to check that if the algorithm outputs YES, then we face a yes-instance. 

For the other direction, assume that we face a yes-instance. Then there is a solution $\mathcal{S}$ that minimizes the sum of the lengths of its temporal walks. \cref{claim:fptwalks:3,claim:fptwalks:4} imply that each temporal walk in the solution changes direction at most $2|S|$ times at vertices that are of distance at most $|S|$ to a vertex that appears as a source or sink in $S$. Hence, we can  segment every temporal walk in the solution into at most $2|S|$ temporal paths that have endpoints in the set $D$ (defined at the beginning of the proof). The path segments form a partially ordered set, if we define a path segment $P$ to be smaller than $P'$ if the two path segments have common vertices and each of the common vertices is occupied by $P$ earlier than by $P'$. Note that for all pairs of path segment $P,P'$ that have common vertices, we have that $P$ is either smaller than $P'$ or vice versa, otherwise $P$ and $P'$ would be temporally intersecting. Hence, we have that any linearization of the partial ordering is a valid ordering of the path segments.

If follows that there exists a combination of configuration with valid ordering that agrees with the solution. Lastly, note that we can assume w.l.o.g.\ that the temporal path segments in the solution are prefix-foremost (among the ones that do not temporally intersect), since if they are not, we can simply replace a temporal path segment with a prefix-foremost one. We can conclude that the algorithm outputs YES. 
\end{proof}

\section{Conclusion}\label{sec:conclusion}
Building upon the work of \citet{KlobasMMNZ23}, we presented an almost complete picture of the parameterized computational complexity of \TempDisjointPaths\ and \TempDisjointWalks\ when structural graph parameters of the underlying graph combined with the number of source-sink pairs are considered. For both problem variants, we showed W[1]-hardness for the number of vertices as the parameter even for instances where the underlying graph is a star, indicating that solely restricting the structure of the underlying graph is insufficient for obtaining tractability. Consequently, we considered the number of source-sink pairs as an additional parameter. For \TempDisjointPaths\ we showed that even combining the number of source-sink pairs with the vertex cover number of the underlying graph is presumably insufficient to obtain fixed-parameter tractability. However, we showed that the problem is in FPT when parameterized by the feedback edge number combined with the number of source-sink pairs. For \TempDisjointPaths\ we showed that the problem remains W[1]-hard when parameterized by the number of source-sink pairs even if the underlying graph is a star. However, the problem is in FPT for the same parameter if the underlying graph is a path. Our results revealed surprising differences between the path- and the walk-variant of our problem and resolved open questions by \citet{KlobasMMNZ23}.

We leave several directions for future research. Our hardness results rule out many structural graph parameters as further options for obtaining tractability. However, there are some candidates that are unrelated to the vertex cover number and the feedback edge set, and are also large on star graphs, leading to the following question.
\begin{itemize}
    \item Is \AllProbs\ in FPT or W[1]-hard when parameterized by the combination of the number of source-sink pairs and the \emph{cutwidth} or \emph{bandwidth} of the underlying graph?
\end{itemize}

In MAPF, one is often interested in finding solutions that minimize the sum or maximum of steps or actions that each agent needs to take to arrive at their destination~\cite{Stern19}. In our setting, the number of transitions of a temporal path or walk corresponds to the number of steps and the difference between the time label of the last and first transition (also called \emph{duration}) corresponds to the number of actions (where waiting for one time step is considered an action). We can observe that the number of transitions of temporal paths or walks is constant in the reductions of \cref{thm:n-star} and \cref{thm:w1hardnessVCN}, indicating that finding solutions with few transitions is still hard. We believe that the duration might be a more promising parameter, since it is large in the reductions of \cref{thm:w1hardnessVCN} and \cref{thm:k-star}, leading to the following question.
\begin{itemize}
    \item Is \AllProbs\ in FPT or W[1]-hard when parameterized by the combination of the number of source-sink pairs and the maximum duration of any temporal (path/walk) in the solution?
\end{itemize}

Finally, we leave open whether our results also hold for the non-strict case, where temporal (paths/walks) use transitions with non-decreasing (instead of increasing) time labels. We conjecture that all our results can be adapted for this case.

\bibliographystyle{abbrvnat}
\bibliography{bibliography}	

\end{document}